\documentclass[11pt]{article}

\usepackage[american]{babel}

\usepackage{amsmath, amsthm, amssymb}
\usepackage{amsfonts}
\usepackage{mathtools}
\usepackage{thmtools}
\usepackage[authoryear,round,comma]{natbib}
\usepackage{enumitem}

\usepackage{microtype}

\usepackage[dvipsnames]{xcolor}
\usepackage{hyperref}
\hypersetup{
  colorlinks=true,
  linkcolor=blue!50!black,
  citecolor=blue!50!black,
  urlcolor=blue!50!black
}

\usepackage{geometry}
\usepackage{booktabs}
\usepackage{needspace}
\usepackage{tikz}
\usetikzlibrary{calc,arrows.meta,decorations.pathreplacing}
\usepackage{etoolbox}
\newtoggle{journal}

\theoremstyle{plain}
\newtheorem{theorem}{Theorem}[section]
\newtheorem{lemma}[theorem]{Lemma}
\newtheorem{proposition}[theorem]{Proposition}
\newtheorem{corollary}[theorem]{Corollary}
\theoremstyle{definition}
\newtheorem{definition}[theorem]{Definition}

\newtheorem{remark}[theorem]{Remark}

\newcommand{\R}{\mathbb{R}}
\newcommand{\Rpp}{\mathbb{R}_{++}}
\newcommand{\Rp}{\mathbb{R}_{+}}
\newcommand{\Bid}{\mathcal{B}}
\newcommand{\St}{\mathcal{S}}
\newcommand{\E}{\mathbb{E}}

\title{Axiomatic Market Making}

\date{December 27, 2025}

\iftoggle{journal}{%
\author{Frank M.\ V.\ Feys \\ \small Independent researcher, Antwerp, Belgium \\ \small \texttt{frank@fmvfeys.com}}
}{\author{Frank M.\ V.\ Feys}}

\begin{document}

\maketitle

\begin{abstract}
\noindent
This paper axiomatizes the bid-ask market maker's quoting rule.
A quoting rule maps the maker's state, namely inventory, belief, variance, trade intensity, and informed-trader fraction, to a bid-ask pair.
Eight natural axioms, together with six environmental assumptions on the maker's inventory cost, force a unique three-parameter family: the mid-quote is linear in inventory, and the spread decomposes additively into inventory and adverse-selection components.
Each of the three parameters is identified from a distinct moment of the observable quoting rule, with the three identifications mutually decoupled.
The eight axioms partition into a four-axiom indispensable core, one structural choice, and three modularity extensions.
Two structural corollaries follow:  
 the latent inventory cost function is recoverable from the limit order book, and a sharp phase transition separates a functioning regime from a frozen one.
A closing meta-theorem identifies four features invariant across all admissible structural primitives within the axiom system.
To our knowledge, this is the first forced-uniqueness axiomatization of the quoting rule.
\end{abstract}

\medskip
\noindent\textbf{JEL classification.}
C72 (Noncooperative Games);
D47 (Market Design);
D82 (Asymmetric and Private Information; Mechanism Design);
G12 (Asset Pricing; Trading Volume; Bond Interest Rates);
G14 (Information and Market Efficiency; Event  Studies; Insider Trading).

\medskip

\noindent\textbf{Keywords.}
Market microstructure;
market making;
bid-ask spread;
adverse selection;
Bertrand competition;
axiomatic characterization;
forced uniqueness;
Legendre-Fenchel duality;
phase transition.

%-------------------------------------------------------------------
\section{Introduction}
\label{sec:introduction}
%-------------------------------------------------------------------

\subsection{The Problem}

\iftoggle{journal}
{
Market making is, almost without exception, an optimization theory.
 \citet{ho1981optimal} derive the spread from a dealer maximizing expected utility over a finite horizon.
\citet{avellaneda2008high} extend this framework to high-frequency  settings with exponential utility and Poisson trade arrivals.
\citet{cartea2015high} refine the inventory-control program by adding running and terminal penalties on accumulated inventory,  with the optimal spread derived from a Hamilton-Jacobi-Bellman equation.
\citet{glosten1985bid} obtain the spread as a Bayesian equilibrium under asymmetric information.
\citet{kyle1985continuous} derives the equilibrium price impact in continuous-time trading with strategic informed traders.
}
{ 
Market making is, almost without exception, an optimization theory.
\citet{ho1981optimal} derive the spread from a dealer maximizing expected utility over a finite horizon.
\citet{avellaneda2008high} extend this framework to high-frequency settings with exponential utility and Poisson trade arrivals.
\citet{cartea2015high} refine the inventory-control program by adding running and terminal penalties on accumulated inventory, with the optimal spread derived from a Hamilton-Jacobi-Bellman equation.
\citet{glosten1985bid} obtain the spread as a Bayesian equilibrium under asymmetric information, and \citet{kyle1985continuous} characterizes equilibrium price impact in a continuous-time model with a single strategic insider trading against a competitive market maker.
The resulting spread formulas are as varied as the primitives that generate them.
}

 Each of these models begins with a primitive specification (a particular utility function, a distributional assumption, or a specific information structure) and then computes the implied quoting rule.
The strength of the optimization approach is its derivation of behavior from explicit primitives; the corresponding weakness is that the output is only as robust as those primitives, since a change in the   utility function, the distribution of beliefs, or the equilibrium concept produces a change in the quoting rule.
A risk-averse dealer with logarithmic utility and exponential trade arrivals produces one spread; a Bayesian specialist facing informed traders produces another;  the connection between them is mediated  by their respective optimization problems rather than  by any structural property of the  quoting rule itself.
The literature on market making thus accumulates as a catalog of models rather than as a structural theory of the quoting rule.

The present paper takes a different approach.
We do not specify a utility function, distribution, or information structure.
Instead, we impose natural axioms on the quoting rule directly and derive  its functional form by forced uniqueness:
 the conjunction of the axioms admits a unique closed-form solution within a finite-dimensional parameter family.
The axioms are minimal in two senses.
First, each captures a clear economic or mathematical desideratum,
 such as translation equivariance, inventory aversion, or the indifference of the maker at the trading margin.
Second, the axioms are irredundant; removing any one enlarges the solution set.
Under the  eight axioms and six environmental assumptions on the cost function,
 the space of admissible quoting rules collapses to a three-dimensional manifold parametrized by three interpretable scalars, each identified from a distinct moment of the observable rule.

\subsection{The Main Characterization}

Let $ s = (q, \mu, \sigma^2, \lambda, \pi)$ denote the maker's state.
Here  $q$ is her current inventory; $\mu$ her conditional belief about the asset value given public information; $\sigma^2$ the variance of that belief; $\lambda$ the trade arrival intensity; and $\pi$ the fraction of informed traders in the trade flow.
A quoting rule is given by  a function $Q\colon \St \to \Bid$, where $\St$ is the state space of tuples 
 $$ s = (q, \mu, \sigma^2,  \lambda, \pi) \in \R \times \R \times \Rpp \times \Rpp \times [0,1],$$ 
 and $\Bid = \{(b,a) \in \R^2 \mid b < a\}$ is the space of valid bid-ask pairs.

Our central result (Theorem~\ref{thm:full}) shows that, under eight axioms together with six environmental assumptions on the maker's cost function, each quoting rule takes the form
\[
   b = \mu - \gamma_0\sigma^2 q -  \sigma^2\!\left( \frac{\kappa_{\mathrm{inv}}}{\lambda} + \kappa_{\mathrm{adv}}\pi\right),
  \qquad
  a = \mu - \gamma_0\sigma^2 q + \sigma^2\!\left(\frac{\kappa_{\mathrm{inv}}}{\lambda} + \kappa_{\mathrm{adv}}\pi\right),
\]
for some $(\gamma_0, \kappa_{\mathrm{inv}}, \kappa_{\mathrm{adv}})  \in \Rpp^3$.
The three parameters  are uniquely identified by the quoting rule.
The first parameter $\gamma_0$ is the maker's inventory risk aversion; the second parameter $\kappa_{\mathrm{inv}}$ is the marginal cost per unit of held inventory per unit of price variance;   the third parameter $\kappa_{\mathrm{adv}}$ is the expected loss per informed trade per unit of price variance.
Each is an economic primitive of natural interest to an applied researcher.
The three parameters live on distinct structural axes: $\gamma_0$ on the maker's preference axis, $\kappa_{\mathrm{inv}}$ on her cost axis, $\kappa_{\mathrm{adv}}$ on the information axis  of the environment.
Preference, cost, and information are the three classical primitives of any microstructure model, and the axioms make the quoting rule depend on them through exactly one scalar each.

\subsection{The Structural Theorems}

The characterization is the entry point to a richer  structural theory.
Five deeper theorems follow.

\paragraph{Forced uniqueness (Theorem~\ref{thm:bijection}).}
The eight-axiom system is in canonical bijection  with the parameter space $\Rpp^3$.
Each of the  eight axioms contributes one functional-form constraint to the space of measurable quoting rules; collectively they identify a three-dimensional manifold.
No axiom is redundant,  as established by an irredundancy proposition that exhibits a measurable counterexample for each of the eight.

\paragraph{Structural separability (Theorem~\ref{thm:separability}).}
The three parameters and the state variables play isolated, nonoverlapping roles in the quoting rule.
The parameter $\gamma_0$ enters only the skew; $\kappa_{\mathrm{inv}}$ only the inventory spread; $\kappa_{\mathrm{adv}}$ only the adverse-selection spread.
The three economic  primitives (preference, cost, information) are mathematically decoupled from one another.
A direct  corollary is that each parameter is identified from a distinct moment of the observable quoting rule, with no need for joint estimation or auxiliary modeling.

\paragraph{Recovery of the cost function  (Theorem~\ref{thm:recovery}).}
The entire latent  inventory cost function $\kappa$~is observable in the limit order book depth profile.
By relaxing the linearity of the cost to   general convexity and applying the indifference axiom at each level of the book, one obtains $\kappa(Q)$ by integration across levels.
The cost function is not  a hidden primitive that must be calibrated separately.
In contrast to Ho-Stoll or Avellaneda-Stoikov, where the cost coefficient is a free parameter to be fit against data, the axiomatic theory predicts that the cost function itself is observable.

\paragraph{Legendre-Fenchel duality (Theorem~\ref{thm:duality}).}
Market making is, at its mathematical core, an instance of convex conjugate  duality applied to the inventory cost.
The forward map $q \mapsto p$ from inventory to marginal price is the derivative  of $\kappa$; the inverse $p \mapsto q$ is the derivative of the  conjugate $\kappa^*$.
The order book reveals  both $\kappa$ and $\kappa^*$.
This places market making within the broader mathematics of convex analysis, alongside the Lagrangian-Hamiltonian duality of classical mechanics, the energy-free-energy duality of thermodynamics, and the utility-indirect-utility duality of consumer theory.

\paragraph{Universal structure (Theorem~\ref{thm:universal}).}
A closing meta-theorem identifies the structural invariants.
Holding fixed the axiom system (A1)--(A8), with the (A3s) variant admitted in place of (A3), and varying the admissible structural primitives $(\kappa, h, \beta)$ across all choices satisfying the cost-structure conditions and the equivariance generator, four features of the quoting rule remain invariant: the inventory half-spread at level $q$ is determined by $\kappa'_+(|q|)$; the spread decomposes additively into inventory and adverse-selection components;  the skew does not depend on $\pi$; and each component is homogeneous of some degree in $\sigma^2$.
These are the universal content of the axiomatic approach within the (A1)--(A8) family, and the standard microstructure models (Ho-Stoll, Avellaneda-Stoikov, Cartea-Jaimungal, Glosten-Milgrom, Kyle) sit within this family as specific structural-primitive choices.

\subsection{The Layered Analysis}

The eight-axiom system admits a  natural three-layer partition, organized by the kind  of denial each axiom admits, and the theory's conclusions partition correspondingly along these layers.
Four core axioms ((A1), (A4), (A7), (A8)) are indispensable   in practice, being either tautological in content or required by competitive operation; one structural-choice axiom ((A3)) selects the $\sigma^2$-scaling exponent of the spread; three modularity axioms ((A2), (A5), (A6)) decouple the inventory channel of the quoting rule from the trading environment.
The strongest empirical content of the theory, namely the quantitative spread formula, the phase transition, and the Bertrand equivalence for the spread, rests on the four core axioms together with the structural choice, leaving the modularity layer unused.
The closed-form skew $m -  \mu = -\gamma_0\sigma^2 q$ and the canonical bijection with $\Rpp^3$ require,  in addition, the three modularity extensions.
Section~\ref{sec:layered-reading} develops this layered reading and   shows that the most robust empirical predictions of the theory survive any weakening of the modularity layer.

\subsection{Empirical Content and the Phase Transition}

The axiomatic characterization carries strong empirical content.
The closed-form quoting  rule~\eqref{eq:characterization} predicts both the mid-quote shift and the spread  as explicit functions of the state.
In particular, the three parameters are recoverable directly from observable quoting behavior by three distinct moments.
The Recovery Theorem extends this: the full inventory cost function (not just its parameter) is observable in the order book.
Beyond identification, a sharp phase transition between functioning and frozen markets follows from the axioms, with explicit thresholds in volatility, informed-flow fraction, and liquidity that the theory predicts and the data can test.
The layered structure of the axioms further gives the theory a diagnostic dimension: empirical violations of its predictions localize to specific axiom layers, with the form of the deviation indicating which axiom is responsible.

A sharp phase transition, Theorem~\ref{thm:phase}, follows from combining the spread formula with a disagreement bound between maker and counterparties.
The market is functioning (generates positive trade volume) if and only if
\[
  \frac{\kappa_{\mathrm{inv}}}{\lambda}  + \kappa_{\mathrm{adv}}\pi < \frac{\delta_{\max}}{2\sigma^2},
\]
where $\delta_{\max}$ denotes the largest spread the market can sustain, twice the maximal divergence between the maker's mid-belief and  that of her counterparties (see Section~\ref{sec:setup}).
The functioning region partitions into an inventory-dominated regime and an adverse-selection-dominated regime, separated by the critical liquidity $\lambda_\times = \kappa_{\mathrm{inv}}/(\kappa_{\mathrm{adv}}\pi)$.
For $\lambda < \lambda_\times$, the spread scales as  $1/\lambda$, with falling trade arrivals widening it sharply.
For $\lambda > \lambda_\times$, the spread saturates at $2\kappa_{\mathrm{adv}}\pi\sigma^2$ in the high-frequency limit, set by the information environment rather than the inventory channel.
Indeed, volatility shocks, informed-flow surges, or liquidity collapses can drive a market across the phase boundary into the frozen regime.
This provides a structural account, at the level of the quoting rule, of market freezes of the type observed during the 2008 financial crisis and various flash-crash events, where spreads widened beyond the point at which willing counterparties could be found.

\subsection{The Dual Economic Foundation}

The axiomatic theory derives the quoting rule from indifference axioms applied to  a single maker.
A natural question is how the resulting formula relates to a competitive-equilibrium derivation.
Theorem~\ref{thm:bertrand} establishes a direct correspondence: the  quoting rule~\eqref{eq:characterization} is the unique symmetric Nash equilibrium of a Bertrand-competitive market with free entry of homogeneous makers, and the axiomatic and competitive theories yield asymptotically equivalent predictions in the continuous-quantity limit.
The correspondence is best read as two encodings of a common  competitive primitive, the per-trade zero-margin condition, rather than as two logically independent routes  to the same conclusion: the break-even content of (A4) and (A8) is itself a  translation of competitive logic into single-maker language.
The axioms can therefore be read as an as-if characterization of competitive behavior, with the  agreement reinforcing the structural content of the characterization.

\subsection{Relation to the Literature}

The setup is in the tradition of microstructure models that  combine inventory effects with adverse selection.
The inventory-cost lineage runs through \citet{garman1976market}, who introduced the structural microstructure framework, \citet{stoll1978supply} and \citet{amihud1980dealership}, who developed the inventory-management foundations of dealer pricing,  \citet{ho1981optimal}, who derived the spread from a utility-maximizing dealer, and \citet{avellaneda2008high} and \citet{gueant2013dealing}, who extended the framework to high-frequency settings.
The adverse-selection lineage runs through \citet{glosten1985bid}, who derived  the spread  as a Bayesian equilibrium under asymmetric information, \citet{kyle1985continuous} and \citet{back1992insider}, who extended the framework to strategic informed traders in continuous time, and \citet{easley1987price} and \citet{easley1996liquidity}, who characterized the empirical content of adverse-selection-based spread decompositions.
The closed-form quoting rule~\eqref{eq:characterization} contains the inventory and adverse-selection components separately and additively, in the structural-decomposition tradition initiated by \citet{madhavan1993analysis} and developed in \citet{huang1997components}.
Each of Ho-Stoll, Avellaneda-Stoikov, and Glosten-Milgrom can be recovered from~\eqref{eq:characterization} under additional structural commitments, such as a specific cost function, a specific variance scaling, or the absence of skew.
Our contribution is to derive~\eqref{eq:characterization}  from minimal axioms, identify the three parameters as the unique parameter space, and characterize the structural properties (separability, duality, phase transition) that follow.

Axiomatic methods have a long lineage in economics.
The  social-choice tradition runs from \citet{arrow1951social} to \citet{sen1970collective}, the bargaining tradition from \citet{nash1950bargaining} to \citet{kalai1975other} and \citet{roth1979axiomatic}, the mechanism-design tradition   from \citet{vickrey1961counterspeculation} and \citet{hurwicz1973design} through \citet{myerson1981optimal}, \citet{maskinriley1984optimal}, \citet{maskin1999nash}, and \citet{maskin2008mechanism}, the belief-aggregation tradition through \citet{genest1986combining}, and the risk-measurement tradition through \citet{artzner1999coherent}.
The decision-theoretic foundations connect via \citet{vonneumann1944theory}, \citet{savage1954foundations}, \citet{anscombe1963definition}, and \citet{gilboa1989maxmin}, each of which derives a functional form on preferences from axioms on  the choice structure.
The general-equilibrium axiomatic foundation traces to \citet{debreu1959theory}; the methodological case for axiomatization is stated in \citet{aumann1985game}.
None of these traditions has been applied,  until now, to  the derivation of the bid-ask spread.
The technique closest in spirit  is the axiomatic approach to coherent risk measures, where four axioms (translation invariance, subadditivity, positive homogeneity, monotonicity) yield a specific functional form on risk measures.
Our axiomatic system has the same structural flavor, with  a larger axiom count to accommodate the richer state space of market making.

The convex-risk-measure tradition extending the coherent framework is the most closely related body of work.
The convex relaxation of \citet{artzner1999coherent}, developed in \citet{follmer2002convex}, \citet{follmer2016stochastic}, and \citet{frittelli2002putting}, drops the positive-homogeneity axiom and admits the general convex case; dynamic extensions to time-varying risk measurement are developed in \citet{cheridito2006dynamic} and \citet{detlefsen2005conditional}.
The structural commitments of these  papers (functional forms  derived   from axioms on preferences over random payoffs) are methodologically parallel to our commitments (functional  form derived from axioms on the quoting rule), and our Legendre-Fenchel duality theorem (Section~\ref{sec:duality}) makes precise the formal connection through the  convex conjugacy of the inventory cost and the half-spread.
The microstructure literature itself, by contrast, has been overwhelmingly  optimization-based, with the canonical foundations in \citet{ohara1995market} and the modern syntheses in \citet{foucault2013market} and \citet{hasbrouck2007empirical} working from utility maximization, Bayesian equilibrium, or strategic learning.
The over-the-counter strand of \citet{duffie2005otc} and \citet{weill2020search} derives spreads from search-theoretic bargaining, with structural commitments different from ours and from the equity-market microstructure tradition.
Our axiomatic approach is, to our knowledge, the first to operate at the level of structural commitments on the quoting rule directly, rather than on the upstream preference functional, the equilibrium-derivation route, or the search-and-matching process.

A companion paper~\citep{feys2026inventory}  develops the dual axiomatization on the objective side.
The two papers form a coherent axiomatic program  for market making.
The companion paper takes the maker's preferences over inventory trajectories as the primitive and derives, from five axioms, the entropic certainty-equivalent objective with running coefficient $\phi = \gamma\sigma^2/2$, unifying the Avellaneda-Stoikov and Cartea-Jaimungal frameworks as two presentations of a single object.
The present paper takes the quoting rule itself as the primitive and derives,  from eight axioms, the closed-form bid-ask  structure of~\eqref{eq:characterization}.
The two axiomatizations operate at different structural levels and  are connected by the Legendre-Fenchel duality developed in Section~\ref{sec:duality}, which translates  between cost-side and price-side representations.
The companion paper supplies the objective from which any optimization-based maker would derive a quoting rule; the present paper supplies the quoting rule that any such maker would post under competitive entry, regardless of the specific optimization route taken to it.

\subsection{Outline}

Section~\ref{sec:setup} fixes notation and primitives, including the  environmental assumptions on the cost function.
Section~\ref{sec:axioms} presents the eight axioms on the quoting rule, with brief motivations for each.
Section~\ref{sec:characterization} states and proves the  Full Characterization Theorem in detail.
Section~\ref{sec:bijection} establishes the axiom-system--parameter-space bijection, together with the irredundancy of the axioms.
Section~\ref{sec:separability} states the Structural Separability Theorem and derives the empirical identification corollary.
Section~\ref{sec:order-book} extends the theory to the full limit  order book and proves the Recovery Theorem.
Section~\ref{sec:duality} establishes  the Legendre-Fenchel duality structure of the theory.
Section~\ref{sec:phase} characterizes the phase transition and the three liquidity regimes.
 Section~\ref{sec:comparative} collects the comparative statics, develops cross-partials and elasticities, compares the axiomatic predictions with existing optimization-based models, and gives an empirical-implementation roadmap.
Section~\ref{sec:further-structural} collects two further structural implications of the characterization, namely the skew-spread ratio and the inventory-martingale property under symmetric arrivals.
Section~\ref{sec:bertrand} establishes the Bertrand equivalence and its  reading as a competitive interpretation of the break-even content of (A4) and (A8).
Section~\ref{sec:layered-reading} identifies the three-layer partition of the axiom system and traces which conclusions follow from which subset of axioms, applying the Arrovian discipline of naming the core.
Section~\ref{sec:universal} states the Universal Structure Theorem, a closing meta-statement on the invariant features of the theory.
Section~\ref{sec:methodology} situates the present work within the axiomatic tradition in economic theory.
Section~\ref{sec:conclusion} concludes.

%-------------------------------------------------------------------
\section{Setup}
\label{sec:setup}
%-------------------------------------------------------------------

Throughout the paper, $\R$ denotes the real line,  $\Rp = [0, \infty)$ the nonnegative reals,  and $\Rpp = (0, \infty)$ the strictly positive reals.

\subsection{The State Space}

The maker quotes a single asset; her state has five variables.
The inventory $q \in \R$ is her  current position, signed so that positive means long.
The mean $\mu \in \mathbb{R}$ is her belief about the asset value, i.e., the conditional expectation of $v$ given her public information.
 The variance $\sigma^2 \in \Rpp$ is the conditional variance of her belief.
The arrival intensity $\lambda \in \Rpp$ is the trade arrival rate, with  $1/\lambda$ the mean inter-trade time.
The informed fraction  $\pi \in [0,1]$ is the fraction of informed traders in the flow.
The state space $\St = \R \times \R \times \Rpp \times \Rpp \times [0,1]$ consists of tuples
$  s = (q, \mu, \sigma^2, \lambda, \pi).$

\subsection{The Quoting Rule}

The space of valid bid-ask pairs is $\Bid = \{(b, a) \in \R^2 \mid b < a\}$.

\begin{definition}
\label{def:quoting-rule}
A \emph{quoting rule} is a Borel-measurable function
\[
  Q\colon \St \to \Bid, \qquad Q(s) = (b(s), a(s)),
\]
where $b(s)$ and $a(s)$ denote the bid and ask at  state $s$.
The \emph{mid-quote} is $m(s) = \tfrac{1}{2}(b(s) + a(s))$ and the \emph{spread} is $\delta(s) = a(s) - b(s)$.
\end{definition}

Note that the two  parametrizations $(b, a)$ and  $(m, \delta)$ carry the same information: 
 the map $(b, a) \mapsto ((b+a)/2,\ a-b)$ is a bijection from $\Bid$ onto $\R \times \Rpp$.
We use whichever is more convenient.

The measurability requirement is the standard regularity condition for functions of the state.
It rules out pathological constructions such as Hamel-basis  solutions to Cauchy's functional equation, which would admit noncanonical skew rules  satisfying the linearity-over-the-rationals content of (A2) without being globally linear.
Every quoting rule arising in microstructure  modeling, in optimization-based theories of market making, or in observed market data satisfies it by construction.
Measurability is therefore essentially without loss of generality, and is standard throughout mathematical economics on continuum state spaces.
Constructing a nonmeasurable function on $\R$ requires the axiom of choice; the assumption rules out only objects with no constructive description, which have never arisen in any applied model of market  making.

We equip the state space $\St$ with the product topology induced by the standard Euclidean topologies on $\R$, $\Rpp$, and $[0,1]$, and the codomain  $\Bid \subset \R^2$ with the subspace topology inherited  from the Euclidean topology of $\R^2$.
Borel measurability of $Q$ is taken with respect to the Borel  $\sigma$-algebras generated by these topologies.
All continuity statements in what follows refer to these   topologies.

\subsection{The Trading Environment}

 We adopt three primitives of the trading environment, in addition to the maker's state.

\paragraph{Informed and uninformed traders.}
Trades arrive at total rate $\lambda$.
A fraction $\pi$ of arriving traders is informed.
Informed traders observe the true asset value $v$ before trading; the remaining fraction $1-\pi$ is uninformed.
Both populations trade whenever profitable at the posted quotes, that is, a trader buys at the maker's ask whenever her belief or true value exceeds the ask, and sells at the maker's bid whenever her belief or true value falls below the bid.

\paragraph{Belief structure.}
The market maker's belief about the asset value is summarized by its first two moments, namely the conditional expectation $\mu$ and the conditional variance $\sigma^2$.
We impose no further parametric distributional commitment beyond the existence of these two moments.
The underlying belief distribution may be Gaussian, Student-$t$, or otherwise, and may even  be the posterior of a nonparametric Bayesian filter on a richer information set, provided only that the first two moments are well defined.
The  characterization theorems below operate on  $(\mu, \sigma^2)$ directly, and the axioms are formulated in terms of these moments.
This two-moment reduction is the substantive content of the setup, in the spirit of mean-variance microstructure: the higher moments of the maker's belief, whatever they may be, do not enter the quoting rule.

\paragraph{Disagreement bound.}
Each counterparty in the market holds her own belief about the underlying asset value. 
We write $\delta_{\max} > 0$ for the maximum admissible spread in the trading environment, defined  as
\[
  \delta_{\max} := 2\, \max_C\,|\mu_C - \mu|,
\]
twice the maximum belief disagreement between the maker and any counterparty $C$ present in the market.
The factor of two reflects that a trade at the ask requires the counterparty's belief to exceed the maker's posted ask, $\mu_C > a$, that is, the disagreement $\mu_C - \mu$ must exceed the half-spread $a - \mu = \delta/2$ (and symmetrically at the bid).
A spread $\delta$ is admissible (consistent with positive trading volume) if and only if $\delta < \delta_{\max}$, equivalently if and only if some counterparty's disagreement exceeds the half-spread.
The quantity $\delta_{\max}$ is a  primitive of the environment; it bounds how  wide a spread can be  while still attracting any counterparty willing to trade.
It plays no role   in the axioms or the characterization theorems, but it is the  key input to the phase-transition analysis of Section~\ref{sec:phase}.

\paragraph{Inventory cost.}
The maker bears a cost of holding inventory.
The cost takes the form $C(q, \sigma^2) = \kappa(q)\sigma^2$, where $\kappa\colon \R \to \Rp$ is convex, nondecreasing in $|q|$,  with $\kappa(0) = 0$, and symmetric  in the sign of inventory.
Multiplicative variance dependence reflects the dollar variance of the held position; the function  $\kappa$ encodes the maker's cost structure beyond pure variance.

\subsection{Environmental Assumptions on the Cost Function}

We collect six assumptions on the cost function as environmental primitives, distinct from the axioms on the quoting rule.

\begin{itemize}
\item[C1.] $\kappa(0) = 0$ (holding zero inventory has zero cost).
\item[C2.] $\kappa$ is symmetric:  $\kappa(-q) = \kappa(q)$ (long and short are symmetric).
\item[C3.] $\kappa$ is nondecreasing in $|q|$ (larger positions cost more).
\item[C4.] $\kappa$ does not depend on $\lambda$ or $\pi$ (cost depends on the maker's position and the price variance, not on the trading environment).
\item[C5.] $\kappa$ is linear in $|q|$: $\kappa(q) = \kappa_{\mathrm{inv}}|q|$ for  some $\kappa_{\mathrm{inv}} > 0$.
\item[C6.] $\kappa$ is convex on $\R$ (so that one-sided derivatives exist and are finite at every interior point).
\end{itemize}

 Conditions  C1, C2, and C3 are minimal regularity conditions.
Condition C6 (convexity) is implied by C5 (linearity) and is stated separately so that the order-book extension of 
 Section~\ref{sec:order-book}, which drops C5 in favor of a general convex cost, can rely on C6 explicitly.
Conditions C4 and C5 are substantive structural commitments and deserve explicit comment.

Condition C4 mirrors (A5) and (A6) on the cost side: neither the cost function nor the skew depends on $\lambda$ or $\pi$.
Each of (A5), (A6), and  C4 expresses a modularity claim that the inventory channel and the trading-environment channel of the market-making problem are structurally orthogonal.
Weakening C4 admits cost structures in which $\kappa$ depends on the speed of trade  arrivals or on the informed-trader fraction; the Avellaneda-Stoikov framework, discussed   in Section~\ref{sec:universal}, is the canonical example, with the inventory cost coefficient varying through $\lambda$.
We list C4 on the cost  side rather than as an axiom on the quoting rule because $\kappa$ is  a primitive of the environment in our setup, and the axioms are reserved for behavioral conditions on the maker's quoting rule.
This distinction follows the standard division  in the axiomatic tradition.
In the social-choice framework of \citet{arrow1951social}, the axioms constrain the  social welfare function, not the underlying individual preferences over which it aggregates.
In the coherent-risk-measure framework  of \citet{artzner1999coherent}, the axioms constrain the risk measure, not the underlying  probability  space on which the random payoffs are defined.
In the bargaining-solution framework of \citet{nash1950bargaining}, the axioms constrain the solution operator, not the bargaining problem (the disagreement point, the feasible set) itself.
In each case, the axioms constrain the chooser's instrument; the environment is taken as primitive.
In particular, the structural force of C4 (and similarly of C5) is no less than that of (A5) or (A6), and the environmental status reflects the kind of object constrained rather than the strength of the constraint.

Condition C5 is the linearity assumption.
In Section~\ref{sec:order-book} we relax C5 to general convexity and recover the entire function $\kappa$ from the limit order book; under C5, $\kappa$ collapses to the scalar $\kappa_{\mathrm{inv}}$.

A remark on the choice of a linear baseline for the inventory cost, in preference to a quadratic one. 
The optimization-based inventory-management literature commonly takes a quadratic cost (Avellaneda-Stoikov, Cartea-Jaimungal), corresponding to a per-unit carrying cost proportional to inventory.
The  axiomatic stance here selects the linear case as the baseline because (A4) equates the half-spread with the marginal carrying cost $\sigma^2\,\kappa'_+(|q|)/\lambda$; under linear cost this is the constant $\kappa_{\mathrm{inv}}\sigma^2/\lambda$, the simplest nontrivial baseline in which the inventory channel of the spread depends on $\lambda$ but not on the maker's current inventory.
The quadratic case is then covered by the convex-cost extension of Section~\ref{sec:order-book}: the recovery theorem identifies the full $\kappa$ from the order book, and quadratic $\kappa$ corresponds  to a particular shape  of the depth profile that is empirically distinguishable from the linear case.

\subsection{Comments on the Setup}

The five state variables $(q, \mu, \sigma^2, \lambda, \pi)$   form the canonical decomposition of the maker's state in market microstructure.
Inventory $q$ captures her current position.
The belief $\mu$ and the variance $\sigma^2$ summarize her view of the asset value.
The arrival intensity $\lambda$ measures market activity,  and the informed fraction $\pi$ measures the information environment.
Each variable corresponds to a primitive of the trading environment.
The   axioms in Section~\ref{sec:axioms} impose structural restrictions on how the quoting rule depends on each.

%-------------------------------------------------------------------
\section{The Eight Axioms}
\label{sec:axioms}
%-------------------------------------------------------------------

We now present the eight axioms on the  quoting rule, organized along two complementary lines.

The first line is conceptual.
The axioms partition into four conceptual groups by economic content. 
These are boundary and equivariance   properties that constrain the basic shape of the rule; monotonicity and linearity conditions that impose regularity on the skew; scale and indifference axioms that fix the variance scaling  and the inventory portion of the spread; and adverse-selection axioms that determine the information portion of the spread.
We  motivate each axiom carefully within these four groups, in the subsections below.

The second line is methodological.
The axioms partition into three nested layers by the kind of denial each  one admits.
The first layer is the \emph{indispensable core}, consisting  of axioms whose denial changes the object of study rather than selecting a different family of quoting rules within it.
These axioms encode a definition, the qualitative content of  adverse  selection, or the break-even condition of competitive operation.
The second layer is the \emph{structural choice}, a single axiom that represents   a conventional selection from a broader axiomatic universe; its denial corresponds to a different family, such as the $\sigma$-scaling Glosten-Milgrom neighbor.
The third layer is the \emph{modularity extensions}, consisting of  axioms that decouple the inventory channel of the quoting rule from the trading-environment channel; their denial yields qualitatively distinct quoting rules within the same underlying structural framework.
The four axioms (A1), (A4), (A7), and (A8) lie in the indispensable core;  the axiom (A3) is the structural choice; the three axioms (A2), (A5), and (A6) are the modularity extensions.

The two lines of organization serve different purposes.
The conceptual grouping aids exposition, since axioms of similar economic content are presented together.
The layered partition aids the methodological reading, since it identifies which subset of axioms suffices to yield each conclusion of the theory.
Conclusions that follow from the core alone are most robust, surviving any weakening of the modularity extensions; conclusions that require the full theory are stronger structural commitments.
The closed-form characterization of Theorem~\ref{thm:full} requires the full theory; the quantitative spread formula and the phase transition require the core together with the structural choice; the additive decomposition of the spread and the qualitative shape of the skew require only  the core.
We trace this layered structure in detail in Section~\ref{sec:layered-reading}.

A characterization theorem shows that a list of conditions forces a unique functional form. 
 Granted the naturalness of the conditions, the conclusions follow by logical necessity, regardless of their apparent strength. 
Our axioms are chosen to be difficult to deny, especially the four core axioms; the reader who accepts each axiom as natural in isolation thereby accepts the closed-form characterization and its structural consequences.
The irredundancy of the axioms is established formally in Section~\ref{sec:bijection}, where we verify that  no axiom is redundant.

\subsection{The Status of the Break-Even Axioms}
\label{subsec:break-even-status}

A careful reader will object that two of our core axioms, (A4) (inventory indifference) and (A8) (adverse-selection break-even), differ in character from the other two.
The inventory-aversion axiom (A1) is a qualitative monotonicity claim on the maker's skew, and the spread-monotonicity-in-$\pi$ axiom (A7) is a qualitative monotonicity claim on the spread.
Each of these is indispensable on definitional grounds: denial contradicts a directly observable feature of any quoting rule used in practice.

Axioms (A4) and (A8), by contrast, posit zero-profit conditions at the trading margin.
A skeptical reader might argue that these are not first-principle constraints on a quoting rule but rather equilibrium conditions on a competitive market.
A monopolistic maker could in principle violate (A4) by charging a strictly positive expected profit on each trade; the indifference condition imposes that she does not.
This concern is sharper for (A4) and (A8) than for the other axioms precisely because the others restrict the shape of the quoting rule, whereas these two fix its level.
The level fixed by (A4) and (A8) is also the only place in the axiom system where competitive content enters the derivation, and the rest of the closed-form characterization is independent of how that level is justified.

Our response is in two parts.
First,  the indifference axioms (A4) and (A8) admit an equilibrium foundation that does not rely on the rest of the axiom system.
Theorem~\ref{thm:bertrand} shows that the same closed-form quoting rule arises as the unique symmetric Nash equilibrium of a Bertrand-competitive market with free entry of homogeneous makers, where the zero-profit condition is a derived feature of competition rather than an imposed axiom.
In that derivation, the risk-aversion parameter $\gamma_0$ enters as an exogenous parameter of the makers' inventory-disutility function, not as an output of the axiomatic derivation; the Bertrand foundation for the break-even content of (A4) and (A8) therefore does not depend on (A1), (A2), (A3), (A5), (A6), and the two derivations agree fully in the continuous-quantity limit.
A reader who accepts the competitive-equilibrium foundation as natural thereby accepts the content of (A4) and (A8), even if she resists labeling them as axioms in the strict sense.

Second, the indifference axioms are natural even at the single-maker level under any of three weakly competitive conditions.
A market in which counterparties have access to alternative liquidity
sources at competitively determined prices  constrains the maker's spread
by the same zero-profit logic, since any pricing  above the competitive
cost would lose flow to the alternative makers.
A  market in which the maker is one of many small  contributors to a consolidated order book inherits a similar Bertrand-style constraint, even without explicit competition among makers.
A steady-state, repeated-interaction market in which the maker must remain solvent over time (the typical   market-making operational reality) imposes time-averaged zero profit by the law of large numbers, since persistent positive margins attract entry and persistent  negative margins force exit.

Axioms (A4) and (A8) are indispensable on competitive grounds: any quoting rule in a functioning competitive market satisfies them,  whether by direct optimization, by Bertrand pressure, or by long-run solvency.
This is a weaker sense than the strict tautological  status of the qualitative axioms (A1) and (A7), but it is strong enough to  support the layered analysis and the methodological program of the paper.
The Bertrand-equivalence theorem (Section~\ref{sec:bertrand}) provides the rigorous foundation; here in the axiom presentation we adopt the indifference axioms as natural and signal that their justification on competitive-equilibrium grounds is developed later.

\subsection{Monotonicity and Linearity}

A second group of axioms imposes regularity on the dependence of the quoting rule on inventory.
Inventory aversion is captured qualitatively by requiring the mid-quote to be strictly decreasing in $q$.
Skew linearity is the substantive structural commitment of this group: the maker's incremental quote-adjustment for one more unit of inventory does not depend on her current inventory level.

We have not included a separate continuity axiom in this group.
Axiom (A2) directly asserts $\mathbb{R}$-linearity of the skew in $q$, which delivers continuity in $q$ for free; 
 continuity of the quoting rule in each of the five state variables is established as a derived property in Theorem~\ref{thm:continuity}.
Continuity is therefore a derived property of the quoting rule, not a separate  commitment, and we shall not list it as an axiom.

\begin{description}[font=\normalfont\bfseries, listparindent=\parindent, itemsep=0.6em]

\item[(A1) Inventory aversion.]
The mid-quote is strictly decreasing in $q$. 
 That is, for every $s = (q, \mu, \sigma^2, \lambda, \pi)$ and every $q' > q$,
\[
  m(q', \mu, \sigma^2, \lambda, \pi) < m(q, \mu, \sigma^2, \lambda, \pi).
\]
\end{description}

The axiom captures the qualitative  content of inventory aversion.
A market maker is, by  definition, a trader who is paid to provide liquidity rather than to take directional positions.
She holds inventory only as a means of providing the liquidity service, and she prefers to keep her inventory  close to a neutral level (typically zero) rather than to accumulate large positions.
Indeed, when she finds herself holding a large positive inventory, she lowers her mid-quote in order to discourage further buys (at the ask) and encourage sells (at the bid), reducing her exposure.
A trader who preferred more inventory to less would not be playing the market-making role at all; she would be a directional trader making a speculative bet on the asset.
The inventory-aversion axiom is indispensable for any participant meaningfully called a market maker.

\begin{description}[font=\normalfont\bfseries, listparindent=\parindent, itemsep=0.6em]
\item[(A2) Linearity of the skew.]
The skew is linear in $q$, that is, there exists a function $\gamma\colon \Rpp  \times \Rpp \times [0,1] \to \R$ such that
\[
  m(s) - \mu = -\gamma(\sigma^2, \lambda, \pi) \cdot q.
\]
\end{description}

The axiom is a Cauchy-equation condition.
The substantive content is that the maker's incremental quote-adjustment for one additional unit of inventory does not depend on her current inventory level: indeed, if she shifts her mid-quote by $\Delta$ when her inventory moves from $0$ to $1$,  then she also shifts by $\Delta$ when her inventory moves from $7$ to $8$, or from $-3$ to $-2$.
This is the natural expression of inventory aversion at the margin; the maker's  local discomfort from holding one additional unit is the same regardless of her starting position.
The axiom is defensible on the grounds that any nonuniformity would  have to be  motivated by an additional structural feature of the maker's environment, and no such feature is present in the axiomatic setup we have specified.
Under measurability, the linearity content of the present axiom makes the skew linear in $q$ as a function, ruling out nonlinear dependencies such as quadratic skew.
We note that the choice of domain for $\gamma$, namely $(\sigma^2, \lambda, \pi)$ and not $\mu$, is itself a substantive component of (A2): it encodes the $\mu$-independence of the skew coefficient, so that the skew-side content of Theorem~\ref{thm:translation-equivariance} (translation   equivariance in $\mu$) is already partly built into (A2), with (A4) and (A8) supplying the remaining spread-side content.

\subsection{Scale and Indifference}

The  third group of axioms identifies the scale on which the maker's quoting behavior  operates and the indifference condition that determines the inventory portion of the spread.
The variance-homogeneity axiom fixes the variance-scaling exponent so that quotes  scale linearly with the price variance $\sigma^2$.
The inventory-indifference axiom is the maker's break-even condition for inventory: the half-spread at zero informed flow equals the marginal carrying cost  per unit time, so that the maker is indifferent at the trading margin between accepting a trade and remaining at her current inventory.
The two skew-decoupling axioms are modularity axioms that assert the skew does not depend on the trading-environment variables $\lambda$ and $\pi$, separating the structural role of inventory positioning from the structural role of the spread.

\begin{description}[font=\normalfont\bfseries, listparindent=\parindent, itemsep=0.6em]

\item[(A3) Variance homogeneity.]
The quoting rule $Q$ is homogeneous of degree one in $\sigma^2$, that is, there exist functions $f, g$ such that
\[
  m(s) - \mu = \sigma^2 \cdot f(q, \lambda, \pi),
  \qquad
  \delta(s) = \sigma^2 \cdot g(q, \lambda, \pi).
\]
\end{description}

The axiom fixes the variance-scaling exponent.
The economic motivation is a unit-of-measurement argument.
If one rescales the price unit, for example by reading prices in cents rather than in dollars, then $\mu$ and $\sigma$ rescale linearly, $\sigma^2$ rescales quadratically, and the quoting rule (a price difference) must rescale linearly in the price unit.
In particular, the natural way to achieve this is for the quoting rule to depend on $\sigma^2$ in a way that, combined with the rescaling, returns a linear-in-unit dependence.
The variance-homogeneity axiom is   the specific scaling that achieves this under the assumption that the quote shift from $\mu$ is proportional to $\sigma^2$ rather than to some other power of $\sigma$.
Alternative scaling exponents, e.g., $\sigma$ instead  of $\sigma^2$, correspond to alternative structural models with different unit-of-measurement properties.
We discuss the Glosten-Milgrom $\sigma$-scaling alternative and its structural interpretation in Section~\ref{sec:universal}, where we identify it as a different choice of structural primitive within the universal   axiomatic family.
The axiom is defensible as the natural baseline scaling for absolute price differences in our moment-based belief setting.

\paragraph{(A3s) Mixed-scaling alternative.}
The Glosten-Milgrom $\sigma$-scaling alternative replaces the uniform $\sigma^2$-scaling of (A3) by a two-component scaling: there exist functions $f, g_{\mathrm{inv}}, g_{\mathrm{adv}}$ and an exponent $\beta \in (0, 1]$ such that
\[
  m(s) - \mu = \sigma^2 \cdot f(q, \lambda, \pi),
  \qquad
  \delta(s) = \sigma^2 \cdot g_{\mathrm{inv}}(q, \lambda, \pi) + \sigma^{2\beta} \cdot  g_{\mathrm{adv}}(q, \lambda, \pi),
\]
with $g_{\mathrm{inv}}$ capturing the inventory contribution and $g_{\mathrm{adv}}$ the adverse-selection contribution.
The Glosten-Milgrom rule corresponds to $\beta = 1/2$,  the case in which the adverse-selection component  scales linearly in $\sigma$ rather than in $\sigma^2$.
We treat (A3s) as the alternative structural primitive discussed in Section~\ref{sec:universal}, where the Glosten-Milgrom model is identified as a structural neighbor of the present axiomatic family.

\begin{description}[font=\normalfont\bfseries, listparindent=\parindent, itemsep=0.6em]
\item[(A4) Inventory indifference.]
The half-spread at $\pi = 0$ (no informed flow) equals the marginal carrying cost per unit time:
\[
  \frac{\delta(s)|_{\pi=0}}{2} = \frac{\sigma^2\,\kappa'_+(|q|)}{\lambda},
\]
where $\kappa'_+$ denotes the right derivative of $\kappa$ at $|q|$, which exists and is finite at every $|q| \in \Rp$ by C6.
\end{description}

The axiom captures the market maker's break-even condition for holding inventory at the trading margin.
At the moment a trade is executed, the maker absorbs (or sheds) a unit of inventory; the corresponding marginal carrying cost per unit time is $\sigma^2\,\kappa'_+(|q|)$, with the trade-arrival timescale $1/\lambda$ converting the per-unit-time cost into a per-trade  cost $\sigma^2\,\kappa'_+(|q|)/\lambda$.
The half-spread is the maker's per-trade revenue, and the axiom asserts that revenue equals marginal cost at the trading margin.
The use of $|q|$ reflects that the marginal carrying cost depends on the  magnitude of inventory, not on its sign: by C2 the cost function $\kappa$ is symmetric   in $q$, so the right derivative $\kappa'_+(|q|)$ is the correct gradient regardless of whether the trade moves  the maker further from or closer to zero inventory.
Under C5 (linear cost, the active regime in Theorem~\ref{thm:full}), we have $\kappa'_+(|q|) = \kappa_{\mathrm{inv}}$ at every $q$, so the right side of (A4) equals $\kappa_{\mathrm{inv}}\sigma^2/\lambda$ exactly, which coincides with the discrete unit-trade cost increment $\sigma^2\,[\kappa(|q|+1) - \kappa(|q|)]/\lambda$.
The continuous-marginal form is  therefore the natural axiomatic  primitive, with the unit-trade reading recovered as its discretization.
The economic justification is twofold.
First, if the half-spread strictly exceeded the marginal carrying cost, then the maker would be earning strictly positive economic profit on each trade, which would attract  entry of competing makers and drive the spread down through Bertrand competition.
Second, if  the half-spread  were strictly less than the marginal carrying cost, then the maker would be losing money on each trade and would withdraw from the market.
In either case, the absence of (A4) is inconsistent with a steady-state market with positive trading volume.
The axiom is therefore indispensable in any competitive or steady-state setting; we make this explicit through the Bertrand  Equivalence Theorem in Section~\ref{sec:bertrand}, where (A4) emerges naturally as the equilibrium condition of competitive entry.
Under linear cost (C5), the right side reduces to $\kappa_{\mathrm{inv}}  \sigma^2/\lambda$.

\begin{description}[font=\normalfont\bfseries, listparindent=\parindent, itemsep=0.6em]
\item[(A5) Skew decoupling from liquidity.]
The skew $m(s) - \mu$ does not depend on $\lambda$,  that is, there exists a function $\tilde f$ such that $m(s) - \mu = \tilde f(q, \sigma^2, \pi)$.
\end{description}

The axiom is a modularity assertion.
The skew is the maker's adjustment to her mid-quote in response to her inventory position; the arrival intensity $\lambda$ is a property of the trading environment, governing how  frequently trades occur.
The axiom says that these two quantities are structurally orthogonal.
 Indeed, the maker's hedging response to her inventory does not depend on how busy  the market is.
 The intuition is that the skew, expressed in price units, encodes a per-unit-of-inventory adjustment reflecting the maker's risk attitude toward her exposure; the speed of trade arrivals does not change that attitude.
The arrival intensity does affect the spread, through the per-trade allocation of the carrying cost in (A4).
That role is separated from the skew.
The axiom is defensible but is a structural assertion rather than a definitional one.
Alternative structures could conceivably tie the skew  to liquidity, for example, a maker might adjust her inventory positioning more aggressively in illiquid markets, but such  structures would require additional modeling primitives not present in the baseline axiomatic setup.

\begin{description}[font=\normalfont\bfseries, listparindent=\parindent, itemsep=0.6em]
\item[(A6) Skew decoupling from information.]
The skew $m(s) - \mu$ does not depend on $\pi$, that is, there exists a function $\hat f$ such that $m(s) - \mu = \hat f(q, \sigma^2, \lambda)$.
\end{description}

The axiom is the analog of (A5) for the information environment.
The maker's inventory positioning, encoded in the skew, is structurally orthogonal to the fraction of informed traders in the trading flow.
The role  of $\pi$ is to widen the spread through adverse selection, not to alter the maker's inventory hedging.
Like (A5), the present axiom is a modularity assertion that separates the inventory channel of the quoting rule from the information channel.
The intuition is that the maker's risk attitude toward her own inventory does not depend on whether the traders she faces are informed or  uninformed.
Each informed trader generates expected loss for the maker through adverse selection; the maker compensates for this loss through a wider spread, not through a different inventory positioning.
The axiom is defensible on these structural grounds.

 \subsection{Adverse Selection}

The fourth and final group of axioms governs the information component of the spread.
The spread-monotonicity axiom captures the qualitative content of adverse selection: 
more informed order flow widens the spread.
The break-even axiom is the maker's break-even condition for adverse selection, together with a complementarity condition that ensures the contributions of distinct informed-trader subpopulations add up.
Together, the two axioms make the adverse-selection spread a linear function   of the informed fraction, with a coefficient that captures the expected information rent per informed trade.

\begin{description}[font=\normalfont\bfseries, listparindent=\parindent, itemsep=0.6em]

\item[(A7) Spread monotonicity in information.]
The spread $\delta(s)$ is strictly increasing in  $\pi$~for $\pi \in [0,1]$.
\end{description}

The axiom is the qualitative content of adverse selection.
Informed traders, by hypothesis, observe the true asset value before trading.
By trading, informed agents extract an expected information rent from the market maker, who sets her quotes on the basis of her publicly available  belief about the asset value.
A higher fraction of informed flow translates into higher expected loss for the maker per trade.
The maker compensates for this loss by widening her spread, increasing her per-trade revenue.
The spread-monotonicity axiom is indispensable once one accepts that informed traders extract expected information rent and that the maker is rational enough to compensate.
The quantitative form of the dependence of $\delta$ on $\pi$ is determined by (A8); the present axiom fixes the qualitative direction.

\begin{description}[font=\normalfont\bfseries, listparindent=\parindent, itemsep=0.6em]
\item[(A8) Adverse-selection break-even.]
Define the adverse-selection increment
\[
h(\sigma^2, \pi) := \delta(q, \mu, \sigma^2, \lambda, \pi) - \delta(q, \mu, \sigma^2, \lambda, 0)
\]
that is, the  excess of the spread at informed fraction $\pi$ over the spread at zero informed fraction, with the remaining state variables $(q, \mu, \sigma^2, \lambda)$ held fixed.
The axiom imposes two clauses on $h$.
(i) \emph{Dependence.} The increment $h$ depends only on $(\sigma^2, \pi)$,  not on $q$ or $\lambda$.
(ii) \emph{Complementarity.} The increment is additive across the informed-trader population: for  any decomposition of the informed fraction $\pi = \pi_1 + \pi_2$ with $\pi_1, \pi_2 \geq 0$ and $\pi_1 + \pi_2 \leq 1$,
\[
  h(\sigma^2, \pi) = h(\sigma^2, \pi_1) + h(\sigma^2, \pi_2).
\]
\end{description}

The axiom  is the maker's break-even condition for adverse selection, together with a complementarity condition on informed-trader   populations.
The first clause says that the adverse-selection spread depends only on the volatility and the informed fraction, not on the  maker's inventory or on the speed of trade arrivals.
This is the analog of (A5) and (A6) applied to the adverse-selection channel: the information environment  generates a particular adverse-selection cost, and that cost depends only on the variables that describe the  information environment itself, not on the inventory  or liquidity channels.
The second clause is a complementarity condition.
Two nonoverlapping populations of  informed traders, of sizes   $\pi_1$ and $\pi_2$, generate together the same total adverse selection as one combined population of size $\pi_1 + \pi_2$.
Each informed trader contributes a fixed expected information rent, regardless  of how many other informed traders are present.

The break-even content of the axiom is indispensable on the same competitive-equilibrium grounds as (A4): any rule violating it cannot persist in  a competitive market.
If the adverse-selection spread strictly exceeded the expected information rent extracted per informed trade, the maker would be earning strictly positive economic profit on adverse-selection-driven trades, attracting entry; if it strictly fell short, the maker would be losing money and would withdraw.
Steady-state operation requires  equality.
The complementarity content is somewhat more structural.
It rules out  interaction effects among informed  traders, for example informed traders competing with one another and reducing each individual's expected rent.
The empirical regularity of complementarity is supported  by the standard models of adverse selection, e.g., Glosten-Milgrom, in which  informed traders are treated as a homogeneous population whose collective impact on the spread is proportional to their fraction of the flow.
We treat the axiom as indispensable in the baseline framework; the relaxation to diminishing complementarity, which captures competition among informed traders, is a structural extension developed in Section~\ref{sec:diminishing-complementarity} and does not affect the qualitative content of the main characterization.

The combined effect of (A7) and (A8),  together with measurability and (A3), is to make $h$ linear in $\pi$ with $\sigma^2$-coefficient.
The result is $h(\sigma^2, \pi) = 2\kappa_{\mathrm{adv}}\pi\sigma^2$ for  some $\kappa_{\mathrm{adv}} > 0$, as established formally in Lemma~\ref{lem:adverse-spread}.

\subsection{Summary of the Axiom System}

The eight axioms partition into three groups, summarized in Table~\ref{tab:axioms}.
The monotonicity  and linearity axioms (A1) and (A2) impose inventory aversion and the linearity of the skew in inventory.
 The scale and indifference axioms (A3)--(A6) fix the variance scaling, the inventory spread,  and the modularity of the skew.
The adverse selection axioms (A7) and (A8) determine the information part of the spread.
Two natural symmetries, translation equivariance in $\mu$ and oddness of the skew in $q$, together with continuity of the quoting rule in each state variable, emerge as derived theorems of the eight-axiom system; they are stated and proved in  Section~\ref{subsec:derived-symmetries}.

\begin{table}[ht]
\centering
\begin{tabular}{lcp{5.0cm}p{5.1cm}}
\toprule
Axiom & Layer & Content & Role \\
\midrule
 (A1) & I & $\partial m/\partial q < 0$ & Inventory aversion \\
(A2) & III & Skew linear in $q$ & Skew linearity \\
(A3) & II & Homogeneity in $\sigma^2$  & Variance scaling \\
(A4) & I & Inventory indifference & Fixes inventory spread \\
(A5) & III & Skew decoupled from $\lambda$ & Modularity \\
(A6)  & III & Skew decoupled from $\pi$ & Modularity \\
(A7) & I & $\partial\delta/\partial\pi > 0$ & Adverse-selection monotonicity \\
(A8) & I & Adverse-selection break-even with complementarity & Fixes adverse-selection spread \\
\bottomrule
\end{tabular}
\caption{The eight axioms, together with their conceptual role and three-layer assignment.
Layer I: the four indispensable core axioms.
  Layer II (structural  choice, one axiom) selects a specific axiomatic family. 
Layer III (modularity extensions, three axioms) decouples the inventory channel from the trading environment. 
Translation equivariance in $\mu$, skew oddness in $q$, and continuity of the quoting rule in each state  variable are derived theorems of this system (Section~\ref{subsec:derived-symmetries}). 
Strict positivity  of the spread is not listed; 
 it is implied by (A4) together with the cost structure and is built into the codomain $\Bid$.}
\label{tab:axioms}
\end{table}

%-------------------------------------------------------------------
\section{The Full Characterization}
\label{sec:characterization}
%-------------------------------------------------------------------

The eight axioms collectively force the quoting  rule into a three-parameter family.
We state the theorem first, then prove necessity in detail through a sequence of lemmas.

\needspace{14\baselineskip}
\begin{theorem}[Full Characterization]
\label{thm:full}
A measurable quoting rule  $Q\colon \St \to \Bid$ satisfies (A1) through (A8) with environmental assumptions C1 through C6 if and only if there exist constants $\gamma_0, \kappa_{\mathrm{inv}}, \kappa_{\mathrm{adv}} \in \Rpp$ such that
\begin{equation}
\label{eq:characterization}
  b(s) = \mu - \gamma_0\sigma^2 q - \sigma^2\!\left( \frac{\kappa_{\mathrm{inv}}}{\lambda} + \kappa_{\mathrm{adv}}\pi\right),
  \quad
  a(s) = \mu - \gamma_0\sigma^2 q + \sigma^2\!\left(\frac{\kappa_{\mathrm{inv}}}{\lambda} + \kappa_{\mathrm{adv}}\pi\right).
\end{equation}
\end{theorem}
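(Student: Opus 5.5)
The proof splits into necessity and sufficiency. Sufficiency is a direct verification: given the displayed rule, compute $m(s)-\mu=-\gamma_0\sigma^2 q$ and $\delta(s)=2\sigma^2(\kappa_{\mathrm{inv}}/\lambda+\kappa_{\mathrm{adv}}\pi)>0$, so $Q$ maps into $\Bid$ and is continuous, hence measurable. We then check each axiom in turn:
\begin{itemize}
\item (A1) holds since $\gamma_0>0$.
\item (A2), (A5) and (A6) hold with $\gamma\equiv\gamma_0\sigma^2$.
\item (A3) holds with $f=-\gamma_0 q$ and $g=2(\kappa_{\mathrm{inv}}/\lambda+\kappa_{\mathrm{adv}}\pi)$.
\item (A4) holds against the cost $\kappa(q)=\kappa_{\mathrm{inv}}|q|$, which satisfies C1--C6, since then $\kappa'_+\equiv\kappa_{\mathrm{inv}}$.
\item (A7) holds since $\kappa_{\mathrm{adv}}>0$.
\item (A8) holds with $h=2\kappa_{\mathrm{adv}}\pi\sigma^2$, which is free of $q$ and $\lambda$ and additive in $\pi$.
\end{itemize}

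For necessity, we treat the mid-quote and the spread separately, using the bijection $(b,a)\leftrightarrow(m,\delta)$.

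\emph{Skew.} By (A2), $m-\mu=-\gamma(\sigma^2,\lambda,\pi)q$. By (A5) and (A6), the coefficient cannot depend on $\lambda$ or $\pi$; to see this, evaluate at $q=1$ and compare two values of $\lambda$ (respectively $\pi$) with the other arguments fixed. So $\gamma=\gamma(\sigma^2)$. By (A3), $-\gamma(\sigma^2)q=\sigma^2 f(q,\lambda,\pi)$, and setting $q=1$ gives $\gamma(\sigma^2)/\sigma^2=-f(1,\lambda,\pi)$. The left side depends only on $\sigma^2$ and the right side not on $\sigma^2$, so it is a constant $\gamma_0$. Finally (A1) forces $\gamma_0>0$.

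\emph{Inventory spread.} By C5, $\kappa'_+(|q|)=\kappa_{\mathrm{inv}}$ for every $q$, including $q=0$, where the right derivative of $\kappa_{\mathrm{inv}}|q|$ is $\kappa_{\mathrm{inv}}$. Then (A4) gives $\delta|_{\pi=0}=2\sigma^2\kappa_{\mathrm{inv}}/\lambda$ with $\kappa_{\mathrm{inv}}>0$.

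\emph{Adverse-selection spread.} This step is the one I expect to be the main obstacle, and it is the content of Lemma~\ref{lem:adverse-spread}. Define $h$ as in (A8), so that $\delta=2\sigma^2\kappa_{\mathrm{inv}}/\lambda+h(\sigma^2,\pi)$, where by (A8)(i) $h$ does not depend on $q$ or $\lambda$. For fixed $\sigma^2$, the map $\pi\mapsto h(\sigma^2,\pi)$ is additive on $[0,1]$ in the restricted sense of (A8)(ii), and (A7) makes it strictly increasing. The delicate point is that the additivity is only on the bounded domain $\{\pi_1,\pi_2\ge0,\ \pi_1+\pi_2\le1\}$. I would argue directly:
\begin{itemize}
\item Taking $\pi_1=\pi_2=0$ gives $h(\sigma^2,0)=0$.
\item Induction gives $h(\sigma^2,r)=r\,h(\sigma^2,1)$ for rational $r\in[0,1]$.
\item Monotonicity from (A7) squeezes irrational $\pi$ between rationals, so $h(\sigma^2,\pi)=\pi\,h(\sigma^2,1)$. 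This avoids needing measurability here, though measurability would also suffice.
\end{itemize}
To pin down the $\sigma^2$-dependence, use (A3): $\delta/\sigma^2=g(q,\lambda,\pi)$ is independent of $\sigma^2$. Subtracting the $\pi=0$ value shows $h(\sigma^2,1)/\sigma^2=g(q,\lambda,1)-g(q,\lambda,0)$ is free of $\sigma^2$. Calling this constant $2\kappa_{\mathrm{adv}}$, positivity follows from (A7).

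Assembling the pieces, $m=\mu-\gamma_0\sigma^2q$ and $\delta/2=\sigma^2(\kappa_{\mathrm{inv}}/\lambda+\kappa_{\mathrm{adv}}\pi)$, which is exactly \eqref{eq:characterization} via $b=m-\delta/2$ and $a=m+\delta/2$.
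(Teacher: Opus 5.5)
Your proposal is correct and follows the paper's proof essentially step for step. The structure matches: direct verification for sufficiency, then the three necessity lemmas (the skew from (A2), (A5), (A6), (A3), (A1); the inventory spread from (A4) and C5; the adverse-selection spread from rational additivity plus the monotonicity squeeze supplied by (A7)), assembled via $b = m - \delta/2$ and $a = m + \delta/2$. The only differences are cosmetic and change nothing of substance: you get constancy of $\gamma(\sigma^2)/\sigma^2$ by evaluating the (A3) form at $q=1$, and you run the Cauchy argument at fixed $\sigma^2$ before invoking (A3), whereas the paper first factors $h(\sigma^2,\pi)=\sigma^2\tilde h(\pi)$.
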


\smallskip
\noindent\textit{Axiom dependency.} Theorem~\ref{thm:full} relies on the full eight-axiom system.

\noindent
The proof has  two directions.
Sufficiency, that the rule~\eqref{eq:characterization} satisfies each axiom, is a direct verification.
The substantive content is the necessity direction: the axioms force the  form~\eqref{eq:characterization}.
We prove necessity through three lemmas, each isolating a structural  piece of the derivation.

\begin{lemma}[The Skew]
\label{lem:skew}
Under (A1), (A2), (A3), (A5), and (A6), there exists $\gamma_0 > 0$ such that
\[
  m(s) - \mu = -\gamma_0\sigma^2 q.
\]
\end{lemma}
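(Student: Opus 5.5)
The plan is to combine the functional forms given by (A2) and (A3) and then use (A5), (A6), and (A1) to strip away dependence and fix the sign. By (A2), for every state $s$ we have $m(s)-\mu = -\gamma(\sigma^2,\lambda,\pi)\,q$. By (A3), $m(s)-\mu = \sigma^2 f(q,\lambda,\pi)$. Equating the two and evaluating at $q=1$ gives
\[
\gamma(\sigma^2,\lambda,\pi) = -\frac{f(1,\lambda,\pi)}{\sigma^2}\cdot \sigma^2 \cdot \frac{1}{\sigma^2}\cdot\sigma^2 ,
\]
or more cleanly, $\gamma(\sigma^2,\lambda,\pi) = \sigma^2\,\tilde\gamma(\lambda,\pi)$ with $\tilde\gamma(\lambda,\pi) := -f(1,\lambda,\pi)$. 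This shows that the coefficient factors as $\sigma^2$ times a function of $(\lambda,\pi)$ alone. Substituting back, the skew is $m(s)-\mu = -\tilde\gamma(\lambda,\pi)\,\sigma^2 q$ on all of $\St$.

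Next, (A5) removes the $\lambda$-dependence. The skew at $q=1$ equals $-\tilde\gamma(\lambda,\pi)\sigma^2$, and by (A5) this coincides with $\tilde f(1,\sigma^2,\pi)$, which does not depend on $\lambda$. Fixing any $\sigma^2>0$ and dividing by it shows that $\tilde\gamma(\lambda,\pi)$ is independent of $\lambda$. The same argument with (A6) and $\hat f$ shows that $\tilde\gamma$ is independent of $\pi$. Hence $\tilde\gamma \equiv \gamma_0$ is a constant on $\Rpp\times[0,1]$, and $m(s)-\mu = -\gamma_0\sigma^2 q$.

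Finally, (A1) fixes the sign. Comparing $q'=1>q=0$ at any fixed $(\mu,\sigma^2,\lambda,\pi)$ gives $-\gamma_0\sigma^2 < 0$, and since $\sigma^2>0$ this yields $\gamma_0>0$.

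The only delicate point is logical bookkeeping rather than analysis. Because (A2) already asserts genuine $\R$-linearity in $q$, no Cauchy-equation or measurability argument is needed here; measurability matters only if one reads (A2) in a weaker additive form. In that case I would first invoke the standard fact that a measurable additive function on $\R$ is linear, and then proceed as above. Another point to watch is that the factorization step must use the fact that the $(\sigma^2,\lambda,\pi)$ arguments range over their full product domain, so that evaluating at $q=1$ is legitimate for every $(\sigma^2,\lambda,\pi)$. This holds because $\St$ is a full product space.
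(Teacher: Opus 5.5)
Your proof is correct and follows the paper's argument: (A2) gives the linear form, (A3), (A5), and (A6) reduce the coefficient to $\gamma_0\sigma^2$, and (A1) fixes the sign. You apply (A3) before (A5)--(A6) rather than after, and you spell out the evaluation at $q=1$ that the paper leaves implicit, but neither changes the argument; the only flaw is a slip in your first display, which simplifies to $-f(1,\lambda,\pi)$ instead of $-\sigma^2 f(1,\lambda,\pi)$, and the restatement $\gamma=\sigma^2\tilde\gamma(\lambda,\pi)$ that you actually use is right.
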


\begin{proof}
By (A2), the skew has the form $m(s) - \mu = -\gamma(\sigma^2, \lambda, \pi) \cdot q$ for some function $\gamma\colon \Rpp \times \Rpp \times [0,1] \to \R$ that does not depend on $\mu$.
By (A5), $\gamma$ does not depend on $\lambda$; by (A6), $\gamma$ does not depend on $\pi$.
Thus, $\gamma = \gamma(\sigma^2)$.

By (A3), the skew is homogeneous of degree one in $\sigma^2$, so $\gamma(t\sigma^2)\cdot q = t\,\gamma(\sigma^2)\cdot q$ for every $t > 0$ and every $q$, hence $\gamma(\sigma^2) = \sigma^2 \cdot \gamma_0$ for a constant $\gamma_0 \in \R$.

Finally, by (A1), $\partial m/\partial q < 0$, which gives  $-\gamma_0\sigma^2 < 0$, hence $\gamma_0 > 0$.
\end{proof}

\begin{remark}[Linearity from (A2) versus Cauchy plus measurability]
\label{rem:skew-linearity-route}
Axiom (A2) directly asserts $\mathbb{R}$-linearity of the skew in $q$, not merely Cauchy additivity, so no measurability-plus-Cauchy argument is required in the proof of  Lemma~\ref{lem:skew}.
The corresponding measurability-plus-monotonicity argument enters genuinely only in the proof of Lemma~\ref{lem:adverse-spread}, where the adverse-selection function on the bounded domain $[0,1]$ is treated by combining rational additivity with monotonicity.
\end{remark}

\begin{lemma}[The Inventory Spread]
\label{lem:inventory-spread}
Under (A4) and the environmental assumptions C1 through C6,
\[
  \delta(s)|_{\pi=0} = \frac{2\kappa_{\mathrm{inv}}\sigma^2}{\lambda}.
\]
\end{lemma}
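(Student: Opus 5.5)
The plan is to read the lemma directly off (A4) once C5 has pinned down the right derivative of $\kappa$; no functional-equation argument is needed. Fix an arbitrary state $s = (q, \mu, \sigma^2, \lambda, 0)$ with $\pi = 0$. Axiom (A4) gives $\delta(s)|_{\pi=0} = 2\sigma^2\,\kappa'_+(|q|)/\lambda$. The right derivative is well defined and finite at every $|q| \in \Rp$, including the boundary point $|q| = 0$, by C6 (and also simply because $\kappa$ is linear on $\Rp$). So the only task is to evaluate $\kappa'_+(|q|)$.

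By C5, $\kappa(x) = \kappa_{\mathrm{inv}}|x|$, whose restriction to $[0,\infty)$ is $x \mapsto \kappa_{\mathrm{inv}} x$. Writing $r = |q|$, for every $r \ge 0$ and $h > 0$ the difference quotient is
\[
\frac{\kappa(r+h) - \kappa(r)}{h} = \kappa_{\mathrm{inv}},
\]
so $\kappa'_+(r) = \kappa_{\mathrm{inv}}$. This includes $r = 0$, where the two-sided derivative does not exist but the right derivative does. Substituting into (A4) yields $\delta(s)|_{\pi=0} = 2\kappa_{\mathrm{inv}}\sigma^2/\lambda$, independently of $q$ and $\mu$.

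C1--C3 and C6 are consistent with and implied by C5, and C4 ensures that the scalar $\kappa_{\mathrm{inv}}$ does not vary with $\lambda$ (or $\pi$). Hence the same constant $\kappa_{\mathrm{inv}}$ applies at every state, which is what makes the formula a single-parameter expression rather than a state-dependent one. Strict positivity $\kappa_{\mathrm{inv}} > 0$ comes from C5 and matches the codomain requirement $\delta > 0$ in $\Bid$.

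The only delicate point is the kink of $|q|$ at $q = 0$, where one must use the right derivative rather than the ordinary derivative. That is exactly why (A4) is phrased with $\kappa'_+(|q|)$, so the obstacle is handled by the one-sided difference-quotient computation above.
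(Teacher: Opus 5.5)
Your proposal is correct and follows the paper's proof: apply (A4) at $\pi = 0$, use C5 to get $\kappa'_+(|q|) = \kappa_{\mathrm{inv}}$ for every $q$, and substitute. You also write out the one-sided difference quotient at $|q| = 0$, which the paper only asserts, so nothing is missing.
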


\begin{proof}
By (A4), at $\pi = 0$,
\[
  \frac{\delta(s)|_{\pi=0}}{2} = \frac{\sigma^2\,\kappa'_+(|q|)}{\lambda}.
\]

Under C1 through C6, the cost is linear,  $\kappa(q) = \kappa_{\mathrm{inv}}|q|$ for some $\kappa_{\mathrm{inv}} > 0$ directly  by C5.
By C2, $\kappa$ is symmetric in $q$, so the right derivative $\kappa'_+(|q|)$ is defined on $\Rp$ and applies uniformly across $q > 0$ and $q < 0$.
Hence $\kappa'_+(|q|) = \kappa_{\mathrm{inv}}$ for every $q \in \R$, since the right derivative of $\kappa_{\mathrm{inv}}|q|$ in $|q|$ is the constant $\kappa_{\mathrm{inv}}$ on $\Rp$.
Substituting, we obtain  
\[
  \delta(s)|_{\pi=0} = \frac{2\kappa_{\mathrm{inv}}\sigma^2}{\lambda}.
\]

It remains to verify that C1--C6 imply linearity in $|q|$.
C1 gives $\kappa(0) = 0$; C2 gives symmetry;  C3 gives monotonicity in $|q|$; C4 gives no dependence on $\lambda$ and $\pi$.
Finally, C5 directly imposes linearity, $\kappa(q) = \kappa_{\mathrm{inv}}|q|$.
The role of C5 is the substantive linearity; in Section~\ref{sec:order-book} we relax C5 to general convexity and   obtain a richer order-book structure.
\end{proof}

\begin{lemma}[The Adverse Selection Spread]
\label{lem:adverse-spread}
Under (A3), (A7), (A8), and measurability,
\[
  h(\sigma^2, \pi) := \delta(q, \mu, \sigma^2, \lambda, \pi)  - \delta(q, \mu, \sigma^2, \lambda, 0) = 2\kappa_{\mathrm{adv}}\pi\sigma^2
\]
for some $\kappa_{\mathrm{adv}} > 0$.
\end{lemma}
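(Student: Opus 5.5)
Fix $\sigma^2>0$ and write $\phi(\pi) := h(\sigma^2,\pi)$ for $\pi\in[0,1]$. By clause (i) of (A8), $\phi$ is a single well-defined function of $\pi$ for each $\sigma^2$, whatever the values of $(q,\mu,\lambda)$. The plan is first to show that $\phi$ is linear in $\pi$ with a positive slope, and then to use (A3) to pin down how that slope depends on $\sigma^2$.

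\emph{Step one: $\phi$ is additive on $[0,1]$.} Directly from the definition, $\phi(0)=0$. This is also consistent with clause (ii) of (A8) applied to $\pi_1=\pi_2=0$. Clause (ii) says that $\phi(\pi_1+\pi_2)=\phi(\pi_1)+\phi(\pi_2)$ whenever $\pi_1,\pi_2\ge 0$ and $\pi_1+\pi_2\le 1$. Induction on this identity gives $\phi(n x)=n\phi(x)$ whenever $nx\le 1$. Applying it to $x=1/n$ gives $\phi(1/n)=\phi(1)/n$, and then $\phi(k/n)=k\,\phi(1)/n$. Hence $\phi(r)=r\,\phi(1)$ for every rational $r\in[0,1]$.

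\emph{Step two: extend to all of $[0,1]$.} By (A7), $\delta$ is strictly increasing in $\pi$, so $\phi$ is strictly increasing. Take any $\pi\in(0,1)$ and rationals $r<\pi<r'$ in $[0,1]$. Monotonicity gives $r\phi(1)\le\phi(\pi)\le r'\phi(1)$. Letting $r,r'\to\pi$ yields $\phi(\pi)=\pi\,\phi(1)$. Measurability would also suffice here via the standard Cauchy argument, but monotonicity makes it immediate. Strict monotonicity also gives $\phi(1)>\phi(0)=0$. So $h(\sigma^2,\pi)=c(\sigma^2)\,\pi$, where $c(\sigma^2):=h(\sigma^2,1)>0$.

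\emph{Step three: the $\sigma^2$-dependence.} By (A3), $\delta(s)=\sigma^2 g(q,\lambda,\pi)$. Therefore
\[
h(\sigma^2,\pi)=\sigma^2\bigl(g(q,\lambda,\pi)-g(q,\lambda,0)\bigr),
\]
so $h(\sigma^2,1)/\sigma^2 = g(q,\lambda,1)-g(q,\lambda,0)$. The right-hand side does not involve $\sigma^2$, while by (A8)(i) the left-hand side does not involve $(q,\lambda)$. Evaluated at any fixed $(q,\lambda)$, it is therefore a constant, which we write as $2\kappa_{\mathrm{adv}}$. Positivity gives $\kappa_{\mathrm{adv}}>0$, and we conclude $h=2\kappa_{\mathrm{adv}}\pi\sigma^2$.

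The only delicate point is step two, passing from additivity on the bounded domain $[0,1]$ to linearity. The Cauchy equation is only assumed on the triangle $\pi_1+\pi_2\le 1$ rather than on all of $\R$, so I would argue directly on rationals as in step one rather than cite the global Cauchy theorem. Combining this with the monotonicity squeeze from (A7) closes the argument without any extension step.
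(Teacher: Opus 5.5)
Your proof is correct and follows the paper's argument. You obtain rational linearity from the complementarity clause of (A8), extend it to all of $[0,1]$ using the monotonicity from (A7) rather than measurability, and get positivity of the slope from strict monotonicity; the only difference is the order of the steps, since you fix $\sigma^2$ and prove linearity in $\pi$ before invoking (A3) (deriving the $\sigma^2$-independence of $h(\sigma^2,1)/\sigma^2$ explicitly from $\delta=\sigma^2 g(q,\lambda,\pi)$ and (A8)(i)), whereas the paper factors out $\sigma^2$ first and then works with $\tilde h(\pi)$.
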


\begin{proof}
The proof proceeds in five steps.

\emph{Step 1, dependence structure.}
By the first clause of (A8), $h$ depends only on $(\sigma^2, \pi)$, not on $q$ or $\lambda$.
By (A3), $h$ is homogeneous of degree one in $\sigma^2$, so there exists a function $\tilde h\colon [0,1] \to \R$ such that
\[
  h(\sigma^2, \pi) = \sigma^2\,\tilde h(\pi).
\]

\emph{Step 2, measurability of $\tilde h$.}
The quoting rule $Q$ is measurable by hypothesis, hence so are its components $b(s), a(s)$, and the spread $\delta(s) = a(s) - b(s)$.
Observe  that the difference $h(\sigma^2, \pi) = \delta(\cdot, \pi) - \delta(\cdot, 0)$ is measurable as a function of $(\sigma^2, \pi)$.
Fixing any $\sigma^2 > 0$ and dividing by it, we obtain $\tilde h(\pi) = h(\sigma^2, \pi)/\sigma^2$ as a measurable function of $\pi$ on $[0,1]$.

\emph{Step 3, additivity and rational linearity.}
By the second clause of (A8) (complementarity), $\tilde h$ satisfies the restricted  Cauchy additivity equation
\[
  \tilde h(\pi_1 + \pi_2) = \tilde h(\pi_1) + \tilde h(\pi_2)  \quad \text{for all } \pi_1, \pi_2 \geq 0 \text{ with } \pi_1 + \pi_2 \leq 1.
\]
Setting $\pi_1 = \pi_2 = 0$ gives $\tilde h(0) = 2\tilde h(0)$, hence $\tilde h(0) = 0$.
By induction on $n$, $\tilde h(n\pi) = n\tilde h(\pi)$ for every positive integer $n$ with $n\pi \in [0,1]$.
Substituting $\pi = 1/n$ gives $\tilde h(1) = n\tilde h(1/n)$, that is, $\tilde h(1/n) = \tilde h(1)/n$.
Combining the two relations, $\tilde h(m/n) = m\tilde h(1/n) = (m/n)\tilde h(1)$ for all positive integers $m \leq n$.
Together with $\tilde h(0) = 0$, this gives $\tilde h(r) = r\tilde h(1)$ for every rational $r \in [0,1] \cap \mathbb{Q}$.

\emph{Step 4, monotonicity yields linearity on $[0,1]$.}
By (A7), $\delta(s)$ is strictly increasing in $\pi$, so $h(\sigma^2, \cdot)$ is strictly increasing on $[0,1]$, and hence so   is $\tilde h$.
We use monotonicity to extend the rational-point linearity to the full interval, as follows.

Set $c := \tilde h(1)$.
We claim that $\tilde h(\pi) = c\pi$ for every $\pi \in [0,1]$.
Suppose toward contradiction that $\tilde h(\pi_0) > c\pi_0$ for some $\pi_0 \in (0,1)$.
Choose   a rational $r \in (\pi_0, 1]$ with $r$ sufficiently close to  $\pi_0$ that $cr < \tilde h(\pi_0)$ (possible since $cr \to c\pi_0 < \tilde h(\pi_0)$ as $r \to \pi_0^+$).
By the strict monotonicity of $\tilde h$ and $r > \pi_0$, we have $\tilde h(r) > \tilde h(\pi_0) > cr$.
But by Step 3, $\tilde h(r) = cr$ for every rational $r \in [0,1]$, a contradiction.
The symmetric argument with $\tilde h(\pi_0) < c\pi_0$ and a rational $r < \pi_0$ also yields a contradiction.
Hence, $\tilde h(\pi) = c\pi$ for every $\pi \in [0,1]$.

\emph{Step 5, sign of $c$.}
By (A7), $\tilde h$ is strictly increasing, so $c = \tilde h(1) - \tilde h(0) = \tilde h(1) > 0$.
Setting $\kappa_{\mathrm{adv}} := c/2 > 0$, we obtain $\tilde h(\pi) = 2\kappa_{\mathrm{adv}}\pi$, and therefore $h(\sigma^2, \pi) = 2\kappa_{\mathrm{adv}}\pi\sigma^2$.
\end{proof}

\begin{remark}[The Role of Measurability and Monotonicity]
\label{rem:cauchy-step}
The classical Cauchy functional equation $f(x+y) = f(x) + f(y)$ on  $\R$ has, in the absence of  regularity, uncountably many nonmeasurable solutions, constructed via Hamel bases; the standard treatment is \citet{aczel1966functional}.
The proof of Lemma~\ref{lem:adverse-spread} uses monotonicity (from (A7)) as the regularity input, which makes the argument elementary and direct on the restricted domain $\pi_1 + \pi_2 \leq 1$ where the Cauchy additivity equation actually holds for $\tilde h$.
The Banach-Sierpi\'nski measurability route would give the same conclusion via the standard theorem, but only after first extending $\tilde h$ to all of $\R$, which is awkward on a bounded  domain; the monotonicity route is therefore preferred here.
\end{remark}

We can now assemble the proof of Theorem~\ref{thm:full}.

\begin{proof}[Proof of Theorem~\ref{thm:full}]
\emph{Sufficiency.}
We verify that the rule~\eqref{eq:characterization} satisfies each axiom.
(A1): $\partial m/\partial q = -\gamma_0\sigma^2 < 0$.
(A2): $m - \mu$ is linear in $q$.
(A3): both $m - \mu$ and $\delta$~are proportional to $\sigma^2$.
(A4) at $\pi = 0$: $\delta/2 = \kappa_{\mathrm{inv}}\sigma^2/\lambda = \sigma^2\,\kappa'_+(|q|)/\lambda$~under C5.
(A5): $m - \mu = -\gamma_0\sigma^2 q$ does not depend on $\lambda$.
(A6): $m - \mu$ does not depend on $\pi$.
(A7): $\partial\delta/\partial\pi =  2 \kappa_{\mathrm{adv}}\sigma^2 > 0$.
(A8): $h(\sigma^2, \pi) = 2\kappa_{\mathrm{adv}}\pi\sigma^2$ depends only on $(\sigma^2, \pi)$, and the complementarity check is $h(\sigma^2, \pi_1 + \pi_2) = 2\kappa_{\mathrm{adv}}(\pi_1 + \pi_2)\sigma^2 = h(\sigma^2, \pi_1) + h(\sigma^2, \pi_2)$.

\emph{Necessity.}
By Lemma~\ref{lem:skew}, the skew has the form $m - \mu = -\gamma_0\sigma^2 q$ for some $\gamma_0 > 0$.
By Lemma~\ref{lem:inventory-spread}, the inventory spread at $\pi = 0$ 
satisfies  $\delta|_{\pi=0} = 2\kappa_{\mathrm{inv}}\sigma^2/\lambda$.
By Lemma~\ref{lem:adverse-spread}, the adverse-selection increment has the form $h(\sigma^2, \pi) = 2\kappa_{\mathrm{adv}}\pi\sigma^2$.
Combining,  we obtain $\delta = 2\sigma^2(\kappa_{\mathrm{inv}}/\lambda + \kappa_{\mathrm{adv}}\pi)$ and $m = \mu - \gamma_0\sigma^2 q$.
Since $b = m - \delta/2$ and $a = m + \delta/2$, it follows that the quoting rule takes the form~\eqref{eq:characterization}.
Note that $a - b = \delta = 2\sigma^2(\kappa_{\mathrm{inv}}/ \lambda + \kappa_{\mathrm{adv}}\pi) > 0$ since $\sigma^2, \kappa_{\mathrm{inv}}, \kappa_{\mathrm{adv}}, \lambda > 0$ and $\pi \geq 0$, so the rule respects the codomain $\Bid$ (the bid  is strictly below the  ask).
\end{proof}

\begin{remark}[Parameter Interpretations]
\label{rem:parameters}
The three parameters in~\eqref{eq:characterization} admit transparent economic interpretations.
The parameter $\gamma_0 > 0$ is the maker's inventory risk aversion, measured per unit of price variance per unit of inventory.
The  parameter $ \kappa_{\mathrm{inv}} > 0$ is the marginal cost per unit of held inventory per  unit of  price variance.
The parameter $\kappa_{\mathrm{adv}} > 0$ is the expected loss per informed trade per unit of price variance.
Concretely, $\gamma_0$ is a preference primitive,  $\kappa_{\mathrm{inv}}$ is a cost primitive, and $\kappa_{\mathrm{adv}}$ is an information-environment primitive.
\end{remark}

The spread admits an exact additive decomposition,
\begin{equation}
\label{eq:spread-decomposition}
  \delta(s) = \underbrace{\frac{2\kappa_{\mathrm{inv}}\sigma^2}{\lambda}}_{\text{inventory}} + \underbrace{2\kappa_{\mathrm{adv}}\pi\sigma^2}_{\text{adverse selection}}.
\end{equation}
Both components scale with $\sigma^2$ (a consequence of (A3)).
The inventory component is inversely proportional to $\lambda$.
The more frequent  the trades, the lower the per-trade carrying cost.
The adverse-selection component is proportional to $\pi$.
We discuss the structural consequences of this decomposition in detail in Sections~\ref{sec:separability} and~\ref{sec:phase}.

\begin{remark}[On the Work the Axioms Do]
\label{rem:work-axioms-do}
The  eight axioms are not minimal in  the Cauchy-additivity sense.
(A2) directly posits linearity of the skew in $q$; (A4) equates the inventory half-spread with the marginal carrying cost, given the multiplicatively separable cost primitive $C(q, \sigma^2) = \kappa(q)\sigma^2$; (A8) imposes the adverse-selection break-even together with a complementarity condition.
Once these commitments are in place, the closed-form quoting rule~\eqref{eq:characterization} largely follows by assembly: the additive decomposition~\eqref{eq:spread-decomposition} is  built into the formulation of (A8) as a difference; the linear-in-$\pi$ adverse component is forced by (A3), (A7), and (A8); the closed-form skew is forced by (A1) through (A3) plus (A5) and (A6).
What is not obvious is not the rule itself but the \emph{structural payoffs} of having a closed form: the additive spread decomposition just displayed; the bijection with a three-parameter family in which each parameter is identified from a distinct observable moment (Section~\ref{sec:bijection}); the Legendre-Fenchel duality reading (Section~\ref{sec:duality}); and the sharp phase transition between functioning and frozen regimes (Section~\ref{sec:phase}).
These are the genuine deliverables of the axiomatic approach pursued here; the closed form is the vehicle.
 \end{remark}

\subsection{Derived Properties of the Quoting Rule}
\label{subsec:derived-symmetries}

Three structural properties of the quoting rule, translation equivariance in $\mu$, oddness of the skew in $q$, and continuity in each state variable, emerge as direct theorems rather than being imposed separately.
Translation symmetry captures the principle that quoting behavior depends on belief differentials rather than absolute belief levels; skew oddness reflects the symmetric structure of long and short inventory positions; continuity expresses the basic  regularity of the quoting rule as a function of the maker's state.
The three  are  conceptually  distinct but arise from a small subset of the axioms, with translation  equivariance relying  on (A2), (A4), and (A8), skew oddness on (A2) alone, and continuity on the conjunction of (A2) through (A8) together with measurability.
All three follow from the axioms in a few lines.

\begin{theorem}[Translation Equivariance]
\label{thm:translation-equivariance}
Under (A2), (A4), and (A8), the quoting rule satisfies, for every $s = (q, \mu, \sigma^2,  \lambda, \pi) \in \St$ and every $\tau \in \R$,
\[
  Q(q, \mu + \tau, \sigma^2, \lambda, \pi) = Q(q, \mu, \sigma^2, \lambda, \pi) + (\tau, \tau).
\]
\end{theorem}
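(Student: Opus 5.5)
The plan is to pass from $(b,a)$ to the equivalent $(m,\delta)$ parametrization. Since $b = m - \delta/2$ and $a = m + \delta/2$, the claimed identity is equivalent to two statements: $m(q,\mu+\tau,\sigma^2,\lambda,\pi) = m(q,\mu,\sigma^2,\lambda,\pi) + \tau$, and $\delta$ does not depend on $\mu$. I will prove each separately and then reassemble the bid and ask.

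\emph{Mid-quote.} By (A2), $m(s) - \mu = -\gamma(\sigma^2,\lambda,\pi)\,q$, and the coefficient $\gamma$ has domain $\Rpp\times\Rpp\times[0,1]$, so it does not involve $\mu$. Hence $m(q,\mu+\tau,\dots) = \mu + \tau - \gamma(\sigma^2,\lambda,\pi)q = m(q,\mu,\dots) + \tau$ directly. As the paper remarks after (A2), this $\mu$-independence is built into the domain of $\gamma$, so no further argument is required.

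\emph{Spread.} I write
\[
\delta(q,\mu,\sigma^2,\lambda,\pi) = \delta(q,\mu,\sigma^2,\lambda,0) + h(\sigma^2,\pi).
\]
The first term is pinned by (A4) to $2\sigma^2\kappa'_+(|q|)/\lambda$. This expression involves only $q$, $\sigma^2$, $\lambda$ and $\kappa$, and $\kappa$ is a primitive independent of $\mu$ (C4 excludes only $\lambda$ and $\pi$, but the cost is defined as a function of $q$ alone). The second term depends only on $(\sigma^2,\pi)$ by clause (i) of (A8). One point needs care: (A8) lists $q$ and $\lambda$ explicitly but not $\mu$. I read ``depends only on $(\sigma^2,\pi)$'' literally, which also excludes $\mu$. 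Therefore $\delta$ is $\mu$-free.

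Combining the two parts gives $b$ and $a$ each shifted by $\tau$, which proves the claim. The only delicate step is this $\mu$-independence of $h$; everything else is reading off the axioms. If one preferred not to lean on the literal wording, an alternative would be to derive the result from the closed form of Theorem~\ref{thm:full}. That route would use more axioms than stated, so I would stick with the literal reading of (A8)(i) and note explicitly that it is where the spread-side content enters.
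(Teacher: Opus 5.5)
Your proposal is correct and matches the paper's proof essentially step for step. The argument is the same: (A2) gives the unit shift of $m$ because $\gamma$ has no $\mu$-argument, (A4) and (A8)(i) make $\delta|_{\pi=0}$ and the increment $h(\sigma^2,\pi)$ free of $\mu$, and reassembling $b = m - \delta/2$, $a = m + \delta/2$ finishes; your explicit reading of (A8)(i) as excluding $\mu$ is exactly the reading the paper relies on.
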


\begin{proof}
By (A2), the skew has the form $m(s) - \mu = -\gamma(\sigma^2, \lambda, \pi) \cdot q$, with $\gamma$ a function of $(\sigma^2, \lambda, \pi)$ only; the mid-quote therefore shifts uniformly with $\mu$.
By (A4), the inventory half-spread at $\pi = 0$ equals $\sigma^2\kappa'_+(|q|)/\lambda$, a quantity that does not depend on $\mu$.
By (A8)(i), the adverse-selection increment $h(\sigma^2, \pi) = \delta(q, \mu, \sigma^2, \lambda, \pi) - \delta(q, \mu, \sigma^2, \lambda, 0)$ depends only on $(\sigma^2, \pi)$, hence not on $\mu$.
Adding the two contributions, the full spread $\delta(s)$ does not depend on $\mu$.
The half-quotes $b = m - \delta/2$ and $a = m + \delta/2$ therefore each shift uniformly with $\mu$, as claimed.
\end{proof}

\begin{corollary}[Skew Symmetry]
\label{cor:skew-symmetry}
Under (A2), the mid-quote has $m(0, \mu, \sigma^2, \lambda, \pi) = \mu$~for every $(\mu, \sigma^2, \lambda, \pi)$, 
and the skew is odd  in $q$, that is,
\[
  m(-q, \mu, \sigma^2, \lambda, \pi) - \mu = -\bigl(m(q, \mu, \sigma^2, \lambda, \pi) - \mu\bigr)
\]
for every $(q, \mu, \sigma^2, \lambda, \pi) \in \St$.
\end{corollary}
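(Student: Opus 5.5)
The proof is a direct evaluation of the representation that (A2) provides. By (A2), there is a function $\gamma\colon \Rpp \times \Rpp \times [0,1] \to \R$ such that
\[
  m(s) - \mu = -\gamma(\sigma^2, \lambda, \pi)\cdot q
\]
for every $s \in \St$. The coefficient $\gamma$ depends only on $(\sigma^2, \lambda, \pi)$, not on $q$ or $\mu$. Neither claim needs measurability, (A1), or any other axiom, since linearity in $q$ is asserted outright by (A2), as Remark~\ref{rem:skew-linearity-route} already notes.

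\emph{First claim.} Fix $(\mu, \sigma^2, \lambda, \pi)$ and set $q = 0$ in the representation. This gives $m(0, \mu, \sigma^2, \lambda, \pi) - \mu = -\gamma(\sigma^2, \lambda, \pi)\cdot 0 = 0$, so $m(0,\mu,\sigma^2,\lambda,\pi) = \mu$.

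\emph{Second claim.} Fix $(q, \mu, \sigma^2, \lambda, \pi) \in \St$. Since $\St$ has the product form $\R \times \cdots$ in the inventory coordinate, the point $(-q, \mu, \sigma^2, \lambda, \pi)$ also lies in $\St$. Evaluating the representation there uses the same value $\gamma(\sigma^2,\lambda,\pi)$, because only the $q$-coordinate changed. Hence
\[
  m(-q,\mu,\sigma^2,\lambda,\pi) - \mu = \gamma(\sigma^2,\lambda,\pi)\, q = -\bigl(m(q,\mu,\sigma^2,\lambda,\pi) - \mu\bigr).
\]

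The only point needing care is that the coefficient multiplying $q$ in (A2) is genuinely independent of $q$. This is exactly what the domain of $\gamma$ in the statement of (A2) encodes, so I expect no real obstacle.
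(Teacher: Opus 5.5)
Your proof is correct and matches the paper's argument. Both read the claims directly off the (A2) representation $m(s)-\mu = -\gamma(\sigma^2,\lambda,\pi)\,q$, setting $q=0$ for the first claim and substituting $-q$ for the second, with the same coefficient $\gamma$ since it does not depend on $q$.
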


\begin{proof}
By (A2), $m(s) - \mu = -\gamma(\sigma^2, \lambda, \pi) \cdot q$.
Setting $q = 0$ gives $m(0, \mu, \sigma^2, \lambda, \pi) - \mu = 0$.
Then, 
 $m(-q, \mu, \sigma^2, \lambda, \pi) - \mu = -\gamma \cdot (-q) = \gamma q = -(-\gamma q) = -\bigl(m(q, \mu, \sigma^2, \lambda, \pi) - \mu\bigr)$.
\end{proof}

\begin{theorem}[Continuity of the Quoting Rule]
\label{thm:continuity}
Under (A2), (A3), (A4), (A5), (A6), (A7), and (A8), together with the  Borel-measurability assumption of Section~\ref{sec:setup} and the cost-structure assumptions C1 through C6, the quoting rule $Q\colon \St \to \Bid$ is continuous in each of the five state   variables separately, and moreover  is jointly continuous.
\end{theorem}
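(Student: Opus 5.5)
The plan is to show that the listed axioms already pin down each component of $Q$ as an explicit elementary function of the state. Continuity then follows by inspection, and no topological argument about measurable functions is needed. Note that (A1) is absent from the hypothesis list; it was used in Lemma~\ref{lem:skew} only to fix the sign of $\gamma_0$, which is irrelevant for continuity.

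First, the skew. By (A2), $m(s) - \mu = -\gamma(\sigma^2,\lambda,\pi)\,q$, and (A5) and (A6) reduce $\gamma$ to a function of $\sigma^2$ alone. (A3) then gives $\gamma(\sigma^2) = \gamma_0\sigma^2$ with $\gamma_0 \in \R$. This is the argument of Lemma~\ref{lem:skew} without its final step. Hence $m(s) = \mu - \gamma_0\sigma^2 q$, a polynomial in $(q,\mu,\sigma^2)$.

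Second, the spread. Split it as
\[
\delta(s) = \delta(s)|_{\pi=0} + h(\sigma^2,\pi).
\]
By Lemma~\ref{lem:inventory-spread}, which uses (A4) and C1--C6, the first term equals $2\kappa_{\mathrm{inv}}\sigma^2/\lambda$. This is continuous on $\Rpp \times \Rpp$ because $\lambda$ is bounded away from zero locally. By Lemma~\ref{lem:adverse-spread}, which uses (A3), (A7), (A8) and measurability, $h(\sigma^2,\pi) = 2\kappa_{\mathrm{adv}}\pi\sigma^2$. Thus $\delta$ is a rational function of $(\sigma^2,\lambda,\pi)$ with no singularity on $\St$. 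Consequently $b = m - \delta/2$ and $a = m + \delta/2$ are continuous on $\St$ in the product topology. Joint continuity implies separate continuity in each variable, so both claims follow at once. The values stay in $\Bid$ because $\delta > 0$, and the subspace topology on $\Bid$ makes continuity into $\R^2$ equivalent to continuity into $\Bid$.

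The main obstacle is Lemma~\ref{lem:adverse-spread}, where linearity of $\tilde h$ on all of $[0,1]$ is obtained. This is the only place where a regularity input (monotonicity from (A7)) is genuinely needed. Without that step, $\tilde h$ could be a discontinuous additive-on-rationals function, and continuity in $\pi$ would fail. Since the lemma is already established, the remaining work is to check the boundary point $\pi = 0$ of $[0,1]$: $h(\sigma^2,0) = 0$, consistent with the formula, so continuity holds there as well.
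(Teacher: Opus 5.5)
Your proof is correct, but it is organized differently from the paper's. The paper argues coordinate by coordinate:
\begin{itemize}
\item continuity in $\sigma^2$ is read off (A3) alone;
\item continuity in $q$ and in $\mu$ comes from (A2), together with the $q$- and $\mu$-independence of the spread supplied by (A4), C5 and (A8)(i);
\item continuity in $\lambda$ comes from (A5), (A4) with C5, and (A8)(i);
\item continuity in $\pi$ comes from (A6) plus a monotone-Cauchy argument using (A7) and (A8).
\end{itemize}
Only at the end does it obtain joint continuity, by citing the closed form of Theorem~\ref{thm:full}.

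You instead derive the closed form first and get separate continuity as a corollary of joint continuity. The paper's last step already needs the full closed form, so your route is shorter and loses nothing logically. What the paper's route adds is a finer accounting of which axioms give continuity in each variable, in keeping with its axiom-dependency bookkeeping.

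Your route also closes a small loose end. Theorem~\ref{thm:full} lists (A1) among its hypotheses, but the continuity theorem does not, so the paper's final appeal strictly goes beyond the stated assumptions. You observe that the argument of Lemma~\ref{lem:skew} without its last step still gives $m(s) = \mu - \gamma_0\sigma^2 q$ with $\gamma_0 \in \R$. Since the sign of $\gamma_0$ plays no role in continuity, this repairs exactly that point.
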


\begin{proof}
We establish continuity coordinate by coordinate.

\emph{Continuity in $q$.}
By (A2), the skew has the form $m(s) - \mu = -\gamma(\sigma^2,  \lambda, \pi) \cdot q$, which is $\R$-linear in $q$ for each fixed $(\mu, \sigma^2, \lambda, \pi)$, hence continuous in $q$.
For the spread: by (A4) together with C5, the inventory half-spread satisfies $\delta(s)|_{\pi = 0}/2 = \kappa_{\mathrm{inv}}\sigma^2/\lambda$, a quantity that does not depend on $q$.
By (A8), the adverse-selection increment $h(\sigma^2, \pi) := \delta(s) - \delta(s)|_{\pi = 0}$ depends only on $(\sigma^2, \pi)$, hence does not depend on $q$ either.
The spread $\delta(s) = 2\kappa_{\mathrm{inv}}\sigma^2/\lambda + h(\sigma^2, \pi)$ is therefore constant in $q$ for fixed $(\mu, \sigma^2, \lambda, \pi)$, and continuous in $q$.
The half-quotes $b = m - \delta/2$ and $a = m + \delta/2$ are continuous in $q$.

\emph{Continuity in $\mu$.}
By (A2), the function $\gamma$ in the skew form $m(s) - \mu = -\gamma(\sigma^2, \lambda, \pi) \cdot q$ does not depend on $\mu$, so $m(s) = \mu - \gamma(\sigma^2, \lambda, \pi) \cdot q$ is an additive shift in $\mu$ for fixed $(q, \sigma^2, \lambda, \pi)$, and therefore continuous in $\mu$.
By (A4) and C5, the inventory half-spread depends only on $(\sigma^2, \lambda)$, not on $\mu$.
By (A8), the adverse-selection increment $h(\sigma^2, \pi)$ depends only on $(\sigma^2, \pi)$, not on $\mu$.
 The spread $\delta$ is therefore constant in $\mu$, hence continuous in $\mu$. 
The half-quotes $b, a$ are continuous in $\mu$.

\emph{Continuity in $\sigma^2$.}
By (A3) (variance homogeneity), there exist functions $f$ and $g$ such that
\[
  m(s) - \mu = \sigma^2 \cdot f(q, \lambda, \pi),
  \qquad
  \delta(s) = \sigma^2 \cdot g(q, \lambda, \pi).
\]
Each is $\R$-linear, hence continuous, in $\sigma^2$ on $\Rpp$ for fixed $(q, \mu, \lambda, \pi)$.
The half-quotes $b, a$ are continuous in $\sigma^2$.

\emph{Continuity in $\lambda$.}
By (A5), the skew $m(s) - \mu$ does not depend on $\lambda$,   hence $m$ is continuous in $\lambda$.
By (A4) and C5, the inventory half-spread $\delta(s)|_{\pi = 0}/2 = \kappa_{\mathrm{inv}}\sigma^2/\lambda$ is continuous in $\lambda$ on $\Rpp$, since the map $\lambda \mapsto 1/\lambda$ is continuous on $\Rpp$.
By (A8), the adverse-selection  increment $h(\sigma^2, \pi)$ does not depend on $\lambda$, and hence is continuous in $\lambda$.
The full spread $\delta = 2\kappa_{\mathrm{inv}}\sigma^2/\lambda + h(\sigma^2, \pi)$ is  continuous in $\lambda$.

\emph{Continuity in $\pi$.}
By (A6), the skew $m(s) - \mu$ does not   depend on $\pi$, hence $m$ is continuous in $\pi$.
For the spread: $\delta(s)|_{\pi = 0}$ does not depend on $\pi$ by definition.
The adverse-selection increment $h(\sigma^2, \pi)$, by the complementarity clause of (A8), satisfies the Cauchy functional equation
\[
  h(\sigma^2, \pi_1 + \pi_2) = h(\sigma^2, \pi_1) + h(\sigma^2, \pi_2)
\]
for every $\pi_1, \pi_2 \geq 0$ with $\pi_1 + \pi_2 \leq 1$, together with the boundary condition $h(\sigma^2, 0) = 0$.
 By (A7), the spread is strictly increasing in $\pi$, so $h(\sigma^2, \cdot)$ is monotone on $[0, 1]$.
A Borel-measurable solution of the Cauchy equation on $[0, 1]$ that is monotone is necessarily $\R$-linear. 
That is, $h(\sigma^2, \pi) = c(\sigma^2) \cdot \pi$ for some coefficient $c(\sigma^2) > 0$ (see \citet[Chapter~2]{aczel1966functional}; the monotonicity rules out the pathological nonmeasurable Hamel-basis solutions of the Cauchy equation).
Hence,  $h$ is continuous in $\pi$, and the full spread $\delta = \delta(s)|_{\pi = 0} + h(\sigma^2, \pi)$ is continuous in $\pi$.
The half-quotes $b, a$ are continuous in $\pi$.

This completes the coordinate-wise continuity argument.
Joint continuity follows from the closed-form characterization of  Theorem~\ref{thm:full} (each component of $Q$ is a polynomial in $q$ and $\mu$ and a rational expression in $\sigma^2, \lambda, \pi$, hence jointly continuous throughout the state space $\St$).
\end{proof}

\noindent
The three theorems together identify the maker's quote as a nondegenerate book centered on her belief at zero inventory, equivariant under translations of that belief, and continuous in each state variable.
Note that the symmetry theorems rely on (A2), the linearity of the skew in $q$: a system without (A2) would not deliver the translation and skew symmetries as derived properties.
The continuity theorem additionally uses the other Layer III modularity  axioms (A5) and (A6),  the structural-choice axiom (A3), and the spread-determining axioms (A4), (A7), (A8), together with measurability; the only axiom not invoked is the inventory-aversion axiom (A1).

%-------------------------------------------------------------------
\section{The Bijection and Irredundancy of the  Axioms}
\label{sec:bijection}
%-------------------------------------------------------------------

The Full Characterization identifies a three-parameter family of admissible quoting rules.
We now show that this family is  in canonical bijection with $\Rpp^3$, and that no axiom in the system is redundant.

\subsection{The Bijection Theorem}

Let
\[
  \mathcal{A} = \{\textrm{(A1), (A2), (A3), (A4), (A5), (A6), (A7), (A8)}\}
\]
denote the eight-axiom system.
Let $\mathcal{Q}$ denote the set of measurable quoting rules satisfying $\mathcal{A}$  together with the environmental assumptions C1 through C6.

\needspace{20\baselineskip}
\begin{theorem}[Bijection between Axioms and Parameters]
\label{thm:bijection}
The set $\mathcal{Q}$ is in canonical bijection with $\Rpp^3$.
The bijection is  given by the parameter map
\[
  Q \longmapsto (\gamma_0(Q),\, \kappa_{\mathrm{inv}}(Q),\, \kappa_{\mathrm{adv}}(Q)),
\]
where
\[
  \gamma_0(Q) = -\frac{1}{\sigma^2}\frac{\partial m_Q}{\partial q},
  \qquad
  \kappa_{\mathrm{inv}}(Q) = \frac{\lambda\,\delta_Q|_{\pi=0}}{2\sigma^2},
  \qquad
  \kappa_{\mathrm{adv}}(Q) = \frac{1}{2\sigma^2}\frac{\partial \delta_Q}{\partial\pi}.
\]
The inverse map sends $(\gamma_0, \kappa_{\mathrm{inv}}, \kappa_{\mathrm{adv}}) \in \Rpp^3$  to the quoting rule defined by~\eqref{eq:characterization}.
\end{theorem}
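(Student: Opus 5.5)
The argument rests almost entirely on Theorem~\ref{thm:full}. Define $\Phi\colon \Rpp^3 \to \mathcal{Q}$ by sending $(\gamma_0, \kappa_{\mathrm{inv}}, \kappa_{\mathrm{adv}})$ to the rule~\eqref{eq:characterization}, and let $\Psi\colon \mathcal{Q} \to \Rpp^3$ be the parameter map in the statement. The plan is to show that $\Phi$ is well defined and surjective, that $\Psi$ is well defined, and that $\Psi \circ \Phi = \mathrm{id}$. Together these make $\Phi$ a bijection with inverse $\Psi$, and so establish the theorem.

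First, well-definedness and surjectivity of $\Phi$. The sufficiency half of Theorem~\ref{thm:full} shows that each rule~\eqref{eq:characterization} with parameters in $\Rpp^3$ is a measurable element of $\mathcal{Q}$. Measurability holds because the rule is continuous, and the rule respects the codomain $\Bid$, as checked at the end of that proof. The necessity half shows that every $Q \in \mathcal{Q}$ equals $\Phi(\gamma_0, \kappa_{\mathrm{inv}}, \kappa_{\mathrm{adv}})$ for some triple. So $\Phi$ is onto.

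Second, well-definedness of $\Psi$. The formulas in the statement involve partial derivatives and evaluate quantities at an arbitrary state. For general $Q$ these need not exist or be constant, so I will use surjectivity: each $Q \in \mathcal{Q}$ has the closed form, and for that form:
\begin{itemize}
\item $m_Q = \mu - \gamma_0\sigma^2 q$, so $\partial m_Q/\partial q = -\gamma_0\sigma^2$;
\item $\delta_Q|_{\pi=0} = 2\kappa_{\mathrm{inv}}\sigma^2/\lambda$;
\item $\partial \delta_Q/\partial \pi = 2\kappa_{\mathrm{adv}}\sigma^2$.
\end{itemize}
All three derivatives exist everywhere. After the normalisations in the statement, each expression is a constant independent of the state $s$ at which it is evaluated. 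Hence $\Psi(Q)$ is a well-defined element of $\Rpp^3$, and the same computation gives $\Psi(\Phi(\theta)) = \theta$ for every $\theta \in \Rpp^3$.

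Third, injectivity. It follows from $\Psi \circ \Phi = \mathrm{id}$, which shows distinct triples yield distinct rules. One can also see it directly: two triples give the same rule only if their skews, inventory spreads at $\pi = 0$, and $\pi$-slopes agree at a single state, and these determine $\gamma_0$, $\kappa_{\mathrm{inv}}$ and $\kappa_{\mathrm{adv}}$ respectively.

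The only delicate point is the second step: making precise that the derivative-based formulas are meaningful and independent of the evaluation state. I would avoid any abstract regularity argument and read everything off the closed form supplied by necessity. The word "canonical" is then justified by the fact that $\Psi$ uses only observable moments of $Q$ and no arbitrary choices, and $\kappa_{\mathrm{inv}}(Q)$ agrees with the coefficient in C5 by Lemma~\ref{lem:inventory-spread}.
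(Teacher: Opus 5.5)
Your proposal is correct and follows the same route as the paper. Surjectivity comes from the necessity half of Theorem~\ref{thm:full}, and the parameters (hence injectivity) are identified by evaluating the three moments on the closed form~\eqref{eq:characterization}. You are more explicit than the paper on two points it leaves implicit: you invoke the sufficiency half to confirm that $\Phi$ actually lands in $\mathcal{Q}$, and you check that the moments do not depend on the state at which they are evaluated. Both arguments also tacitly let the C5 coefficient of the environment vary with the triple, since for a fixed $\kappa$ axiom (A4) would pin $\kappa_{\mathrm{inv}}$.
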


\smallskip
\noindent\textit{Axiom dependency.} Theorem~\ref{thm:bijection} relies on the full eight-axiom system.

\begin{proof}
Injectivity: distinct parameter values give distinct quoting rules, since the three partial derivatives separate the parameters.
Surjectivity: by Theorem~\ref{thm:full}, every $Q \in \mathcal{Q}$ takes the form~\eqref{eq:characterization} for some $(\gamma_0, \kappa_{\mathrm{inv}}, \kappa_{\mathrm{adv}}) \in \Rpp^3$.
The three quantities defined in the parameter  map are precisely those constants for the corresponding $Q$.
\end{proof}

The Bijection Theorem is the formal statement of forced  uniqueness.
Eight axioms reduce the infinite-dimensional space of  measurable functions $\St \to \Bid$ to a three-dimensional manifold parametrized by $\Rpp^3$.
The reduction is sharp.
Each parameter corresponds to one axiom group, and the parameter map is a  bijection onto $\Rpp^3$.
No parameter is ever silently fixed by combinations of axioms, and no axiom contributes redundantly to the same parameter as another.
The dimensional count, eight axioms collapsing onto three free parameters, is precisely what forced uniqueness in the axiomatic methodology is supposed to accomplish.

\subsection{Irredundancy of the Axioms}

The eight axioms are irredundant in the sense that removing any one enlarges the solution set $\mathcal{Q}$.

\begin{proposition}[Irredundancy]
\label{prop:independence}
For each axiom $A_i \in \mathcal{A}$, there exists a measurable quoting rule satisfying  all axioms in $\mathcal{A} \setminus \{A_i\}$ but violating $A_i$.
\end{proposition}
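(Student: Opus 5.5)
The plan is to prove Proposition~\ref{prop:independence} by explicit construction. I will fix the environment once and for all, taking $\kappa(q) = \kappa_{\mathrm{inv}}|q|$ with some $\kappa_{\mathrm{inv}} > 0$, so that C1--C6 hold throughout. For each axiom $A_i$ I will then perturb exactly one ingredient of the canonical rule~\eqref{eq:characterization}. The baseline is
\[
  m = \mu - \gamma_0\sigma^2 q, \qquad \delta = 2\sigma^2\bigl(\kappa_{\mathrm{inv}}/\lambda + \kappa_{\mathrm{adv}}\pi\bigr),
\]
with $b = m - \delta/2$ and $a = m + \delta/2$. All perturbations will be continuous in $s$, so Borel measurability is automatic. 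For each one I will also check that $\delta > 0$ everywhere, so that the rule really maps into $\Bid$.

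The eight constructions, in order, are as follows.
\begin{itemize}
\item \emph{Against (A1):} keep everything but take $\gamma_0 = -1$. Then (A2), (A3), (A5), (A6) hold with a negative coefficient, the spread is unchanged, and $m$ is strictly increasing in $q$.
\item \emph{Against (A2):} take $m - \mu = -\sigma^2(q + q^3)$. This is strictly decreasing in $q$, homogeneous of degree one in $\sigma^2$, and independent of $\lambda$ and $\pi$, but it is not of the form $-\gamma(\sigma^2,\lambda,\pi)\,q$.
\item \emph{Against (A3):} take $m - \mu = -\gamma_0\sigma^4 q$. This still satisfies (A2) with $\gamma(\sigma^2) = \gamma_0\sigma^4$, and the spread is untouched, so (A4), (A7), (A8) hold. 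The skew is not homogeneous of degree one in $\sigma^2$.
\item \emph{Against (A4):} double the inventory term, $\delta = 2\sigma^2(2\kappa_{\mathrm{inv}}/\lambda + \kappa_{\mathrm{adv}}\pi)$. Every other axiom still holds.
\item \emph{Against (A5):} take $m - \mu = -\gamma_0\sigma^2 q(1 + 1/\lambda)$.
\item \emph{Against (A6):} take $m - \mu = -\gamma_0\sigma^2 q(1 + \pi)$. In both this and the previous case the coefficient stays strictly positive, so (A1), (A2), (A3) survive, and the spread is untouched.
\item \emph{Against (A7):} set $\kappa_{\mathrm{adv}} = 0$. Then $h \equiv 0$, which satisfies both clauses of (A8), and $\delta = 2\kappa_{\mathrm{inv}}\sigma^2/\lambda > 0$, but $\delta$ is not strictly increasing in $\pi$.
\item \emph{Against (A8):} take $h = 2\kappa_{\mathrm{adv}}\pi\sigma^2(1 + 1/\lambda)$. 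This is still additive in $\pi$, strictly increasing, homogeneous in $\sigma^2$, and zero at $\pi = 0$, so (A3), (A4), (A7) hold, but clause~(i) fails because $h$ depends on $\lambda$. A complementary counterexample, $h = 2\kappa_{\mathrm{adv}}\sigma^2\pi^2$, violates clause~(ii) instead.
\end{itemize}

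The verifications are routine. The work lies in making sure each perturbation does not break a second axiom, and I expect two places to need care. The first is (A7). The tempting violation $\kappa_{\mathrm{adv}} < 0$ fails, because $\kappa_{\mathrm{inv}}/\lambda + \kappa_{\mathrm{adv}}\pi$ turns negative for large $\lambda$ and the rule leaves $\Bid$. This is why I will use the boundary choice $\kappa_{\mathrm{adv}} = 0$, which violates only the strictness in (A7). The second is (A3). Here the violation must be confined to the skew: (A4) pins $\delta|_{\pi=0}$ to be proportional to $\sigma^2$, and the Step~1 reduction of Lemma~\ref{lem:adverse-spread} is not available once (A3) is dropped. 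So I will check explicitly that $\gamma(\sigma^2) = \gamma_0\sigma^4$ is admissible under (A2), (A5), (A6). Finally, I will note that each counterexample satisfies the remaining seven axioms, so it lies in the enlarged solution set. Since it is not of the form~\eqref{eq:characterization}, that set strictly contains $\mathcal{Q}$.
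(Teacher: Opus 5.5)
Your proposal is correct and follows essentially the same route as the paper: one explicit, continuous perturbation of the canonical rule per axiom. Your constructions for (A1), (A3), (A4) and (A7) match the paper's, and the rest are cosmetic variants; for instance, you break (A8)(i) through $\lambda$ rather than through $q$, and you also give the $\pi^2$ example against clause (ii). Fixing $\kappa(q)=\kappa_{\mathrm{inv}}|q|$ explicitly is what makes the doubled-spread rule a genuine violation of (A4), a point the paper leaves implicit in the phrase ``twice the correct value''.
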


\begin{proof}
We exhibit a counterexample for each  axiom and verify that the remaining seven axioms hold by direct inspection of the explicit functional form.

Violating (A1): the rule with skew $+\gamma_0\sigma^2 q$,   positive instead of negative, makes the mid-quote increasing in $q$, violating inventory aversion.
Explicitly, the sign-reversed rule has $b = \mu + \gamma_0\sigma^2 q - \sigma^2(\kappa_{\mathrm{inv}}/\lambda + \kappa_{\mathrm{adv}}\pi)$ and $a = \mu + \gamma_0\sigma^2 q + \sigma^2(\kappa_{\mathrm{inv}}/\lambda + \kappa_{\mathrm{adv}}\pi)$, so $b < a$ with spread $\delta = 2\sigma^2(\kappa_{\mathrm{inv}}/\lambda + \kappa_{\mathrm{adv}}\pi) > 0$ identical to the canonical rule.
All seven other axioms are preserved by direct verification: (A2) the (sign-reversed) skew is  linear in $q$, (A3) the rule is  degree-one in $\sigma^2$ (skew factor $\sigma^2$, spread unchanged), (A4) the inventory half-spread at $\pi = 0$ is $\sigma^2\kappa_{\mathrm{inv}}/\lambda = \sigma^2\kappa'_+(|q|)/\lambda$ as required, (A5) and (A6) the skew has no $\lambda$- or $\pi$-dependence, (A7) and (A8) the spread is unchanged from the canonical rule.

Violating (A2): the rule  with cubic skew $-\gamma_0\sigma^2 q^3$ violates linearity while preserving the other regularity axioms.
We note that (A1) is qualitative strict monotonicity, not a nondegeneracy condition on the derivative: the cubic skew has vanishing derivative at $q = 0$ yet remains strictly monotone, so (A1) is satisfied.
All seven other axioms  are preserved by direct verification: (A1) the map $q \mapsto -\gamma_0\sigma^2 q^3$ is strictly decreasing on $\R$ since $q \mapsto q^3$ is strictly order-preserving on $\R$, (A3) the skew is degree-one in $\sigma^2$ (factor $\sigma^2$) and the spread is unchanged, (A4) the inventory half-spread is unchanged, (A5) and (A6) the skew has no $\lambda$- or $\pi$-dependence, (A7) and (A8) the spread is unchanged.

Violating (A3): 
The rule $m(s) - \mu = -\gamma_0\sigma^4 q$, $\delta(s) = 2\kappa_{\mathrm{inv}}\sigma^2/\lambda + 2\kappa_{\mathrm{adv}}\pi\sigma^2$ violates uniform $\sigma^2$-homogeneity of the skew while preserving every other axiom.
All seven other axioms are preserved by direct verification: (A1) the skew $-\gamma_0\sigma^4 q$ is strictly decreasing in $q$, (A2) it is linear in $q$, (A4) the inventory half-spread is unchanged, (A5) and (A6) the skew has no $\lambda$- or  $\pi$-dependence, (A7) the spread is strictly increasing in $\pi$, (A8) the adverse-selection increment is the canonical $2\kappa_{\mathrm{adv}}\pi\sigma^2$.

Violating (A4): the rule with $\delta|_{\pi=0}/2 = 2\kappa_{\mathrm{inv}}\sigma^2/\lambda$ (twice the correct value) violates inventory indifference.
All seven other axioms are preserved by direct verification: (A1) and (A2) the skew is unchanged, (A3) the  spread $4\kappa_{\mathrm{inv}}\sigma^2/\lambda + 2\kappa_{\mathrm{adv}}\pi\sigma^2$ is degree-one in $\sigma^2$, (A5) and (A6) the skew has no $\lambda$- or $\pi$-dependence, (A7) the spread is strictly increasing in $\pi$, (A8) the adverse-selection increment is unchanged.

Violating (A5): the rule with skew $-\gamma_0\sigma^2 q/(1+\lambda)$  makes the skew depend on $\lambda$, violating skew/liquidity decoupling.
All seven other axioms are preserved by direct verification: (A1) the skew is strictly decreasing in $q$ (since $1 + \lambda > 0$), (A2) it is linear in $q$, (A3) it is degree-one in $\sigma^2$, (A4) the inventory  half-spread is unchanged, (A6) the skew has no $\pi$-dependence, (A7) and (A8) the spread is unchanged.

Violating (A6):  the rule with skew $-\gamma_0\sigma^2 q(1 - \pi/2)$ makes the skew depend on $\pi$, violating skew/information decoupling.
All seven other axioms are preserved by direct verification: (A1) the skew is strictly decreasing in $q$
  (since $1 - \pi/2 \geq 1/2 > 0$ for $\pi \in [0,1]$), (A2) it is linear in $q$, (A3) it is degree-one in $\sigma^2$, (A4) the inventory half-spread is unchanged, (A5) the skew has no $\lambda$-dependence, (A7) and (A8) the spread is unchanged.

Violating (A7): the rule with adverse-selection increment $h(\sigma^2, \pi) \equiv 0$ produces a constant spread $\delta = 2\kappa_{\mathrm{inv}}\sigma^2/\lambda$ that does not depend on $\pi$, violating the strict monotonicity of the spread in $\pi$ required by (A7).
All seven other axioms are preserved by direct verification: (A1) through (A6) the skew is unchanged from the canonical rule, (A8) $h \equiv 0$ satisfies the additivity clause vacuously ($0 = 0 + 0$ for every decomposition $\pi = \pi_1 + \pi_2$) and depends only on $(\sigma^2, \pi)$ (in fact,  on nothing).

Violating (A8): the rule with $h(\sigma^2, \pi) = 2\kappa_{\mathrm{adv}}\pi\sigma^2 \cdot (1 + q^2)$ makes the adverse-selection increment depend on $q$, violating the modularity clause of (A8).
All seven other axioms are preserved by direct verification: (A1)--(A6) the skew is unchanged from the canonical rule, 
 with (A4) constraining only $\delta|_{\pi=0}/2 = \kappa_{\mathrm{inv}}\sigma^2/\lambda$ (unchanged here, since the $q$-dependent perturbation enters through $h$ and vanishes at $\pi = 0$); 
(A7) at each $q$ the spread, given by 
$2\kappa_{\mathrm{inv}}\sigma^2/\lambda + 2  \kappa_{\mathrm{adv}}\pi\sigma^2(1 + q^2)$,  is strictly  increasing in $\pi$ (the coefficient $2\kappa_{\mathrm{adv}}\sigma^2(1 + q^2)$ is strictly positive).

Each counterexample is measurable, so the corresponding axiom carries content that cannot be reduced to a measurability requirement.
\end{proof}

Irredundancy is structurally significant.
It establishes that no axiom is redundant, and that the  axiom system is tight: removing any axiom  enlarges  $\mathcal{Q}$, and adding any further axiom (consistent  with the existing system) either is redundant or restricts $\mathcal{Q}$ to  a strict subset.

%-------------------------------------------------------------------
\section{Structural Separability and Empirical Identification}
\label{sec:separability}
%-------------------------------------------------------------------

The  Full Characterization has a deep structural property.
The three economic parameters and the four state variables play isolated, nonoverlapping roles in the quoting rule.

\begin{theorem}[Structural Separability]
\label{thm:separability}
Under the Full Characterization (Theorem~\ref{thm:full}), the quoting rule satisfies three separability properties.

\emph{(i) Separability of  economic primitives.}
Each parameter appears in exactly  one structural component of the quoting rule.
The parameter $\gamma_0$ appears  only in the skew, not in the spread.
The parameter $\kappa_{\mathrm{inv}}$ appears only in  the inventory spread, not in the skew or in the adverse selection.
The parameter $\kappa_{\mathrm{adv}}$  appears only in the adverse-selection spread, not in the skew or in the inventory.

\emph{(ii) Additive separability of the spread.}
The spread decomposes additively,
\[
  \delta(s) = f_{\mathrm{inv}}(\sigma^2, \lambda) + f_{\mathrm{adv}}(\sigma^2, \pi),
\]
where the function $f_{\mathrm{inv}}(\sigma^2, \lambda) = 2\kappa_{\mathrm{inv}}\sigma^2/\lambda$ depends only on inventory-related variables and $f_{\mathrm{adv}}(\sigma^2, \pi) = 2\kappa_{\mathrm{adv}}\pi\sigma^2$ depends only on information-related variables.

\emph{(iii) Information modularity.}
Each state variable other than $\sigma^2$ affects exactly one structural component.
The inventory $q$ affects only the skew.
The arrival intensity $\lambda$~affects only the inventory spread.
The informed fraction $\pi$ affects only the adverse-selection spread.
The variance $\sigma^2$ affects  all components uniformly, by variance homogeneity.
\end{theorem}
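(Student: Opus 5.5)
The plan is to read all three separability properties directly off the closed form~\eqref{eq:characterization} guaranteed by Theorem~\ref{thm:full}. First I will rewrite the rule in $(m,\delta)$ coordinates, using the bijection $(b,a)\mapsto((b+a)/2,\,a-b)$ from Section~\ref{sec:setup}:
\[
  m(s) = \mu - \gamma_0\sigma^2 q,
  \qquad
  \delta(s) = \frac{2\kappa_{\mathrm{inv}}\sigma^2}{\lambda} + 2\kappa_{\mathrm{adv}}\pi\sigma^2 .
\]
From this I will name the three structural components: the skew $m-\mu$, the inventory spread $f_{\mathrm{inv}}(\sigma^2,\lambda) := 2\kappa_{\mathrm{inv}}\sigma^2/\lambda$, and the adverse-selection spread $f_{\mathrm{adv}}(\sigma^2,\pi) := 2\kappa_{\mathrm{adv}}\pi\sigma^2$. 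Property (ii) is then immediate from the display, and it coincides with~\eqref{eq:spread-decomposition}. The only point to note is that the decomposition is canonical rather than arbitrary. I will fix it by the normalization $f_{\mathrm{adv}}(\sigma^2,0)=0$, which matches the definition of $h$ in (A8). With that normalization, $f_{\mathrm{inv}} = \delta|_{\pi=0}$ by Lemma~\ref{lem:inventory-spread}, and $f_{\mathrm{adv}} = h$ by Lemma~\ref{lem:adverse-spread}.

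For (i) and (iii) I will inspect each component's dependence on parameters and state variables. The skew depends on $(\gamma_0;\,q,\sigma^2)$, the inventory spread on $(\kappa_{\mathrm{inv}};\,\sigma^2,\lambda)$, and the adverse-selection spread on $(\kappa_{\mathrm{adv}};\,\sigma^2,\pi)$. To turn ``appears only in'' into a precise statement, I will use partial derivatives, or simple evaluation at two parameter values. For example, $\partial\delta/\partial\gamma_0 = 0$ while $\partial m/\partial\gamma_0 = -\sigma^2 q \neq 0$ for $q\neq 0$. I will also record that each listed dependence is genuine, i.e.\ nonvanishing, which uses strict positivity of the parameters from Theorem~\ref{thm:full}:
\[
  \partial_q m = -\gamma_0\sigma^2 \neq 0,
  \qquad
  \partial_\lambda f_{\mathrm{inv}} = -\frac{2\kappa_{\mathrm{inv}}\sigma^2}{\lambda^2} \neq 0,
  \qquad
  \partial_\pi f_{\mathrm{adv}} = 2\kappa_{\mathrm{adv}}\sigma^2 \neq 0 .
\]
The claim that $\sigma^2$ enters all components uniformly is degree-one homogeneity of each component, which follows from the factor $\sigma^2$ in each term and is consistent with (A3).

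I do not expect a real obstacle here, since the work is bookkeeping on the closed form. The one subtle step is making ``component'' well defined, so that the claim is not vacuous or dependent on how one writes the rule. The normalization $f_{\mathrm{adv}}(\cdot,0)=0$, together with the identification of the skew via (A2) and Corollary~\ref{cor:skew-symmetry} ($m=\mu$ at $q=0$), pins down each component uniquely from $Q$. The statements about which parameter and which variable enter which component are then invariant facts about $Q$, not about a particular representation.
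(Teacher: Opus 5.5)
Your proposal is correct and follows the same route as the paper, which also reads all three properties directly off the closed form~\eqref{eq:characterization}. Two of your additions go beyond the paper's ``direct inspection'' without changing the argument: the normalization $f_{\mathrm{adv}}(\sigma^2,0)=0$, which makes the additive decomposition unique, and the derivative checks showing that each listed dependence is nonvanishing.
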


\smallskip
\noindent\textit{Axiom dependency.} Theorem~\ref{thm:separability} relies on the full eight-axiom system.

\begin{proof}
All three  claims follow by direct inspection of~\eqref{eq:characterization}.
For (i), $\gamma_0$ appears in $-\gamma_0  \sigma^2 q$ but not in  $\delta$; $\kappa_{\mathrm{inv}}$ appears only in $2\kappa_{\mathrm{inv}}\sigma^2/\lambda$; $\kappa_{\mathrm{adv}}$ appears only in $2\kappa_{\mathrm{adv}}\pi\sigma^2$.
For (ii), $\delta = 2\kappa_{\mathrm{inv}}\sigma^2/\lambda + 2\kappa_{\mathrm{adv}}\pi\sigma^2$ is manifestly additive in the two components, each depending only on its associated variables.
For (iii), inspection of~\eqref{eq:characterization} shows each of $q$, $\lambda$, $\pi$ in exactly one position, and $\sigma^2$ uniformly throughout.
\end{proof}

The Structural Separability Theorem reveals that the axioms enforce  a complete modular decomposition.
Each parameter corresponds to one economic primitive (preference, cost, information), and each state variable affects exactly one structural component.
There are no cross-effects, no parameter conflations, and no mixed dependencies.

\begin{remark}[Contrast with Utility-Based Models]
\label{rem:contrast-utility}
The structural separability is a distinctive feature of the axiomatic approach.
In utility-based models such as Avellaneda-Stoikov, the risk-aversion parameter $\gamma$ enters both the skew (as the inventory-hedging coefficient) and the spread (through a risk-premium term).
The two contributions are bundled within the  single parameter, and identification requires  joint estimation across multiple moments.
In Ho-Stoll, the inventory-cost coefficient enters both the skew and the spread similarly.
The axiomatic theory characterizes the family of quoting rules in which preference, cost, and information primitives are mathematically  decoupled.
This decoupling is not assumed.
It is a consequence of the axiom system.
\end{remark}

The empirical content of Structural Separability is direct.
Each parameter is identified from a single, distinct moment of the observable quoting rule.

\needspace{15\baselineskip}
\begin{corollary}[Empirical Identification]
\label{cor:identification}
Under the Full Characterization,  the parameters are identified by
\[
  \gamma_0 = -\frac{1}{\sigma^2}\frac{\partial m}{\partial q},
  \qquad
   \kappa_{\mathrm{inv}} = \frac{\lambda\,\delta|_{\pi=0}}{2\sigma^2},
  \qquad
  \kappa_{\mathrm{adv}} = \frac{1}{2\sigma^2}\frac{\partial\delta}{\partial\pi}.
\]
The map from observable quoting behavior to the parameter space $\Rpp^3$ is a bijection, and each parameter is recoverable from a single  distinct moment of the observed quoting rule.
\end{corollary}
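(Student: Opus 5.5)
The plan is to read off the three identification formulas from the closed form of Theorem~\ref{thm:full} and then to obtain the bijection claim directly from Theorem~\ref{thm:bijection}. Under the Full Characterization, every admissible $Q$ has mid-quote $m(s)=\mu-\gamma_0\sigma^2 q$ and spread $\delta(s)=2\sigma^2(\kappa_{\mathrm{inv}}/\lambda+\kappa_{\mathrm{adv}}\pi)$, as in \eqref{eq:spread-decomposition}.

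First I will differentiate $m$ in $q$. This gives $\partial m/\partial q=-\gamma_0\sigma^2$, a quantity that is constant in $(q,\mu,\lambda,\pi)$. Since $\sigma^2>0$, dividing by $-\sigma^2$ recovers $\gamma_0$ at every state.

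Second, I will evaluate $\delta$ at $\pi=0$. This gives $\delta|_{\pi=0}=2\kappa_{\mathrm{inv}}\sigma^2/\lambda$, which is exactly Lemma~\ref{lem:inventory-spread}. Multiplying by $\lambda/(2\sigma^2)$ then yields $\kappa_{\mathrm{inv}}$.

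Third, I will differentiate $\delta$ in $\pi$. This gives $\partial\delta/\partial\pi=2\kappa_{\mathrm{adv}}\sigma^2$, which is the slope of the increment $h$ in Lemma~\ref{lem:adverse-spread}. Dividing by $2\sigma^2$ yields $\kappa_{\mathrm{adv}}$. At the boundary point $\pi\in\{0,1\}$ I read this as a one-sided derivative, which is harmless because $\delta$ is affine in $\pi$.

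In each of the three cases the right-hand side is independent of the state at which it is evaluated. So the three formulas define well-posed functionals on $\mathcal{Q}$, and they coincide with the parameter map of Theorem~\ref{thm:bijection}.

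For the bijection statement, I will invoke Theorem~\ref{thm:bijection}. Surjectivity onto $\Rpp^3$ comes from the sufficiency half of Theorem~\ref{thm:full}: every triple in $\Rpp^3$ produces an admissible rule via \eqref{eq:characterization}. Injectivity follows because the three formulas above recover the triple from $Q$, so the composite map $\Rpp^3\to\mathcal{Q}\to\Rpp^3$ is the identity.

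The claim that each parameter is recovered from a single distinct moment rests on Theorem~\ref{thm:separability}(i). Each formula involves only one structural component: the skew slope for $\gamma_0$, the $\pi=0$ spread level for $\kappa_{\mathrm{inv}}$, and the $\pi$-slope of the spread for $\kappa_{\mathrm{adv}}$. Moreover, each of these components depends on only one parameter, so no formula requires knowledge of the other two parameters.

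There is no genuine obstacle here. The only point needing care is to check that the recovered quantities are independent of the state at which the moment is computed, so that "identified" is well defined. That independence follows immediately from the affine structure of \eqref{eq:characterization} in $q$ and $\pi$ and from the multiplicative form in $\sigma^2$ and $1/\lambda$.
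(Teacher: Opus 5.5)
Your proposal is correct and follows the paper's route. You read the three moments off the closed form of Theorem~\ref{thm:full}, cite Theorem~\ref{thm:bijection} for the bijection, and cite Theorem~\ref{thm:separability} for the single-moment claim; your check that each recovered quantity is state-independent is a useful detail the paper leaves implicit.

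One small slip. The identity $\Rpp^3 \to \mathcal{Q} \to \Rpp^3$ yields only surjectivity of the parameter map (equivalently, injectivity of its inverse). Injectivity of $Q \mapsto (\gamma_0, \kappa_{\mathrm{inv}}, \kappa_{\mathrm{adv}})$ must instead come from the necessity direction of Theorem~\ref{thm:full}, which you already use in your opening paragraph to rebuild every $Q \in \mathcal{Q}$ from its own triple.
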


\begin{proof}
Each parameter is a distinct moment of the observable quoting rule by Theorem~\ref{thm:separability}: $\gamma_0$ is a partial derivative of the skew, $\kappa_{\mathrm{inv}}$ is the spread level at $\pi = 0$ (an intercept), and $\kappa_{\mathrm{adv}}$  is a partial derivative of the spread.
The bijection property is Theorem~\ref{thm:bijection}.
\end{proof}

The identification corollary delivers a sharp empirical strategy.
Concretely, the slope of the mid-quote with respect to inventory identifies $\gamma_0$.
The spread at zero informed fraction identifies $\kappa_{\mathrm{inv}}$.
The slope of the spread with respect to $\pi$ identifies  $\kappa_{\mathrm{adv}}$.
There are no joint identification problems, no nonlinear estimation procedures, and no distributional  assumptions beyond the axioms themselves.
The maker's risk attitude, cost structure, and information environment are read off the observed quoting rule directly.

\subsection{Application to Empirical Microstructure}

The identification strategy obtained in Corollary~\ref{cor:identification} has practical implications for empirical microstructure work.
Three separate regressions on a sample of the maker's quotes recover the three parameters. 
The mid-quote  on inventory yields $\gamma_0$; the spread under low informed flow yields $\kappa_{\mathrm{inv}}$;
 the spread response to  informed flow yields $\kappa_{\mathrm{adv}}$.
Each regression isolates one  parameter without interference from the others.

The identification through partial derivatives also serves as a model test.
If the observed quoting rule does not satisfy~\eqref{eq:characterization}, then at least one of the eight axioms is violated.
In fact, the form of the violation is informative.
A nonlinear skew points to a violation of (A2); a $\sigma^4$-scaling points to a  violation of (A3); a $\pi$-dependent skew points to a violation of (A6); and so on.
The axiom system thus serves as a diagnostic framework for empirical anomalies.

%-------------------------------------------------------------------
\section{The Limit Order Book and the Recovery Theorem}
\label{sec:order-book}
%-------------------------------------------------------------------

The theory developed so far characterizes the best  bid and ask, that is, the maker's quotes for a single unit of inventory  change.
In practice, a  market maker posts a limit order book, a schedule of prices  at multiple levels of inventory.
We now extend the theory to the full order book.
A stronger   empirical result follows: the entire latent inventory cost function  $\kappa$~is observable in the order book depth profile.

Throughout this section we adopt the \emph{dynamic-schedule reading} of the order book: the $n$-th level $a_n$ is the maker's ask at the counterfactual post-trade inventory $q + n - 1$, traversed by the maker as her inventory varies through successive  same-side fills.
The dynamic-schedule reading is what the axiomatic theory delivers at each level via the marginal-indifference condition (A4) applied at the corresponding inventory state.
Empirically, the dynamic schedule is recovered not from a single snapshot of the maker's posted depth, but from the panel  of her quotes as her inventory varies through the integer levels, that is, from sequential  observations   rather than a simultaneous depth profile.

\subsection{The Order Book under General Convex Cost}

We relax the linearity assumption C5 to a general convex nondecreasing cost $\kappa$ with $\kappa(0) = 0$, retaining C1 through C4 and C6.
The maker now posts a schedule of asks $a_1, a_2, \ldots$ and bids $b_1, b_2, \ldots$, where $a_n$ is the price she would quote at the inventory level $q + n - 1$ (after $n - 1$ prior bid-fills have accumulated additional units), and $b_n$ is symmetrically  the price she would quote at the inventory level $q - n + 1$ (after $n - 1$ prior ask-fills have depleted her inventory).
Each level $n$ satisfies an indifference condition at the corresponding pre-trade inventory.
The maker  is indifferent between filling the $n$-th level and remaining at the pre-trade inventory.
This is the level-by-level analog of (A4) applied to  the full order book, evaluated at the counterfactual inventory after $n - 1$ same-side prior fills.

\begin{remark}[Schedule shape]
Under linear cost (C5), the dynamic ask schedule is decreasing in $n$ (the skew term dominates the constant marginal-carrying cost), reflecting  that as the maker accumulates inventory, her ask falls to attract sellers; symmetrically, the bid schedule  is increasing in $n$.
This shape is a distinctive prediction of the dynamic-schedule reading and is what the Recovery   Theorem in the next subsection exploits.
\end{remark}

\begin{theorem}[Order Book under Convex Cost]
\label{thm:order-book}
Under (A1) through (A8), together with C1 through C4 and C6 (general convex nondecreasing $\kappa$, with linearity C5 dropped), and under the dynamic-schedule reading of the order book described in the preceding remark, the $n$-th ask is
\[
  a_n - \mu = -\gamma_0\sigma^2(q + n - 1) + \frac{\sigma^2\,\kappa'_+(|q + n - 1|)}{\lambda} + \kappa_{\mathrm{adv}}\pi\sigma^2,
\]
and the $n$-th bid is
\[
  b_n - \mu = -\gamma_0\sigma^2(q - n + 1) - \frac{\sigma^2\,\kappa'_+(|q - n + 1|)}{\lambda} - \kappa_{\mathrm{adv}}\pi\sigma^2,
\]
for $n = 1, 2, \ldots$, where $\kappa'_+$ denotes the right derivative.
\end{theorem}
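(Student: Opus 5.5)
The plan is to run the necessity argument of Theorem~\ref{thm:full} pointwise in the inventory coordinate. Under the dynamic-schedule reading, each level is a best quote at a shifted inventory, so the schedule is read off from the single-level characterization. Concretely, I will first show that, with C5 dropped but C1--C4 and C6 retained, the three lemmas still give at any state $s=(q,\mu,\sigma^2,\lambda,\pi)$ the decomposition
\[
  a(s)-\mu = -\gamma_0\sigma^2 q + \frac{\sigma^2\kappa'_+(|q|)}{\lambda} + \kappa_{\mathrm{adv}}\pi\sigma^2,
  \qquad
  b(s)-\mu = -\gamma_0\sigma^2 q - \frac{\sigma^2\kappa'_+(|q|)}{\lambda} - \kappa_{\mathrm{adv}}\pi\sigma^2 .
\]
Then I will substitute $q\mapsto q+n-1$ for the ask side and $q\mapsto q-n+1$ for the bid side.

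For the skew, Lemma~\ref{lem:skew} uses only (A1), (A2), (A3), (A5), (A6) and never touches $\kappa$, so $m(s)-\mu=-\gamma_0\sigma^2 q$ holds unchanged. For the adverse-selection increment, Lemma~\ref{lem:adverse-spread} uses only (A3), (A7), (A8) and measurability, so $h(\sigma^2,\pi)=2\kappa_{\mathrm{adv}}\pi\sigma^2$ holds unchanged. For the inventory component, I will not invoke Lemma~\ref{lem:inventory-spread}, since its last step used C5. Instead I will apply (A4) directly, which gives $\delta(s)|_{\pi=0}=2\sigma^2\kappa'_+(|q|)/\lambda$. This quantity is well defined and finite by C6, and it is nonnegative by C2--C3.

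Adding the increment via (A8)(i), $\delta(s)=2\sigma^2\kappa'_+(|q|)/\lambda+2\kappa_{\mathrm{adv}}\pi\sigma^2$. Writing $a=m+\delta/2$ and $b=m-\delta/2$ then yields the displayed single-level formulas. The $n$-th level is, by the dynamic-schedule reading, the maker's ask at inventory $q+n-1$ (respectively her bid at $q-n+1$), with $(\mu,\sigma^2,\lambda,\pi)$ held fixed. The level-$n$ indifference condition is exactly (A4) evaluated at that counterfactual state. Substituting into the decomposition gives the stated formulas for $a_n$ and $b_n$ for every $n\ge 1$; at $n=1$ they reduce to the best quotes.

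The main obstacle is a consistency check on the axiom system rather than a computation. (A3) requires $\delta=\sigma^2 g(q,\lambda,\pi)$, which holds here with $g=2\kappa'_+(|q|)/\lambda+2\kappa_{\mathrm{adv}}\pi$. (A8)(i) demands that only the increment, not the base spread, be free of $q$, and this is compatible with a $q$-dependent base. The delicate point is codomain validity, $b<a$: if $\kappa'_+(0)=0$, the spread at $q=0$, $\pi=0$ would vanish. I would handle this by noting that (A7) forces $\delta(\cdot,\pi)>\delta(\cdot,0)\ge0$ only for $\pi>0$. So I would either assume $\kappa'_+(0)>0$, which holds for the linear baseline, or restrict the statement to states where the resulting spread is positive, and flag this explicitly. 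I would also remark that (A2) makes $\gamma_0$ independent of the level $n$.
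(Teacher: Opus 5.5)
Your proposal is correct and follows essentially the same route as the paper: the skew and adverse-selection lemmas are reused unchanged because neither depends on C5, the inventory-spread lemma is replaced by a direct application of (A4) at the counterfactual pre-trade inventory $q+n-1$ (resp.\ $q-n+1$), and $a_n = m + \delta/2$, $b_n = m - \delta/2$ are assembled at that state. Your codomain remark is a fair catch, since the paper explicitly permits $\kappa'_+(0)=0$ and then no rule into $\Bid$ can satisfy (A4) at zero inventory with $\pi=0$; but for the stated implication this only makes the hypotheses unsatisfiable, so the proof needs no extra assumption.
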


\smallskip
\noindent\textit{Axiom dependency.} Theorem~\ref{thm:order-book} relies on the full eight-axiom system.

\begin{proof}
The proof proceeds in three steps.
First, we derive the level-by-level indifference condition.
Second, we apply the skew, the inventory, and the adverse-selection arguments at each level,  tracking the inventory level at which each component is evaluated.
Third, we assemble the formula.

\emph{Step 1, level-by-level indifference.}
At level $n$, the  maker posts the $n$-th ask $a_n$ under the dynamic-schedule reading of the order book.
She has accumulated $n - 1$ prior long positions through her bid quotes, bringing her inventory to $q + n - 1$, and $a_n$ is the price at which she sells one unit from this state.
(A4) applied at the pre-trade   inventory $q + n - 1$ equates her per-trade revenue with the marginal carrying cost $\sigma^2\,\kappa'_+(|q + n - 1|)/\lambda$, evaluated at the pre-trade inventory magnitude.
By C2 (cost symmetry), this marginal cost is well-defined regardless of the sign of  $q + n - 1$ and is the level-by-level analog of (A4).

\emph{Step 2a, the skew at level $n$.}
By Lemma~\ref{lem:skew}, at any inventory $q'$, the mid-quote shifts from $\mu$ by $-\gamma_0\sigma^2 q'$.
At the pre-trade inventory $q + n - 1$, the relevant mid-quote shift is therefore $-\gamma_0\sigma^2(q + n - 1)$.
Indeed, $a_n$ is constructed at the counterfactual inventory $q + n - 1$, so the skew contribution to $a_n - \mu$ equals $-\gamma_0\sigma^2(q + n - 1)$.
This contribution does not depend on the convexity of $\kappa$ or on C5, since the proof of Lemma~\ref{lem:skew} used only (A1), (A2), (A3), (A5), and (A6) and never invoked C5.

\emph{Step 2b, the inventory spread at level $n$.}
By the level-by-level (A4) indifference condition at the  pre-trade  inventory $q + n - 1$ (Step 1), the inventory-channel half-spread  at level $n$ is the marginal carrying cost
\[
  \frac{\sigma^2\,\kappa'_+(|q + n - 1|)}{\lambda},
\]
where $\kappa'_+$ denotes the right derivative of $\kappa$, which exists and is finite at every nonnegative argument by C6 together with C3, with $\kappa'_+(0) = 0$ permitted (corresponding to a flat cost  function at zero inventory).
Under C5 this reduces to $\kappa_{\mathrm{inv}}\sigma^2/\lambda$ at  every level.

\emph{Step 2c, the adverse-selection spread at level $n$.}
By (A8), $h(\sigma^2, \pi)$ depends only on $(\sigma^2, \pi)$, not on inventory. 
By Lemma~\ref{lem:adverse-spread}, $h(\sigma^2, \pi) = 2\kappa_{\mathrm{adv}}\pi\sigma^2$.
The adverse-selection half-spread at each level is therefore $\kappa_{\mathrm{adv}} \pi\sigma^2$, constant in $n$ and constant across the level at which it is evaluated.

\emph{Step 3, assembly.}
All three components above are evaluated consistently at the pre-trade inventory $q + n - 1$, namely the skew (Step 2a), the  inventory half-spread (Step 2b), and the adverse-selection half-spread (Step 2c).
The $n$-th ask is the sum of the mid-quote at $q + n - 1$ and the half-spread at level $n$:
\[
  a_n = \mu - \gamma_0\sigma^2(q + n - 1) + \frac{\sigma^2\,\kappa'_+(|q + n - 1|)}{\lambda}  + \kappa_{\mathrm{adv}}\pi\sigma^2.
\]
Subtracting $\mu$ gives the stated formula.

For the $n$-th bid, the maker considers the symmetric transition $q + (1 - n) \to q - n$ (selling $n$ units to her, that is, increasing her short position by $n$).
The mid-quote contribution is $-\gamma_0\sigma^2(q - n + 1)$ at the corresponding inventory level.
The inventory and adverse-selection half-spreads both contribute with negative sign to the bid side of the order book, since the bid sits below the mid-quote by definition.
Cost symmetry (C2) ensures that the inventory half-spread $\sigma^2\kappa'_+(|q - n + 1|)/\lambda$ depends only on the magnitude $|q - n + 1|$, 
so the bid-side formula is the symmetric reflection of the ask-side regardless of the sign of $q - n + 1$.
Assembling the three   components at the pre-trade inventory $q - n + 1$ gives the $n$-th bid:
\[
  b_n = \mu - \gamma_0\sigma^2(q - n + 1) - \frac{\sigma^2\,\kappa'_+(|q - n + 1|)}{\lambda} - \kappa_{\mathrm{adv}}\pi\sigma^2.
\]
Subtracting $\mu$ gives the stated formula.
\end{proof}

\subsection{The  Recovery Theorem}

The order book theorem above implies a strong empirical content.
The entire cost  function $\kappa$ is recoverable from the observable limit order book by integration across levels.

\needspace{22\baselineskip}
\begin{theorem}[Recovery of the Inventory Cost]
\label{thm:recovery}
Under the conditions of Theorem~\ref{thm:order-book},  the cost function $\kappa$ is recoverable from the observable order book.
For the maker at inventory $q = 0$, the cost function on $[0, Q]$ is
\[
  \kappa(Q) = \int_0^Q \kappa'_+(x) \, dx,
\]
where  the cost gradient is recovered from the order book at continuous depth $x$ via the identity $\kappa'_+(x) = (\lambda/\sigma^2)(a(x) - \mu + \gamma_0\sigma^2 x - \kappa_{\mathrm{adv}}\pi\sigma^2)$, by Theorem~\ref{thm:order-book}.
At integer depths only, the left-endpoint Riemann sum 
\[
  \frac{\lambda}{\sigma^2}\sum_{n=1}^{Q}\left(a_n - \mu + \gamma_0\sigma^2(n-1) - \kappa_{\mathrm{adv}}\pi\sigma^2\right)
\]
recovers $\kappa(Q)$ exactly under C5 (where $\kappa'_+$ is constant on $\R_+$) and is a lower bound on $\kappa(Q)$ under general convex $\kappa$ (where $\kappa'_+$ is nondecreasing, so the left-endpoint sum understates the integral).
For a maker at arbitrary inventory $q$, the same construction  recovers the cost increment $\kappa(q+Q) - \kappa(q)$ over the interval $[q, q+Q]$ by integrating the level-by-level cost gradients.
\end{theorem}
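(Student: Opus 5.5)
The plan is to read the proof off Theorem~\ref{thm:order-book} by inverting its ask formula level by level, and then to combine the pieces with elementary facts about convex functions.

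\emph{Step 1: isolate the cost gradient.} Set $q = 0$ and solve the ask formula of Theorem~\ref{thm:order-book} at level $n$ for the inventory term:
\[
  \kappa'_+(n-1) = \frac{\lambda}{\sigma^2}\bigl(a_n - \mu + \gamma_0\sigma^2(n-1) - \kappa_{\mathrm{adv}}\pi\sigma^2\bigr).
\]
The quantities $\gamma_0$ and $\kappa_{\mathrm{adv}}$ on the right are already identified by the skew and spread moments of Corollary~\ref{cor:identification}, and $\lambda$, $\sigma^2$, $\mu$ are state variables. The gradient at depth $n-1$ is therefore observable. Applying the same level-by-level indifference at a continuous depth $x$ (the dynamic-schedule reading at inventory $x$) gives the stated continuous identity for $\kappa'_+(x)$.

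\emph{Step 2: integrate.} By C6, $\kappa$ is convex on $\R$, hence locally Lipschitz and absolutely continuous on $[0,Q]$. Its right derivative exists everywhere, agrees with the derivative almost everywhere, and is nondecreasing. The fundamental theorem of calculus for absolutely continuous functions then gives
\[
  \kappa(Q) - \kappa(0) = \int_0^Q \kappa'_+(x)\,dx,
\]
and C1 supplies $\kappa(0) = 0$. For general inventory $q$, replace $[0,Q]$ by $[q, q+Q]$ and use the ask formula at levels $q+n-1$. This yields $\kappa(q+Q) - \kappa(q)$, with absolute values handled through C2 when the interval crosses zero. I expect this crossing to be the only delicate bookkeeping point.

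\emph{Step 3: the Riemann sum.} By Step 1, the displayed sum equals $\sum_{n=1}^{Q}\kappa'_+(n-1)$. Split the integral into unit intervals $[n-1,n]$. Under C5, $\kappa'_+$ is the constant $\kappa_{\mathrm{inv}}$ on $\R_+$, so each interval contributes exactly $\kappa'_+(n-1)$ and the sum equals $\kappa(Q)$. Under general convexity, monotonicity of $\kappa'_+$ gives $\kappa'_+(x) \geq \kappa'_+(n-1)$ on $[n-1,n]$, so each unit integral is at least $\kappa'_+(n-1)$, and summing gives the lower bound.

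\emph{Main obstacle.} The substantive point is not the calculus but justifying the continuous-depth identity. Theorem~\ref{thm:order-book} is stated at integer levels, so evaluating $\kappa'_+(x)$ at noninteger $x$ requires invoking (A4) at the counterfactual inventory $x$, which the dynamic-schedule reading permits. The integral formula should be framed with that reading stated explicitly, while the integer-depth sum relies only on the theorem as stated.
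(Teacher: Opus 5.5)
Your proposal is correct and follows the paper's proof step for step. You invert the ask formula of Theorem~\ref{thm:order-book} level by level, with $\gamma_0$ and $\kappa_{\mathrm{adv}}$ identified first from the best quotes, then integrate $\kappa'_+$ from $0$ using C1, and obtain exactness under C5 and the lower bound from the monotonicity of $\kappa'_+$; your added care (absolute continuity of convex $\kappa$ to justify the fundamental theorem of calculus for the right derivative, and (A4) at noninteger inventory for the continuous-depth identity) goes beyond what the paper writes. On the zero-crossing point you flag, which the paper passes over, the concrete fix is to integrate $\mathrm{sgn}(x)\,\kappa'_+(|x|)$: the raw book gradient $\kappa'_+(|x|)$ integrated over $[q,q+Q]$ with $q<0<q+Q$ gives $\kappa(q+Q)+\kappa(q)$ rather than the increment $\kappa(q+Q)-\kappa(q)$.
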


\smallskip
\noindent\textit{Axiom dependency.} Theorem~\ref{thm:recovery} relies on the full eight-axiom system.

\begin{proof}
From Theorem~\ref{thm:order-book}, the cost  gradient at level $n$ is
\[
  \kappa'_+(|q+n-1|) = \frac{\lambda}{\sigma^2} \left(a_n - \mu + \gamma_0\sigma^2(q+n-1) - \kappa_{\mathrm{adv}}\pi\sigma^2\right).
\]
The formula presupposes $\gamma_0$ and $\kappa_{\mathrm{adv}}$, identified by Corollary~\ref{cor:identification} from the best-bid-best-ask quotes; the recovery formula is therefore sequentially identifiable.
The constants are identified first from the best-quote moments, and the cost gradients are then extracted from the depth profile.

 The smooth integral formula follows by the fundamental theorem of calculus applied to $\kappa$ with $\kappa(0) = 0$ (from C1) and the gradient $\kappa'_+$ identified at every depth.
The Riemann-sum claim under C5 is the observation that $\kappa'_+(x) = \kappa_{\mathrm{inv}}$ is constant on $\Rp$, so the sum equals $Q\,\kappa_{\mathrm{inv}} = \kappa(Q)$.
The lower-bound claim under general  convex $\kappa$ follows  because $\kappa'_+$ is nondecreasing on $\Rp$ by convexity, so $\kappa'_+(x) \geq \kappa'_+(n-1)$ for $x \in [n-1, n)$, giving $\int_0^Q \kappa'_+(x)\,dx \geq \sum_{n=1}^Q \kappa'_+(n-1)$.

At arbitrary $q$, the gradient at level $n$ is evaluated at inventory $|q+n-1|$, and the integral recovers $\kappa(q+Q) - \kappa(q)$ over the corresponding inventory interval rather than $\kappa(Q)$.
\end{proof}

The Recovery Theorem has a striking empirical implication.
The maker's inventory cost function is not a hidden primitive that must be calibrated separately against data; its full  shape is read off the limit order book depth profile  by integration of the observable quotes.
In  contrast to Ho-Stoll or Avellaneda-Stoikov, where the cost coefficient is  a free parameter fit against  data, the axiomatic theory predicts that the cost function itself is observable.

The empirical content of the theory thus extends beyond Corollary~\ref{cor:identification}.
Both the scalar parameters $\gamma_0$ and $\kappa_{\mathrm{adv}}$  and the full function $\kappa$ are recoverable from observable market data, without auxiliary distributional or structural assumptions.
The information content of the order book is greater than that of the best-bid-best-ask alone, in a precise structural sense.

\begin{remark}[Empirical Identification Strategy]
\label{rem:full-identification}
A complete identification strategy combines Corollary~\ref{cor:identification} with Theorem~\ref{thm:recovery}.
From the best-bid-best-ask quotes, $\gamma_0$ is identified from the slope of $m$ in $q$, and $\kappa_{\mathrm{adv}}$ from the slope of $\delta$ in $\pi$.
The  function $\kappa$ is recovered in full from the order book by integrating the cost-gradient quotes across levels.
The three identifications are distinct and use different moments of the data.
\end{remark}

%-------------------------------------------------------------------
\section{The Legendre-Fenchel Duality of Market Making}
\label{sec:duality}
%-------------------------------------------------------------------

The spread-cost relationship under general convex cost (Theorem~\ref{thm:order-book}) has a deep  structural reformulation.
Market making is, mathematically, an  instance of Legendre-Fenchel convex conjugate duality applied to the inventory cost.
We refer to \citet{rockafellar1970convex} and to \citet{hiriarturruty2001fundamentals} for the standard treatment of convex conjugacy used in this section.

\subsection{The Legendre-Fenchel Conjugate}
\label{subsec:duality-conjugate}

We begin by recalling the definition and basic properties of the Legendre-Fenchel conjugate, since the connection to the market making theory is most transparent when the conjugate construction is fresh in mind.

Let $\kappa\colon \Rp \to \Rp$ be a convex nondecreasing function with $\kappa(0) = 0$.
We restrict to the positive half-line for the conjugate construction;  by symmetry of $\kappa$ in $q$ (C2), no information is lost.
Its Legendre-Fenchel conjugate is
\[
  \kappa^*(p) = \sup_{q \geq 0}\bigl\{pq - \kappa(q)\bigr\}.
\]
The conjugate $\kappa^*$ is itself convex  on $\Rp$, and it is nondecreasing whenever $\kappa$ is nondecreasing.
Geometrically, $\kappa^*(p)$ is the maximum vertical gap between the line $q \mapsto pq$ with slope $p$ and the graph of $\kappa$.
Equivalently, the negative conjugate $-\kappa^*(p)$ is the intercept of the supporting line of $\kappa$ with slope $p$.
At points $q$ where $\kappa$ is differentiable, the supporting line at $q$ has slope $\kappa'(q)$, and the supremum in the definition of $\kappa^*(p)$ is   attained at the point $q^*$ where $\kappa'(q^*) = p$.

This last observation   is the heart of conjugate duality.
At a smooth point, the forward map $q \mapsto p = \kappa'(q)$ identifies the slope of  the supporting line at $q$.
The inverse map $p \mapsto q = (\kappa^*)'(p)$ identifies the location of the contact point of the supporting line of slope $p$.
Note that the two maps are inverses of each other, and they satisfy the Legendre-Fenchel identity
\[
  \kappa(q) + \kappa^*(p) = pq \quad \text{whenever } p = \kappa'(q).
\]

The convex conjugate has a long history in mathematics and physics.
It underlies the Lagrangian-Hamiltonian duality of classical mechanics, in which the conjugate of the Lagrangian (as a  function of velocity) is the Hamiltonian (as a function of momentum).
It underlies the energy-free-energy duality of thermodynamics, in  which the conjugate of the internal energy (as a function of entropy) is the Helmholtz  free energy (as a function of temperature).
In consumer theory, a  closely related duality construction connects the expenditure function and the indirect utility.
In optimization theory, the dual of a convex minimization problem is built from the convex conjugates of the objective and constraints.

We illustrate the  conjugate construction with three examples, chosen to match the special cases we shall  encounter in the market making theory.

\paragraph{Example: linear cost.}
Suppose $\kappa(q) = \kappa_{\mathrm{inv}} q$ for $q \geq 0$.
This is the canonical case under our environmental assumption C5.
The conjugate of $\kappa$ is
given by the function $\kappa^*(p) = \sup_{q \geq 0}\{pq - \kappa_{\mathrm{inv}} q\} =  \sup_{q \geq 0}\{(p - \kappa_{\mathrm{inv}})q\}$.
If $p \leq \kappa_{\mathrm{inv}}$, the supremum is attained at $q = 0$ and  equals $0$.
If $p > \kappa_{\mathrm{inv}}$, the expression $(p - \kappa_{\mathrm{inv}})q$ is unbounded above on $\Rp$,  so the supremum is $+\infty$.
Hence, 
\[
  \kappa^*(p) = \begin{cases} 0 & \text{if } p \leq \kappa_{\mathrm{inv}}, \\ +\infty & \text{if } p > \kappa_{\mathrm{inv}}. \end{cases}
\]
The conjugate  is the indicator function of the interval $[0, \kappa_{\mathrm{inv}}]$ shifted by zero.
Economically, the maximum marginal price is $\kappa_{\mathrm{inv}}$, beyond which the maker is unwilling to trade at any quantity.

\paragraph{Example: quadratic cost.}
Suppose $\kappa(q) = \kappa_{\mathrm{inv}} q^2 / 2$.
The conjugate is computed by maximizing $pq - \kappa_{\mathrm{inv}} q^2/2$ over $q$, which gives the first-order condition $p = \kappa_{\mathrm{inv}} q$, hence $q = p/\kappa_{\mathrm{inv}}$.
Substituting,
\[
  \kappa^*(p) = p \cdot \frac{p}{\kappa_{\mathrm{inv}}}  - \frac{\kappa_{\mathrm{inv}}}{2}\left(\frac{p}{\kappa_{\mathrm{inv}}}\right)^2 = \frac{p^2}{2\kappa_{\mathrm{inv}}}.
\]
The conjugate is also quadratic, with the reciprocal coefficient.
The quadratic family is self-dual under Legendre-Fenchel, up to coefficient reciprocation.
 At the special point $\kappa_{\mathrm{inv}} = 1$ (the natural  unit), the cost function and its conjugate coincide identically.

\paragraph{Example: power cost.}
Suppose $\kappa(q) = q^p / p$ for some $p > 1$.
The first-order condition $\tilde p = q^{p-1}$ gives $q = \tilde p^{1/(p-1)}$.
Substituting,
\[
  \kappa^*(\tilde p) = \tilde p \cdot \tilde  p^{1/(p-1)} - \frac{1}{p}\tilde p^{p/(p-1)} = \tilde p^{p/(p-1)}\left(1 - \frac{1}{p}\right) = \frac{\tilde p^{p^*}}{p^*},
\]
where $p^* = p/(p-1)$ is the conjugate exponent of $p$.
The duality swaps $p$ and $p^*$:
 a maker with cost exponent $p$ faces a dual marginal-price function with the conjugate exponent.

These three examples cover the linear case (canonical under C5), the quadratic case (self-dual in natural units), and the general power case (illustrating the conjugate-exponent interchange).
We now turn to the main duality theorem for market making.

\subsection{The Duality Theorem}

We now show how the quoting rule, under (A4) and a smooth convex inventory cost, encodes both members of the Legendre-Fenchel conjugate pair $(\kappa, \kappa^*)$.
The maker's indifference condition relates the inventory level  $q$ to the half-spread, which (under appropriate rescaling) plays the role of the marginal price conjugate to $q$.
The standard differential identities of convex conjugation then give the inverse map and the Legendre-Fenchel equality.

\needspace{14\baselineskip}
\begin{theorem}[Legendre-Fenchel Duality]
\label{thm:duality}
Under (A4) with smooth strictly convex inventory cost $\kappa$, the quoting rule encodes the Legendre-Fenchel conjugate pair $(\kappa, \kappa^*)$ as follows.
\begin{enumerate}
\item   The forward map $q \mapsto \tilde p = \kappa'(q)$ is the derivative  of $\kappa$.
By (A4), the rescaled half-spread is $\tilde p = (\lambda/\sigma^2)(\delta(q)/2 - \kappa_{\mathrm{adv}}\pi\sigma^2)$.
\item  The inverse map $\tilde p \mapsto q = (\kappa^*)'(\tilde p)$ is the derivative of the conjugate $\kappa^*$.
\item The Legendre-Fenchel identity holds at the indifference point, $\kappa(q) + \kappa^*(\tilde p) = q\tilde p$.
\end{enumerate}
\end{theorem}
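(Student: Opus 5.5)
The plan is to reduce everything to standard facts about the Legendre-Fenchel transform on $\Rp$, once the order-book version of (A4) has identified the rescaled half-spread with $\kappa'$. We work on the half-line $q \geq 0$, as in Section~\ref{subsec:duality-conjugate}; by C2 nothing is lost, and negative inventories follow by reflection. Smoothness and strict convexity of $\kappa$ make $\kappa'$ continuous and strictly increasing on $\Rp$. Hence $\kappa'$ is a bijection from $\Rp$ onto the interval $I = [\kappa'(0), \sup \kappa')$, and it has a continuous, strictly increasing inverse $(\kappa')^{-1}\colon I \to \Rp$.

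\emph{Item 1.} I would invoke (A4) at inventory $q$ in the form used in Theorem~\ref{thm:order-book}, namely that the inventory half-spread equals $\sigma^2\kappa'_+(|q|)/\lambda$. Smoothness gives $\kappa'_+ = \kappa'$. By Lemma~\ref{lem:adverse-spread}, the adverse-selection half-spread $\kappa_{\mathrm{adv}}\pi\sigma^2$ is additive and depends neither on $q$ nor on $\lambda$. Subtracting it from $\delta(q)/2$ and multiplying by $\lambda/\sigma^2$ yields $\tilde p = \kappa'(q)$. This is just bookkeeping with the decomposition \eqref{eq:spread-decomposition}, which generalizes from C5 to convex $\kappa$ exactly as in Step 2b of Theorem~\ref{thm:order-book}.

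\emph{Item 3 before item 2.} Fix $q_0 \geq 0$ and set $\tilde p = \kappa'(q_0)$. The map $\phi(q) = \tilde p q - \kappa(q)$ is concave and differentiable on $\Rp$, with $\phi'(q) = \tilde p - \kappa'(q)$. This derivative is positive for $q < q_0$ and negative for $q > q_0$ by strict monotonicity of $\kappa'$. So $q_0$ is the unique maximizer, and $\kappa^*(\tilde p) = \tilde p q_0 - \kappa(q_0)$, which is the Fenchel equality. The boundary case $q_0 = 0$ is handled the same way, because $\phi' \le 0$ on $\Rp$ there.

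\emph{Item 2.} I would show that $\kappa^*$ is differentiable on the interior of $I$ with $(\kappa^*)'(\tilde p) = (\kappa')^{-1}(\tilde p)$. Write $q(p) = (\kappa')^{-1}(p)$. For any $p$ and $p'$, the definition of the supremum gives $\kappa^*(p') \geq p' q(p) - \kappa(q(p)) = \kappa^*(p) + (p'-p)q(p)$. Swapping the roles of $p$ and $p'$ gives the upper bound $\kappa^*(p') - \kappa^*(p) \le (p'-p)q(p')$. Dividing by $p'-p$ sandwiches the difference quotient between $q(p)$ and $q(p')$. Continuity of $q(\cdot)$ then gives the derivative.

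The main obstacle is item 2. It needs continuity of the inverse map, which is where strict convexity and smoothness are essential. It also needs care at the endpoint $\kappa'(0)$ of $I$, where only a one-sided derivative is available, and for $\tilde p$ beyond $\sup\kappa'$, where $\kappa^*$ may be $+\infty$ (compare the linear example). I would therefore state item 2 on the interior of $I$, with a right derivative at $\kappa'(0)$, and remark that items 1 to 3 together say the forward quote map and the inverse depth map are mutually inverse.
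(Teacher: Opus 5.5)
Your proposal is correct and follows the paper's route: item 1 is the same rearrangement of (A4), item 3 is the same observation that the supremum defining $\kappa^*(\tilde p)$ is attained where $\tilde p = \kappa'(q)$, and item 2 rests on the invertibility of $\kappa'$ under strict convexity. The differences are in rigor only: the paper cites $(\kappa^*)' = (\kappa')^{-1}$ as standard where you prove it by the difference-quotient sandwich (hence item 3 first); your appeal to Lemma~\ref{lem:adverse-spread} shows the displayed formula for $\tilde p$ also uses (A3), (A7), (A8), not (A4) alone; and your endpoint and $\kappa^* = +\infty$ caveats make precise what the paper calls ``the relevant domain''.
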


\smallskip
\noindent\textit{Axiom dependency.} Theorem~\ref{thm:duality} relies on (A4) together with the smooth strictly convex cost structure, and is therefore a Layer~I result.

\begin{proof}
Part 1: (A4) rearranges to $\sigma^2 \kappa'(q)/\lambda  = \delta(q)/2 - \kappa_{\mathrm{adv}}\pi\sigma^2$, equivalently $\kappa'(q) = \tilde p$ with the indicated rescaling.

Part 2: this is Fenchel's identity for smooth strictly convex conjugates.
Under strict convexity, $\kappa'$ is strictly monotone and therefore globally invertible on its range; the inverse map is precisely $(\kappa^*)'$.
 A convex but not strictly convex $\kappa$ would admit flat segments where $\kappa'$ is constant and not invertible, so strict convexity is needed for Part 2.
The derivation is otherwise standard, and the equivalence $q = (\kappa^*)'(\tilde p) \iff \tilde p = \kappa'(q)$ holds for all $q$ in the relevant domain.

Part 3: the Young-Fenchel inequality $\kappa(q) + \kappa^*(\tilde p) \geq q\tilde p$ holds for all $(q, \tilde p)$ and is an equality at the indifference point, where $\tilde p = \kappa'(q)$.
The equality follows from the definition of $\kappa^*$ as a supremum, which is attained at $q$ when $\tilde p = \kappa'(q)$. 
Figure~\ref{fig:lf-geometry} illustrates the geometry.
\end{proof}

\begin{figure}[t]
\centering
\begin{tikzpicture}[scale=1.5,>=Stealth]
    % --- Axes
  \draw[->, gray!70] (-0.3, 0) -- (3.4, 0) node[right, black] {$q$};
  \draw[->, gray!70] (0, -2) -- (0, 3.4) node[above, black] {};

   % ---  Convex curve kappa(q) = q^2 / 2
  \draw[thick, NavyBlue, domain=0:2.7, samples=80, smooth]
        plot ({\x}, {0.5*\x*\x}) node[right, NavyBlue] {$\kappa(q)$};

  % --- Marked point at q0 = 1.8, kappa(q0) = 1.62
  \node[circle, fill=black, inner sep=1.1pt] at (1.8, 1.62) {};
  \draw[dashed, gray!70] (1.8, 0) -- (1.8, 1.62);
  \node[below] at (1.8, 0) {$q_0$};

  % --- Tangent line at q0: slope = q0 = 1.8
  \draw[thick, BrickRed] (0, -1.62) -- (2.8, 3.42);

  % --- y-intercept of tangent = -kappa*(p0)
  \node[circle, fill=BrickRed, inner sep=1.1pt] at (0, -1.62) {};
%  \draw[thick, BrickRed] (-0.15, -1.62) -- (-0.95, -1.62)
%        node[left, BrickRed] {$-\kappa^*(p_0)$};
  \node[left, BrickRed] at (-0.15, -1.62) {$-\kappa^*(p_0)$};
%   \draw[BrickRed, thick, dashed] (-0.10, -1.62) -- (-0.25, -1.62);
% \draw[BrickRed, thick] (-0.05, -1.62) -- (-0.25, -1.62);
\draw[
  BrickRed,
  line width=0.1pt,
  dash pattern=on 0.01pt off 0.5pt,
  line cap=round
] (-0.05,-1.62) -- (-0.25,-1.62);
  % --- Slope label on  tangent
  \node[BrickRed, right] at (2.70, 3.25) {slope $p_0 = \kappa'(q_0)$};

  % --- Brace indicating kappa(q0)
  \draw[decorate, decoration={brace,amplitude=4pt,mirror}, gray!70]
        (1.95, 0.02) -- (1.95, 1.6) node[midway, right=4pt, black] {$\kappa(q_0)$};
\end{tikzpicture}
\caption{Legendre-Fenchel geometry for the quadratic cost $\kappa(q) = q^2/2$.
The tangent to $\kappa$ at inventory level $q_0$ has slope $p_0 = \kappa'(q_0)$, which is the marginal price, and vertical intercept $-\kappa^*(p_0)$.
Sweeping $q_0$ across all inventory levels traces out both $\kappa$ and its conjugate $\kappa^*$ simultaneously: 
the slopes recover the forward map $q_0 \mapsto p_0$, and the intercepts recover the dual values $p_0 \mapsto \kappa^*(p_0)$.
The quadratic is self-dual: $\kappa^*(p) = p^2/2$.}
\label{fig:lf-geometry}
\end{figure}

\subsection{The Market  Making Interpretation of the Special Cases}

The conjugate examples  in Section~\ref{subsec:duality-conjugate} are now reread as statements about market making.

\paragraph{Linear cost (C5).}
Under C5, the maker has constant marginal cost.
The cost function and its conjugate satisfy the canonical pair:  $\kappa(q) = \kappa_{\mathrm{inv}} q$ and $\kappa^*(\tilde p) = 0$ for $\tilde p \leq \kappa_{\mathrm{inv}}$, with $\kappa^*(\tilde p) = +\infty$ for $\tilde p > \kappa_{\mathrm{inv}}$.
The market making interpretation is that the maker is willing to fill at any half-spread up to $\kappa_{\mathrm{inv}}\sigma^2/\lambda$ (the marginal cost per trade); beyond this half-spread, she would prefer to scale up her position without bound, but the indifference condition fixes the unique half-spread at exactly $\kappa_{\mathrm{inv}}\sigma^2/\lambda$.

\paragraph{Quadratic cost.}
Under a quadratic cost $\kappa(q) = \kappa_{\mathrm{inv}} q^2/2$, the order book at level $n$ posts a half-spread proportional to $|q + n - 1|$, by Theorem~\ref{thm:order-book}.
The depth profile is linear in $n$, with slope $2\kappa_{\mathrm{inv}}\sigma^2 / \lambda$.
The dual structure says that, equivalently, the maker's inventory at half-spread $\tilde p$ is  proportional to $\tilde p$,  with the inverse proportionality.
The two pictures are equivalent under Legendre-Fenchel: 
 one looks at inventory as a  function of half-spread, the other looks at half-spread as a function of inventory.

\paragraph{Power cost.}
Under a power cost $\kappa(q) = q^p / p$ for $p > 1$, the order book depth profile  is concave in $n$  for $p < 2$ (sublinear cost growth) and convex in $n$ for $p > 2$ (superlinear cost growth).
The dual picture has the inverse curvature, characterized by the conjugate exponent $p^*$.
The maker's structural cost exponent and the dual exponent are connected by the conjugate-exponent  relation $1/p + 1/p^* = 1$.

\begin{remark}[Recovery and Duality]
\label{rem:recovery-duality}
Theorem~\ref{thm:duality} unifies the recovery result (Theorem~\ref{thm:recovery}) with the duality structure.
The spread-cost relation $\tilde p = \kappa'(q)$ is the first-order  expression of Legendre-Fenchel duality;
  the half-spread reveals the derivative of $\kappa$.
The Recovery Theorem is the integral expression of the same duality; $\kappa$ is recovered by integrating $\kappa'$ across the depth profile.
Both $\kappa$ and $\kappa^*$ are recoverable from observable market data, since the depth profile gives $\kappa'$ and the conjugate $\kappa^*$ is determined by $\kappa$ via the conjugacy operation.
\end{remark}

The duality structure places market making within the broader mathematics of convex analysis.
The maker's inventory  cost and her marginal-price function are conjugate variables under the Legendre-Fenchel transform.
The limit order book reveals both.
The reformulation has  aesthetic and mathematical content: it identifies market making as a particular instance of a general mathematical structure that recurs across physics (thermodynamic duality), economics (consumer theory), and optimization theory (dual programs).

%-------------------------------------------------------------------
\section{Phase Transitions and the Three Liquidity Regimes}
\label{sec:phase}
%-------------------------------------------------------------------

We now turn to a sharp empirical implication of the characterization.
Combining the spread formula~\eqref{eq:spread-decomposition} with a disagreement bound between maker and counterparties yields a condition for market viability.
The condition takes the form of a phase transition.
A sharp boundary separates a functioning regime from a frozen regime, with the functioning regime partitioning into substructures by which  spread channel dominates.
The phase boundary  and the within-region substructure together provide a structural account of liquidity at the level of the quoting rule itself.

\subsection{The Disagreement  Bound}

Consider a counterparty with belief $\mu_C$ about the asset value.
The counterparty buys at the maker's ask $a$ only if buying is profitable, that is, $\mu_C > a$.
This requires the disagreement $\mu_C - \mu$ to exceed the maker's  markup $a -  \mu = \delta/2$.
Symmetrically, the counterparty sells at the maker's bid $b$ only if $\mu_C < b$, requiring the symmetric disagreement.
The maker, with belief $\mu$, posts a strictly positive spread (by definition of the codomain  $\mathcal{B}$), so a trade can occur only when some counterparty's disagreement exceeds the half-spread.

Recall from Section~\ref{sec:setup} that $\delta_{\max} > 0$ denotes the maximum admissible spread, defined as $2\max_C|\mu_C - \mu|$, twice the maximum belief disagreement between the maker and any counterparty in the trading environment.
The factor of two is precisely  the half-spread on either side, so the condition that some counterparty's disagreement exceed the half-spread is equivalent to $\delta < \delta_{\max}$.
For the market to be functioning, that is, to generate positive trade volume,  the equilibrium spread must therefore satisfy $\delta < \delta_{\max}$.
When $\delta \geq \delta_{\max}$, no counterparty has sufficient disagreement to find trading profitable, and the market freezes.

\subsection{The Phase Transition}

Substituting the equilibrium spread from~\eqref{eq:spread-decomposition} into the viability condition $\delta < \delta_{\max}$ gives a sharp characterization.

\begin{theorem}[Liquidity Phase Transition]
\label{thm:phase}
Given the maximum admissible spread $\delta_{\max} > 0$, the structural parameters $\kappa_{\mathrm{inv}}, \kappa_{\mathrm{adv}} > 0$, the volatility $\sigma^2 > 0$, and the informed-trader fraction $\pi \in [0,1]$, the market is functioning if and only if
\begin{equation}
\label{eq:phase-condition}
  \frac{\kappa_{\mathrm{inv}}}{\lambda} + \kappa_{\mathrm{adv}}\pi  < \frac{\delta_{\max}}{2\sigma^2}.
\end{equation}
\end{theorem}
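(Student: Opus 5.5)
The plan is to combine the closed-form spread from the Full Characterization with the viability criterion of the disagreement bound, and then reduce the resulting inequality algebraically. First I would make the notion of ``functioning'' precise: following Section~\ref{sec:setup} and the preceding subsection, the market is functioning iff some counterparty's disagreement $|\mu_C - \mu|$ strictly exceeds the half-spread $\delta/2$, which by the definition $\delta_{\max} = 2\max_C|\mu_C-\mu|$ is equivalent to $\delta < \delta_{\max}$. This step should be stated as a two-way implication. If $\delta < \delta_{\max}$, the maximizing counterparty $C$ has $|\mu_C - \mu| > \delta/2$, so she trades at the ask or the bid and volume is positive. If $\delta \ge \delta_{\max}$, every $C$ has $|\mu_C - \mu| \le \delta/2$, so no one finds $\mu_C > a$ or $\mu_C < b$ profitable.

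Second, I would substitute the spread. By Theorem~\ref{thm:full}, and specifically the decomposition \eqref{eq:spread-decomposition} obtained from Lemmas~\ref{lem:inventory-spread} and~\ref{lem:adverse-spread}, the spread is $\delta(s) = 2\sigma^2(\kappa_{\mathrm{inv}}/\lambda + \kappa_{\mathrm{adv}}\pi)$. Since $\sigma^2 > 0$, dividing $\delta < \delta_{\max}$ by $2\sigma^2$ preserves the strict inequality and gives exactly \eqref{eq:phase-condition}. Each step is an equivalence, so both directions follow at once.

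The main obstacle is the first step, not the algebra. There is a subtlety about the skew: at $q \neq 0$ the quotes are centered at $m = \mu - \gamma_0\sigma^2 q$ rather than at $\mu$. The trading condition is then $\mu_C > a$ or $\mu_C < b$, and this is not literally symmetric in the disagreement $\mu_C - \mu$. I would handle it as the paper's setup does. The admissibility criterion $\delta < \delta_{\max}$ is defined as the viability condition in Section~\ref{sec:setup}, measured against the half-spread. I would therefore state the theorem relative to that definition, noting that at $q = 0$ (Corollary~\ref{cor:skew-symmetry}, so $m = \mu$) the criterion is exactly the condition $\mu_C > a$ or $\mu_C < b$. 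I would also remark that the maximum in $\delta_{\max}$ is assumed to be attained, so that the strict inequality is the correct boundary, and that in the frozen regime, equality included, there is no trade.
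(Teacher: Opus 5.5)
Your proposal is correct and follows the paper's own argument. The paper likewise treats ``functioning'' as $\delta<\delta_{\max}$, substitutes $\delta = 2\sigma^2(\kappa_{\mathrm{inv}}/\lambda+\kappa_{\mathrm{adv}}\pi)$ from \eqref{eq:spread-decomposition}, and divides by $2\sigma^2>0$. Your remark about the skew at $q\neq 0$ is a fair refinement the paper glosses over (it writes $a-\mu=\delta/2$, which is exact only at $q=0$), and reading the criterion as the setup's definition is the right way to handle it.
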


\smallskip
\noindent\textit{Axiom dependency.} Theorem~\ref{thm:phase} relies on the core (Layer~I) together with the structural choice (A3).

\begin{proof}
The market functions if and   only if the equilibrium spread satisfies $\delta < \delta_{\max}$.
By~\eqref{eq:spread-decomposition}, $\delta = 2\sigma^2(\kappa_{\mathrm{inv}}/\lambda + \kappa_{\mathrm{adv}}\pi)$.
Substituting and dividing by $2\sigma^2 > 0$ gives the stated inequality.
\end{proof}

\subsection{Critical Thresholds and the Phase Diagram}

The phase transition condition~\eqref{eq:phase-condition} admits three natural extreme cases.

\paragraph{Critical liquidity.}
At fixed $\pi$, the critical arrival intensity is
\[
  \lambda^*(\pi) = \frac{\kappa_{\mathrm{inv}}}{\delta_{\max}/(2\sigma^2) - \kappa_{\mathrm{adv}}\pi}.
\]
For $\lambda > \lambda^*$ the market functions; for  $\lambda < \lambda^*$ the market freezes.
The critical liquidity depends on the informed fraction; more informed flow raises the liquidity threshold.

\paragraph{Critical informed fraction.}
At $\lambda \to \infty$ (perfectly liquid), the inventory term in~\eqref{eq:phase-condition} vanishes, and the boundary depends on adverse  selection alone:
\[
  \pi^* = \frac{\delta_{\max}}{2\kappa_{\mathrm{adv}}\sigma^2}.
\]
For $\pi > \pi^*$ the market is frozen regardless of liquidity.
The adverse-selection premium alone exceeds $\delta_{\max}$;  no level of trade frequency can sustain trade.
If $\delta_{\max} > 2\kappa_{\mathrm{adv}}\sigma^2$, then $\pi^* > 1$, and adverse selection alone cannot freeze the market regardless of the informed fraction; the frozen regime then requires sufficiently low $\lambda$.

\paragraph{Critical volatility.}
At fixed $(\lambda, \pi)$, the maximum volatility consistent with a functioning market is
\[
  \sigma^2_{\max}(\lambda, \pi) = \frac{\delta_{\max}}{2(\kappa_{\mathrm{inv}}/\lambda + \kappa_{\mathrm{adv}}\pi)}.
\]
Beyond $\sigma^2_{\max}$ the market freezes.
In particular, a spike in uncertainty can  push the market  across the phase boundary even with constant $(\lambda, \pi)$.

\subsection{The Three Liquidity Regimes}

Within the functioning  region, the spread admits a substructure governed by which of its two components dominates.

\needspace{14\baselineskip}
\begin{theorem}[Three Liquidity Regimes]
\label{thm:three-regimes}
Define the crossover liquidity $\lambda_\times = \kappa_{\mathrm{inv}}/(\kappa_{\mathrm{adv}}\pi)$.
Assume the ordering condition
\[
  \kappa_{\mathrm{adv}}\pi < \frac{\delta_{\max}}{4\sigma^2},
\]
equivalently $\lambda^*(\pi) < \lambda_\times$.
The parameter space then partitions into three liquidity regimes.

\emph{Frozen regime} ($\lambda < \lambda^*(\pi)$): the equilibrium spread exceeds $\delta_{\max}$, so no trade occurs.

\emph{Inventory-dominated regime} ($\lambda^*(\pi) < \lambda  < \lambda_\times$):  the inventory spread $2\kappa_{\mathrm{inv}}\sigma^2/\lambda$ exceeds the adverse-selection spread $2\kappa_{\mathrm{adv}}\pi\sigma^2$.
The market is functioning, but inventory cost is the binding consideration in the maker's quoting.

\emph{Adverse-selection-dominated regime} ($\lambda > \lambda_\times$): the adverse-selection spread exceeds the inventory spread.
As $\lambda \to \infty$, the spread converges to $2\kappa_{\mathrm{adv}}\pi\sigma^2$, the pure adverse-selection spread.
\end{theorem}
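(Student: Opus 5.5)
The plan is to reduce everything to comparisons of the two explicit components of the spread decomposition~\eqref{eq:spread-decomposition}, using the phase condition of Theorem~\ref{thm:phase}. Throughout I take $\pi > 0$, so that $\lambda_\times = \kappa_{\mathrm{inv}}/(\kappa_{\mathrm{adv}}\pi)$ is finite and positive. I also need $\kappa_{\mathrm{adv}}\pi < \delta_{\max}/(2\sigma^2)$ for $\lambda^*(\pi)$ to be a positive finite threshold, and the ordering hypothesis $\kappa_{\mathrm{adv}}\pi < \delta_{\max}/(4\sigma^2)$ is strictly stronger than this. I write $D := \delta_{\max}/(2\sigma^2)$ and $A := \kappa_{\mathrm{adv}}\pi$, so the hypothesis reads $A < D/2$.

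The first step is to verify the claimed equivalence between $A < D/2$ and $\lambda^*(\pi) < \lambda_\times$. By definition $\lambda^*(\pi) = \kappa_{\mathrm{inv}}/(D - A)$ and $\lambda_\times = \kappa_{\mathrm{inv}}/A$. Both denominators are positive and $\kappa_{\mathrm{inv}} > 0$, so $\lambda^* < \lambda_\times$ holds if and only if $D - A > A$, that is, $A < D/2$.

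Next I locate each regime.

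\emph{Frozen regime.} By Theorem~\ref{thm:phase}, the market functions if and only if $\kappa_{\mathrm{inv}}/\lambda + A < D$. Rearranging, this is $\lambda > \kappa_{\mathrm{inv}}/(D-A) = \lambda^*(\pi)$. So for $\lambda < \lambda^*$ we have $\delta > \delta_{\max}$ and the market is frozen.

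\emph{Regimes inside the functioning region.} On $\lambda > \lambda^*$, the inventory spread $2\kappa_{\mathrm{inv}}\sigma^2/\lambda$ exceeds the adverse-selection spread $2A\sigma^2$ if and only if $\lambda < \lambda_\times$, after dividing by $2\sigma^2 > 0$. Since $\lambda^* < \lambda_\times$, the interval $(\lambda^*, \lambda_\times)$ is nonempty and forms the inventory-dominated regime. For $\lambda > \lambda_\times$ the reverse inequality holds. That set automatically lies inside the functioning region, because $\lambda_\times > \lambda^*$.

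The limit is immediate: $2\kappa_{\mathrm{inv}}\sigma^2/\lambda \to 0$ as $\lambda \to \infty$, so $\delta \to 2\kappa_{\mathrm{adv}}\pi\sigma^2$.

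The only delicate point is what "partitions" means. The boundary points $\lambda = \lambda^*$ and $\lambda = \lambda_\times$ are measure-zero ties. At $\lambda^*$ the spread equals $\delta_{\max}$, which is frozen under the convention $\delta \ge \delta_{\max}$ of Section~\ref{sec:phase}. At $\lambda_\times$ the two components are equal. I will state that the open intervals partition $\Rpp$ up to these two points.

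I will also remark on why the ordering hypothesis is needed. Without it we would have $\lambda_\times \le \lambda^*$, the inventory-dominated regime would be empty, and functioning markets would be adverse-selection-dominated throughout. The statement's limit depends only on the decomposition, so I do not expect any real obstacle beyond this bookkeeping of the boundary cases.
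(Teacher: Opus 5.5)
Your proposal is correct and follows the paper's proof essentially step for step: the frozen boundary comes from Theorem~\ref{thm:phase}, the crossover from equating $\kappa_{\mathrm{inv}}/\lambda$ with $\kappa_{\mathrm{adv}}\pi$, the ordering condition from the same chain of equivalences, and the limit from direct substitution. Your explicit treatment of $\pi > 0$ and of the two boundary ties is a small tidying-up that the paper leaves to its edge-case remark.
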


\smallskip
\noindent\textit{Axiom dependency.} Theorem~\ref{thm:three-regimes} relies on the core (Layer~I) together with the structural choice (A3).

\begin{proof}
 The frozen-functioning boundary is given by Theorem~\ref{thm:phase}.
The inventory-vs-adverse-selection boundary is the equation $\kappa_{\mathrm{inv}}/\lambda = \kappa_{\mathrm{adv}}\pi$, which solves to $\lambda = \kappa_{\mathrm{inv}}/(\kappa_{\mathrm{adv}}\pi) = \lambda_\times$.
The ordering condition $\kappa_{\mathrm{adv}}\pi < \delta_{\max}/(4\sigma^2)$ is equivalent to $\lambda^*(\pi) < \lambda_\times$, since
\begin{align*}
  \lambda^*(\pi) < \lambda_\times
    &\iff \frac{\kappa_{\mathrm{inv}}}{\delta_{\max}/(2\sigma^2) - \kappa_{\mathrm{adv}}\pi} < \frac{\kappa_{\mathrm{inv}}}{\kappa_{\mathrm{adv}}\pi} \\
    &\iff \kappa_{\mathrm{adv}}\pi < \frac{\delta_{\max}}{2\sigma^2} - \kappa_{\mathrm{adv}}\pi \\
    &\iff \kappa_{\mathrm{adv}}\pi < \frac{\delta_{\max}}{4\sigma^2}.
\end{align*}
Under this ordering condition, the two thresholds partition $\Rpp$ into the three intervals stated.
The asymptotic identity $\lim_{\lambda\to\infty}\delta = 2\kappa_{\mathrm{adv}}\pi\sigma^2$ follows by direct substitution into~\eqref{eq:spread-decomposition}.
\end{proof}

\begin{remark}[The intermediate adverse-selection regime]
\label{rem:two-regime}
When the ordering condition $\kappa_{\mathrm{adv}}\pi < \delta_{\max}/(4\sigma^2)$  fails but the phase condition $\kappa_{\mathrm{adv}}\pi < \delta_{\max}/(2\sigma^2)$ still holds, the market functions for sufficiently large $\lambda$ but the crossover liquidity $\lambda_\times$ lies in the frozen region.
At $\lambda = \lambda_\times$ the spread equals $4\kappa_{\mathrm{adv}}\pi\sigma^2 \geq \delta_{\max}$.
The inventory-dominated functioning interval is empty in  this case, and every functioning $\lambda$ has spread dominated by the adverse-selection component.
The three-regime decomposition then degenerates to a two-regime decomposition: frozen ($\lambda < \lambda^*(\pi)$) and adverse-dominated ($\lambda > \lambda^*(\pi)$).
\end{remark}

\noindent\textit{Edge case.} 
The crossover liquidity $\lambda_\times = \kappa_{\mathrm{inv}}/(\kappa_{\mathrm{adv}}\pi)$ is well-defined only when $\pi > 0$;
at $\pi = 0$ the adverse-selection spread vanishes and the regime decomposition collapses to a single functioning regime, with $\lambda_\times = +\infty$ formally and the phase condition reducing to $\kappa_{\mathrm{inv}}/\lambda < \delta_{\max}/(2\sigma^2)$.

The three-regime structure has direct empirical content.
Empirical markets  can be classified by which regime they inhabit; the boundaries are sharp and the regime identification is a straightforward exercise.
The structural  drivers are clear: increases in volatility ($\sigma^2 \uparrow$),  informed flow ($\pi \uparrow$), or carrying cost ($\kappa_{\mathrm{inv}} \uparrow$), or decreases in arrival intensity ($\lambda \downarrow$), all push a  market   toward the frozen boundary. 
Figure~\ref{fig:phase-diagram} shows the partition.

\begin{figure}[t]
\centering
\begin{tikzpicture}[scale=3.3,>=Stealth,font=\small]
    % --- Frozen region (left of lambda*(pi)): pi from 0 to 1, lambda from 0 to 1/(2-pi)
  \fill[Salmon!40]
    (0, 0) --
    plot[domain=0:1, samples=50, smooth] ({1/(2-\x)}, \x) --
    (0, 1) -- cycle;

  % --- Inventory-dominated region (between lambda*(pi) and lambda_x(pi))
  \fill[ProcessBlue!25]
    plot[domain=0:1, samples=50, smooth] ({1/(2-\x)}, \x) --
    (1, 1) --
    plot[domain=1:0.25, samples=50, smooth] ({1/\x}, \x) --
    (4, 0) -- (0.5, 0) -- cycle;

  % --- Adverse-selection-dominated  region (right of lambda_x(pi) for pi >= 0.25)
  \fill[ForestGreen!25]
    plot[domain=0.25:1, samples=50, smooth] ({1/\x}, \x) --
    (4, 1) -- (4, 0.25) -- cycle;

  % --- Axes
  \draw[->, thick] (0, 0) -- (4.3, 0) node[right] {$\lambda$};
  \draw[->, thick] (0, 0) -- (0, 1.15) node[above] {$\pi$};
  \draw[dashed, gray] (0, 1) -- (4.2, 1);
  \node[left] at (0, 1) {$1$};
  \node[below left] at (0, 0) {$0$};

  % ---  Boundary curves
  \draw[very thick, BrickRed, domain=0:1, samples=50, smooth]
    plot ({1/(2-\x)}, \x);
  \draw[very thick, NavyBlue, domain=0.25:1, samples=50, smooth]
    plot ({1/\x}, \x);

  % --- Region labels
  \node at (0.32, 0.5) {Frozen};
  \node[align=center] at (1.55, 0.32) {Inventory-\\dominated};
  \node[align=center] at (3, 0.65) {Adverse-selection-\\dominated};

  % --- Curve labels
  \node[BrickRed, right] at (0.465, 0.76) {$\lambda^*(\pi)$};
  \node[NavyBlue, right] at (2.65, 0.265) {$\lambda_\times(\pi)$};
\end{tikzpicture}
\caption{Phase diagram in $(\lambda, \pi)$ space. 
 On the left, the frozen region $\{\lambda < \lambda^*(\pi)\}$ (salmon) lies behind the phase boundary $\lambda^*(\pi) = \kappa_{\mathrm{inv}}/(\delta_{\max}/(2\sigma^2) - \kappa_{\mathrm{adv}}\pi)$; the functioning region lies to its right.
The functioning region partitions further into an inventory-dominated regime $\{\lambda^*(\pi) < \lambda < \lambda_\times(\pi)\}$ (blue) and an adverse-selection-dominated regime $\{\lambda > \lambda_\times(\pi)\}$ (green), with crossover curve 
 $\lambda_\times(\pi) = \kappa_{\mathrm{inv}}/(\kappa_{\mathrm{adv}}\pi)$. 
The depicted curves use illustrative parameter values $\kappa_{\mathrm{inv}} = \kappa_{\mathrm{adv}} = \sigma^2 = 1$ and $\delta_{\max} = 4$, under which the ordering condition $\kappa_{\mathrm{adv}}\pi < \delta_{\max}/(4\sigma^2)$ of Theorem~\ref{thm:three-regimes} holds throughout $\pi \in (0, 1)$; under failure of that condition (Remark~\ref{rem:two-regime}), the inventory-dominated region is empty and the diagram collapses to a frozen/adverse-dominated dichotomy.}
\label{fig:phase-diagram}
\end{figure}

\begin{remark}[Market Freezes as Phase Transitions]
\label{rem:freezes}
The phase transition theorem provides a structural account of empirically observed market freezes.
During the 2008 financial crisis, several markets exhibited episodes in which trading ceased despite the apparent presence  of willing makers.
Such freezes are difficult to rationalize within  standard optimization models, since each maker individually has positive expected profit from quoting.
Within the axiomatic framework, freezes correspond to crossings of the phase boundary~\eqref{eq:phase-condition}: a spike in volatility, in adverse selection, or  in inventory cost (or a collapse in trade arrival intensity) can push the implied spread beyond $\delta_{\max}$, eliminating the willing-counterparty population.
The maker's individual profitability is not the issue.
The aggregate market clears only  when the spread is below the maximum disagreement, and that  condition can fail even when individual makers are profitable in expectation.
\end{remark}

\subsection{Information-Minimizing Liquidity}
\label{subsec:info-max}

The crossover liquidity $\lambda_\times = \kappa_{\mathrm{inv}}/(\kappa_{\mathrm{adv}}\pi)$ defined as the inventory-versus-adverse-selection boundary in Theorem~\ref{thm:three-regimes} admits a second, distinct characterization in information-theoretic terms.

Each trade conveys information about the asset value $v$.
We derive the per-trade information content under an explicit illustrative model of Bayesian update, which we now make precise.

\paragraph{Gaussian-update model.}
We adopt the following modeling premise, used solely to derive a concrete form for   the per-trade information content and the resulting information rate.
\begin{enumerate}[label=(M\arabic*),itemsep=0pt]
\item The maker's prior over $v$ is $\mathcal{N}(\mu, \sigma^2)$.
\item A counterparty trade at  the ask $a$ (respectively the bid $b$) is interpreted as evidence that the counterparty's reservation value exceeds $a$
  (respectively falls below $b$), and induces a posterior update on $v$.
\item Under any reasonable Gaussian update, the Kullback-Leibler divergence between the posterior and the prior is proportional to $(a - \mu)^2/\sigma^2$ to leading order in $(a-\mu)/\sigma$, that is, the per-trade information  content scales as $(a-\mu)^2/\sigma^2$ with a constant of proportionality $\kappa_{\mathrm{info}} > 0$ that depends on the update model but not on the trading environment $(\lambda, \sigma^2)$.
\end{enumerate}

Under (M1)--(M3), the information content of one trade is, to leading order,
\[
  \kappa_{\mathrm{info}}\,\frac{(a - \mu)^2}{\sigma^2}.
\]
At zero inventory and zero skew, $a - \mu = \delta/2$, giving information content  $\kappa_{\mathrm{info}}\,(\delta/2)^2/\sigma^2$ per trade.
Multiplying by the arrival intensity $\lambda$ gives the information rate
\[
  R_{\mathrm{info}}(\lambda) = \kappa_{\mathrm{info}}\,\frac{\lambda\,(\delta/2)^2}{\sigma^2}.
\]
The constant $\kappa_{\mathrm{info}}$ does not depend on $\lambda$ or on the structural parameters $(\kappa_{\mathrm{inv}}, \kappa_{\mathrm{adv}})$, so the location of any stationary point of $R_{\mathrm{info}}$ in $\lambda$ is unaffected by $\kappa_{\mathrm{info}}$.
Substituting the equilibrium spread $\delta = 2\sigma^2(\kappa_{\mathrm{inv}}/\lambda + \kappa_{\mathrm{adv}}\pi)$ from~\eqref{eq:characterization}, $(\delta/2)^2 = \sigma^4(\kappa_{\mathrm{inv}}/\lambda + \kappa_{\mathrm{adv}}\pi)^2$, and
\[
  R_{\mathrm{info}}(\lambda) = \kappa_{\mathrm{info}}\,\lambda\sigma^2\left(\frac{\kappa_{\mathrm{inv}}}{\lambda} + \kappa_{\mathrm{adv}}\pi\right)^2.
\]

\begin{theorem}[Minimum Information Rate at the Crossover]
\label{thm:info-max}
Assume (M1)--(M3); these are modeling premises external to the axiom system.
Under (M1)--(M3) the information rate $R_{\mathrm{info}}(\lambda)$  attains a unique stationary point at the crossover liquidity $\lambda = \lambda_\times = \kappa_{\mathrm{inv}}/(\kappa_{\mathrm{adv}}\pi)$ of Theorem~\ref{thm:three-regimes}.
At $\lambda_\times$ the two spread components are equal:
the inventory component $2\kappa_{\mathrm{inv}}\sigma^2/\lambda_\times$ coincides with the adverse-selection component $2\kappa_{\mathrm{adv}}\pi\sigma^2$.
The stationary  point is a minimum, and $R_{\mathrm{info}}$ is U-shaped in $\lambda$ on $\Rpp$.
The location of the minimum is  independent of the proportionality constant $\kappa_{\mathrm{info}}$ in (M3).
\end{theorem}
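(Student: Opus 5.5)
The plan is to treat $R_{\mathrm{info}}$ as an explicit one-variable function on $\Rpp$ and do elementary calculus. I would take $\pi > 0$ throughout, since $\lambda_\times$ is defined only then (see the edge case after Remark~\ref{rem:two-regime}). First I expand the square in the displayed formula:
\[
  R_{\mathrm{info}}(\lambda) = \kappa_{\mathrm{info}}\sigma^2\left(\frac{\kappa_{\mathrm{inv}}^2}{\lambda} + 2\kappa_{\mathrm{inv}}\kappa_{\mathrm{adv}}\pi + \kappa_{\mathrm{adv}}^2\pi^2\lambda\right).
\]
This form isolates the positive multiplicative constant $\kappa_{\mathrm{info}}\sigma^2$. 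The independence of the minimizer from $\kappa_{\mathrm{info}}$ (and from $\sigma^2$) is then immediate, because multiplying by a positive constant does not move stationary points or change their nature.

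Next I differentiate:
\[
  R_{\mathrm{info}}'(\lambda) = \kappa_{\mathrm{info}}\sigma^2\left(\kappa_{\mathrm{adv}}^2\pi^2 - \frac{\kappa_{\mathrm{inv}}^2}{\lambda^2}\right).
\]
On $\Rpp$ this vanishes iff $\lambda = \kappa_{\mathrm{inv}}/(\kappa_{\mathrm{adv}}\pi) = \lambda_\times$, taking the positive root. This gives uniqueness of the stationary point.

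The sign of $R_{\mathrm{info}}'$ is negative for $\lambda < \lambda_\times$ and positive for $\lambda > \lambda_\times$. Hence $R_{\mathrm{info}}$ is strictly decreasing and then strictly increasing, so $\lambda_\times$ is the global minimum. As a check, $R_{\mathrm{info}}'' = 2\kappa_{\mathrm{info}}\sigma^2\kappa_{\mathrm{inv}}^2/\lambda^3 > 0$ gives strict convexity. The limits $R_{\mathrm{info}} \to \infty$ as $\lambda \to 0^+$ (from the $1/\lambda$ term) and as $\lambda \to \infty$ (from the linear term) complete the U-shape claim.

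For the equality of components, I substitute $\lambda_\times$: $2\kappa_{\mathrm{inv}}\sigma^2/\lambda_\times = 2\kappa_{\mathrm{inv}}\sigma^2\kappa_{\mathrm{adv}}\pi/\kappa_{\mathrm{inv}} = 2\kappa_{\mathrm{adv}}\pi\sigma^2$. This matches the boundary equation in the proof of Theorem~\ref{thm:three-regimes}. One can also note, via AM--GM, that the $\lambda$-dependent part $\kappa_{\mathrm{inv}}^2/\lambda + \kappa_{\mathrm{adv}}^2\pi^2\lambda$ is minimized exactly when its two terms are equal, which is precisely component equality.

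There is no real obstacle here. The only points needing care are restricting to $\pi > 0$, and being explicit that (M3) fixes $\kappa_{\mathrm{info}}$ independently of $\lambda$, so that $R_{\mathrm{info}}$ really is the stated function of $\lambda$ alone.
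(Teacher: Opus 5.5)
Your argument is correct and complete, but it takes a slightly different route from the paper. The paper factors out $\kappa_{\mathrm{info}}\sigma^2$ and substitutes $u=\sqrt{\lambda}$ to write $f=(\kappa_{\mathrm{inv}}/u+\kappa_{\mathrm{adv}}\pi u)^2$. It then minimizes the bracket by AM--GM at $u^*=\sqrt{\kappa_{\mathrm{inv}}/(\kappa_{\mathrm{adv}}\pi)}$, notes divergence at both ends, and transfers the conclusion to $f$ because squaring is monotone on $\Rp$. You instead expand the square and differentiate.

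Your route is more explicit on exactly the claims the paper treats most briskly. The single sign change of $R_{\mathrm{info}}'\propto\kappa_{\mathrm{adv}}^2\pi^2-\kappa_{\mathrm{inv}}^2/\lambda^2$ proves outright that $\lambda_\times$ is the \emph{only} stationary point and that $R_{\mathrm{info}}$ is strictly decreasing, then strictly increasing. AM--GM plus divergence at the ends only identifies the unique global minimizer; the paper covers uniqueness of the stationary point with the phrase that $f$ has ``the same critical structure as the bracket''.

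The paper's route is derivative-free. Through the AM--GM equality case it gives, in one stroke, both the minimum value $4\kappa_{\mathrm{info}}\sigma^2\kappa_{\mathrm{inv}}\kappa_{\mathrm{adv}}\pi$ and the link to component equality. Your expanded form also shows the $\sqrt{\lambda}$ substitution is dispensable: AM--GM applies directly to $\kappa_{\mathrm{inv}}^2/\lambda+\kappa_{\mathrm{adv}}^2\pi^2\lambda\ge 2\kappa_{\mathrm{inv}}\kappa_{\mathrm{adv}}\pi$, which is your closing remark.
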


\smallskip
\noindent\textit{Axiom dependency.} 
The proof rests on the axiomatic spread formula (Theorem~\ref{thm:full}) together with the Gaussian-update modeling premises (M1)--(M3); 
 the axioms supply the spread, the modeling premises supply the information-rate functional.
The information-rate coincidence with the crossover liquidity is a feature of the combined axiom-plus-information-model setup, not a free-standing consequence of the axiom system; a different information model would yield a different functional form for $R_{\mathrm{info}}$ and, in general, a different location of its stationary point.

\begin{proof}
Write $R_{\mathrm{info}}(\lambda) = \kappa_{\mathrm{info}}\sigma^2\,f(\lambda)$, where $f(\lambda) = \lambda(\kappa_{\mathrm{inv}}/\lambda + \kappa_{\mathrm{adv}}\pi)^2$.
We then substitute 
 $u = \sqrt{\lambda}$, which gives $f = (\kappa_{\mathrm{inv}}/u + \kappa_{\mathrm{adv}}\pi\,u)^2$.
The bracket $\kappa_{\mathrm{inv}}/u + \kappa_{\mathrm{adv}}\pi\,u$ is minimized by the arithmetic-geometric mean inequality   at $u^* = \sqrt{\kappa_{\mathrm{inv}}/(\kappa_{\mathrm{adv}}\pi)}$, where it equals $2\sqrt{\kappa_{\mathrm{inv}}\kappa_{\mathrm{adv}}\pi}$.
The bracket diverges as $u \to 0^+$ and as $u \to \infty$, since $1/u$ and $u$ each diverge at the respective ends.
Squaring is monotone on $\Rp$, so $f$ has the same critical structure as the bracket, minimized at $u^*$ and diverging at both ends.
In $\lambda$ coordinates, the minimum is at $\lambda_\times = (u^*)^2 = \kappa_{\mathrm{inv}}/(\kappa_{\mathrm{adv}}\pi)$.
\end{proof}

\begin{remark}[Crossover as Information Minimum]
\label{rem:info-min}
A structural reading of Theorem~\ref{thm:info-max} is the following.
The crossover liquidity marks the boundary between the inventory and adverse-selection regimes (Section~\ref{sec:phase}); under the leading-order Gaussian information model (M1)--(M3), it is also the \emph{least informative} liquidity.
In both limits, the information rate exceeds the balanced rate and in fact diverges.

That the regime boundary and the information minimum coincide does not follow from either  definition on its own.
Two distinct criteria, namely structural regime classification and information rate, pick out the same  liquidity under (M1)--(M3).
The coincidence suggests a connection between the spread decomposition and the information content of trades that the axiomatic theory makes precise, but the prediction is a joint feature of the axioms together with the leading-order Gaussian information model.
A different information model would generically place the minimum elsewhere; in particular, the leading-order approximation in (M3) breaks down where the spread approaches $\delta_{\max}$, which is the regime in which the predicted U-shape would be most testable.
Empirical or policy implications of the location of the minimum should accordingly be read as conditional on (M1)--(M3) rather than as direct consequences of the axiom system.
\end{remark}

%-------------------------------------------------------------------
\section{Comparative Statics}
\label{sec:comparative}
%-------------------------------------------------------------------

The closed-form characterization~\eqref{eq:characterization} makes the comparative statics of the quoting rule immediate.
We collect the nontrivial ones here.
Each comparative-statics result   follows by direct differentiation of~\eqref{eq:characterization}; the interpretive content is in the  structural reading of the  signs and magnitudes.

\subsection{The Mid-Quote}

The mid-quote depends on four of the five state variables; the dependence  on each follows directly from the characterization formula.
We collect the four partial derivatives below.

\paragraph{Belief.}
The derivative $\partial m/\partial \mu = 1$, since $m(s) = \mu - \gamma_0\sigma^2 q$ by (A2).
The mid-quote tracks the maker's belief one-for-one.

\paragraph{Inventory.}
The derivative $\partial m/\partial q = -\gamma_0\sigma^2 < 0$, by (A1) together with the characterization.
The slope identifies $\gamma_0$.
Empirically, this is the slope of the maker's quote-skew curve as her inventory varies.

\paragraph{Variance.}
The derivative $\partial m/\partial \sigma^2 =  -\gamma_0 q$.
The mid-quote shifts more aggressively under high volatility when the maker holds  inventory.
The sign depends on the sign of $q$.

\paragraph{Liquidity and information.}
The derivatives $\partial m/\partial \lambda = 0$ and $\partial m/\partial \pi = 0$, by (A5) and (A6).
The mid-quote does not depend on the  trading environment, conditional on the maker's belief and inventory.

\subsection{The Spread}

The spread depends on four of the five state variables, with $q$ acting only indirectly through the cost-function curvature, which vanishes under the linear (C5).
We collect the four partial derivatives below.

\paragraph{Inventory.}
The derivative $\partial \delta/\partial q = 0$.
The spread is constant across inventory levels in the linear-cost  case (C5): the maker's spread does not widen when she holds large inventory.
Under general convex cost (Section~\ref{sec:order-book}), the spread does widen with $|q|$ at the rate $\kappa''$, recovering an inventory-dependent spread within the general framework.

\paragraph{Variance.}
The derivative $\partial \delta/\partial \sigma^2 = 2(\kappa_{\mathrm{inv}}/\lambda + \kappa_{\mathrm{adv}}\pi) > 0$.
Higher volatility widens the spread uniformly.

\paragraph{Liquidity.}
The derivative $\partial \delta/\partial \lambda = -2 \kappa_{\mathrm{inv}}\sigma^2/\lambda^2 < 0$.
More frequent trades reduce the per-trade carrying cost and compress the  spread.
The functional form is the canonical $1/\lambda$~shape.

\paragraph{Information.}
The derivative $\partial \delta/\partial \pi = 2\kappa_{\mathrm{adv}}\sigma^2 > 0$.
More informed flow widens the spread  linearly.
This  identifies $\kappa_{\mathrm{adv}}$.

 \subsection{Cross-Partial Derivatives}

The first-order partial derivatives of $m$ and $\delta$ describe the marginal response of the quoting rule to each state variable  in isolation.
Cross-partial derivatives describe how  the marginal response to one variable depends on another, capturing  interaction effects.
We collect the nonzero cross-partials.

\paragraph{Inventory-variance interaction in the mid-quote.}
The cross-partial is equal to 
$\partial^2 m/\partial q \partial \sigma^2 = -\gamma_0$.
The slope of the mid-quote in inventory becomes more negative in absolute value as volatility rises.
Equivalently,  the maker's inventory-hedging response is more aggressive under high volatility.
The cross-partial identifies $\gamma_0$ from a volatility-interacted regression of $m$ on $q$.

\paragraph{Volatility-liquidity interaction in the spread.}
The cross-partial here is
\[
  \frac{\partial^2 \delta}{\partial \sigma^2 \, \partial \lambda} = -\frac{2\kappa_{\mathrm{inv}}}{\lambda^2} < 0.
\]
The sensitivity of the spread to volatility decreases with liquidity.
In liquid markets, the same volatility shock widens the spread by less than in  illiquid markets, because the inventory cost is diluted across more trades.

\paragraph{Volatility-information interaction in the spread.}
The cross-partial 
is equal to 
$\partial^2 \delta/\partial \sigma^2 \partial \pi = 2\kappa_{\mathrm{adv}} > 0$.
The sensitivity of the spread to volatility increases with the informed-trader fraction.
The cross-partial is constant in the state, identifying $\kappa_{\mathrm{adv}}$ from a volatility-interacted regression of $\delta$ on $\pi$.

\paragraph{Liquidity-information interaction in the spread.}
The cross-partial $\partial^2 \delta/\partial \lambda \partial \pi = 0$.
The sensitivities to $\lambda$ and to $\pi$ are structurally decoupled.
This is a consequence of additive separability (Theorem~\ref{thm:separability}).
Empirically, this cross-restriction is a testable implication of the axiom system.

\subsection{Elasticities}

The partial derivatives above quantify how the spread and mid-quote respond,  in absolute terms, to each parameter.
For empirical applications,  the corresponding elasticities are often more  directly interpretable, since they are dimensionless and unit-free.

\paragraph{Spread elasticity in  liquidity.}
\[
  \epsilon_{\delta, \lambda} := \frac{\partial \log \delta}{\partial \log \lambda} = -\frac{\kappa_{\mathrm{inv}}/\lambda}{\kappa_{\mathrm{inv}}/\lambda + \kappa_{\mathrm{adv}}\pi}.
\]
The elasticity equals $-1$ in the inventory-dominated regime ($\kappa_{\mathrm{inv}}/\lambda \gg \kappa_{\mathrm{adv}}\pi$) and approaches $0$ in the adverse-selection-dominated regime ($\kappa_{\mathrm{inv}}/\lambda \ll \kappa_{\mathrm{adv}}\pi$).
The  crossover happens at  the critical liquidity $\lambda_\times$ where the elasticity equals $-1/2$.
Empirically, $\epsilon_{\delta, \lambda}$ is identified from a  log-log regression of $\delta$ on  $\lambda$; the value of the elasticity localizes the market in the three-regime structure.

\paragraph{Spread elasticity in informed fraction.}
\[
  \epsilon_{\delta, \pi} := \frac{\partial \log \delta}{\partial \log \pi} = \frac{\kappa_{\mathrm{adv}}\pi}{\kappa_{\mathrm{inv}}/\lambda + \kappa_{\mathrm{adv}}\pi}.
\]
Mirror image of the previous case:  equals $0$ in the inventory-dominated regime and approaches $1$ in the adverse-selection-dominated regime.

\paragraph{Spread elasticity in volatility.}
\[
  \epsilon_{\delta, \sigma^2} := \frac{\partial \log \delta}{\partial \log \sigma^2} = 1.
\]
The unit elasticity in $\sigma^2$ is a direct consequence of (A3) (variance homogeneity).
Empirically, a deviation from unit elasticity is a violation of (A3).

\paragraph{Skew elasticity in inventory.}
\[
  \epsilon_{m - \mu, q} := \frac{\partial \log|m - \mu|}{\partial \log|q|} = 1.
\]
The skew is linear in $q$, so its elasticity in $q$ is unit.
A deviation from unit elasticity is a violation of (A2).

\subsection{Comparison with Existing Models}

The comparative statics enable direct comparison with the optimization-based models in the literature.

Ho-Stoll and Avellaneda-Stoikov each predict a spread of the form $\delta = A \sigma^2 + B$ for some structural  constants $A, B$;  Glosten-Milgrom predicts $\delta \propto \pi\sigma$, with $\sigma$-scaling rather  than $\sigma^2$-scaling (and is the $\beta = 1/2$ structural neighbor of our family, characterized by (A3s); see Section~\ref{sec:universal}).
Our characterization predicts $\delta = (2\kappa_{\mathrm{inv}}/\lambda + 2\kappa_{\mathrm{adv}}\pi)\sigma^2$, separating the variance   coefficient into an  inventory part and an adverse-selection part.
The decomposition is structural.
Each component scales differently with the structural variables.

In Ho-Stoll, the inventory cost coefficient enters both the spread and the  skew with the same parameter, so the two contributions cannot be separated empirically from a single regression.
In Avellaneda-Stoikov, the risk-aversion  parameter $\gamma$ enters both components similarly.
The axiomatic decomposition resolves this conflation.
Our $\gamma_0$ is the skew coefficient alone, identified from $\partial m/\partial q$.
Our $\kappa_{\mathrm{inv}}$ is the spread coefficient at $\pi = 0$, identified from a regression of $\delta$ on $1/\lambda$ holding $\pi$ at zero.
Our $\kappa_{\mathrm{adv}}$ is the slope of the spread in $\pi$,  identified from $\partial \delta/\partial \pi$.

The economic content is that what optimization-based models compress into one or two parameters, our axiomatic theory decomposes into three.
The reason is structural: the axiom system enforces separability (Theorem~\ref{thm:separability}), so different optimization-based parameters that mix preference, cost, and  information can be decomposed into the three orthogonal axiomatic parameters.

\subsection{An Empirical  Implementation Roadmap}
\label{subsec:empirical-roadmap}

The identification corollary (Corollary~\ref{cor:identification}) gives the formal mapping from observable quoting behavior to the three parameters.
We now describe a step-by-step implementation roadmap for an empirical researcher.
The roadmap assumes that the researcher has access to quote-level  data, that is, the bid and ask of one or more market makers, together with the inventory position and the relevant state variables.

\paragraph{Data requirements.}
The minimal data are time-stamped observations of:
(i) the maker's bid and ask, 
(ii) her inventory $q$, 
(iii) the public belief $\mu$ (typically the mid-quote of the consolidated order book or a Bayesian estimate), 
(iv) the price variance $\sigma^2$ (typically   an exponentially-weighted realized variance), 
 (v) the trade arrival intensity $\lambda$,   and 
(vi) a proxy for the informed-trader fraction $\pi$ (typically a volume-weighted order imbalance or PIN-style estimator).
Items (i)--(iii) are read off the quote stream and the maker's inventory record; (iv)--(vi) are constructed from market-level data.

\paragraph{Step 1, identification of $\gamma_0$.}
Run a regression of the mid-quote on inventory:
\[
  m_t - \mu_t = -\gamma_0 \sigma_t^2 q_t + \epsilon_t.
\]
The slope coefficient on $\sigma_t^2 q_t$ identifies $\gamma_0$.
The regression should include controls for the other state variables ($\lambda, \pi$);  if the axiom system holds, those controls should have zero coefficients ((A5) and (A6)).
A nonzero coefficient on $\lambda$ or $\pi$ is a violation of (A5) or (A6) and would call the axiom system into question.

\paragraph{Step 2, identification of $\kappa_{\mathrm{inv}}$.}
Restrict the sample to periods of low informed flow, e.g., $\pi$ below the 25th percentile of its empirical distribution.
Run a regression of the spread on inverse liquidity:
\[
  \delta_t = 2\kappa_{\mathrm{inv}}\sigma_t^2 / \lambda_t + \epsilon_t.
\]
The slope coefficient  on $\sigma_t^2/\lambda_t$ identifies $\kappa_{\mathrm{inv}}$.
The restriction   to low-$\pi$ periods isolates the inventory component of the spread.

\paragraph{Step 3, identification of $\kappa_{\mathrm{adv}}$.}
Run a regression of the spread on the informed-fraction proxy:
\[
  \delta_t = 2\kappa_{\mathrm{inv}}\sigma_t^2 / \lambda_t + 2\kappa_{\mathrm{adv}}\sigma_t^2 \pi_t + \epsilon_t.
\]
The slope coefficient on $\sigma_t^2 \pi_t$ identifies $\kappa_{\mathrm{adv}}$.
The two slopes (on $\sigma_t^2/\lambda_t$ and on $\sigma_t^2\pi_t$) are jointly identified.

\paragraph{Step 4, recovery of the cost function.}
If the limit order book is observable beyond the best-bid-best-ask, the entire inventory cost function $\kappa$ is recoverable.
Apply Theorem~\ref{thm:recovery} to the depth profile.
For each level $n$, the cost gradient $\kappa'(q + n - 1)$ is given by the residual of the $n$-th ask after subtracting the skew and adverse-selection components.
Integrate across $n$ to recover $\kappa$.

\paragraph{Step 5, diagnostic tests.}
The axiom system implies several testable cross-restrictions.
The variance homogeneity (A3) predicts unit elasticity of $\delta$ in $\sigma^2$ (Section~\ref{sec:comparative}).
By (A5) and (A6), $\lambda$ and $\pi$ have no effect on the mid-quote.
The complementarity (A8) predicts linearity of the spread in $\pi$.
Each restriction can be tested by adding the relevant interactions or nonlinearities to the regressions.
A failure of any restriction is a violation of the corresponding  axiom and a signal that the maker's behavior departs from the  axiomatic prediction.

\paragraph{Pitfalls.}
The empirical strategy outlined above is not pitfall-free, and several issues deserve explicit treatment.

\emph{Attenuation bias in $\kappa_{\mathrm{adv}}$.}
The informed-trader fraction $\pi$ is unobservable and must be proxied.
Standard candidates for $\hat\pi$ include the PIN of \citet{easley1996liquidity}, the volume-synchronized VPIN, and the imbalance-based estimators from \citet{glosten1985bid}-style structural models.
Each of these is a noisy estimate, with measurement error particularly pronounced in liquid markets, where the true $\pi$ is structurally small.
Classical measurement error then attenuates the OLS estimate of $\kappa_{\mathrm{adv}}$ toward zero by the factor $\sigma_\pi^2 / (\sigma_\pi^2 + \sigma_e^2)$,  with $\sigma_e^2$ the variance of the proxy error.
For widely used proxies the attenuation factor  is a substantial fraction of unity,  and the resulting estimate understates the true adverse-selection coefficient.
Three remedies are available: instrumental variables  using lagged proxies or order-flow imbalance shocks, 
 error-corrected regression with an external estimate of $\sigma_e^2$, and structural maximum-likelihood  estimation  of $\pi$ jointly  with $\kappa_{\mathrm{adv}}$.
In practice, the IV strategy is the most defensible.

\emph{Instrument quality for $\kappa_{\mathrm{inv}}$.}
Identification of $\kappa_{\mathrm{inv}}$ requires variation in $\lambda$ that is exogenous to the other state variables $(q, \sigma^2, \pi)$.
This identification requirement is a stringent one in practice.
 In observational data, $\lambda$ correlates with $\sigma^2$,  since volatility regimes typically drive trading-activity regimes, and $\lambda$ also correlates positively with $\pi$, since informed traders cluster in active periods.
A valid instrument must shift $\lambda$ while leaving $\sigma^2$ and $\pi$ unchanged.
Candidate instruments are exchange-system events, trading-calendar effects, and structural shifts in market-maker rebate regimes.
Identification of the regression of $\delta|_{\pi=0}$ on $1/\lambda$ rests on the instrument satisfying the standard exclusion restriction.
Weak-instrument bias is a real concern, and first-stage  F-statistics should be reported.

\emph{Boundary-regime nonlinearity.}
The linearity assumptions underlying the identification regressions are most questionable near the phase-transition boundary, where the spread approaches $\delta_{\max}$ and the market operates at the edge of viability.
The diagnostic tests of Section~\ref{sec:phase} are precisely targeted at this regime, so the identification regressions and the phase-transition tests  stand in tension.
 The regressions are most reliable interior to the boundary, the tests are most informative near it.
A practical resolution is to estimate parameters on interior data and then test the predicted boundary location out of sample.

Deriving  the empirical implementation strategy described above requires nothing beyond the axiomatic framework itself.
Each step uses one identification result from Corollary~\ref{cor:identification} or Theorem~\ref{thm:recovery}.
The diagnostic tests use the cross-restrictions   implied by Theorem~\ref{thm:separability} and the Universal Structure Theorem (Theorem~\ref{thm:universal}).
The axiomatic theory  thus delivers both a closed-form prediction and a concrete empirical-implementation strategy for testing and using that prediction.

%-------------------------------------------------------------------
\section{Further Structural Implications}
\label{sec:further-structural}
%-------------------------------------------------------------------

The full characterization yields two further structural implications, each derived directly from the closed-form   rule but distinct from comparative statics.
The first is the skew-spread ratio, a dimensionless quantity that   classifies the maker's positioning into structural regimes.
The second  is the martingale property of the inventory process under symmetric arrivals; this is a kinematic consequence of equal-intensity Poisson arrivals rather than of any axiom of the quoting rule.

\subsection{The Skew-Spread Ratio and Critical Inventory}
\label{subsec:skew-spread-ratio}

The closed-form characterization gives the   skew magnitude $\gamma_0\sigma^2|q|$ and the half-spread $\sigma^2(\kappa_{\mathrm{inv}}/\lambda + \kappa_{\mathrm{adv}}\pi)$.
Their ratio is a dimensionless quantity with structural  significance.

\begin{theorem}[Skew-Spread Ratio]
\label{thm:skew-spread-ratio}
Under the Full Characterization,  the ratio of skew magnitude to half-spread is
\[
  R(q, \lambda, \pi) = \frac{\gamma_0\,|q|}{ \kappa_{\mathrm{inv}}/\lambda + \kappa_{\mathrm{adv}}\pi}.
\]
The ratio is $\sigma^2$-invariant, the volatility factors  cancel.
\end{theorem}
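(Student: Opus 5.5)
The plan is to read both ingredients off the closed form~\eqref{eq:characterization} of Theorem~\ref{thm:full} and then simplify. First, the skew: Theorem~\ref{thm:full} (equivalently Lemma~\ref{lem:skew}) gives $m(s) - \mu = -\gamma_0\sigma^2 q$, so the skew magnitude is $|m(s)-\mu| = \gamma_0\sigma^2|q|$, using $\gamma_0>0$ and $\sigma^2>0$. Second, the half-spread: combining Lemma~\ref{lem:inventory-spread} with Lemma~\ref{lem:adverse-spread}, or reading it directly from the decomposition~\eqref{eq:spread-decomposition}, gives $\delta(s)/2 = \sigma^2(\kappa_{\mathrm{inv}}/\lambda + \kappa_{\mathrm{adv}}\pi)$.

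Next I will check that the ratio is well defined. The denominator satisfies $\kappa_{\mathrm{inv}}/\lambda + \kappa_{\mathrm{adv}}\pi \ge \kappa_{\mathrm{inv}}/\lambda > 0$, because $\kappa_{\mathrm{inv}},\lambda>0$ and $\pi\ge 0$. Equivalently, the half-spread is strictly positive, which is the same fact that places $Q(s)$ in the codomain $\Bid$. So the quotient exists at every state, including $\pi=0$ and $q=0$; at $q=0$ the ratio is $R=0$.

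Then I form the quotient,
\[
\frac{\gamma_0\sigma^2|q|}{\sigma^2(\kappa_{\mathrm{inv}}/\lambda+\kappa_{\mathrm{adv}}\pi)},
\]
and cancel the common factor $\sigma^2>0$. This yields the stated $R(q,\lambda,\pi)$. The $\sigma^2$-invariance claim is exactly this cancellation: the resulting expression has no $\sigma^2$ argument. It is worth noting in one sentence that this is the content of (A3): numerator and denominator are both homogeneous of degree one in $\sigma^2$, so their ratio is degree zero.

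There is no genuine obstacle here. The only point needing care is using $|q|$ rather than $q$, so that $R \ge 0$ holds for both long and short inventory. This relies on $\gamma_0>0$ from (A1) through Lemma~\ref{lem:skew}, which gives $|\!-\gamma_0\sigma^2 q| = \gamma_0\sigma^2|q|$.
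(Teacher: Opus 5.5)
Your proposal is correct and follows the paper's proof. Both read the skew magnitude $\gamma_0\sigma^2|q|$ and the half-spread $\sigma^2(\kappa_{\mathrm{inv}}/\lambda + \kappa_{\mathrm{adv}}\pi)$ off the Full Characterization and cancel the common factor $\sigma^2$; your explicit check that the denominator is strictly positive, so $R$ is defined at every state, is a small detail the paper leaves implicit.
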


\smallskip
\noindent\textit{Axiom dependency.} Theorem~\ref{thm:skew-spread-ratio} relies on the full eight-axiom system.

\begin{proof}
The skew magnitude is $\gamma_0\sigma^2|q|$ and the half-spread is $\sigma^2(\kappa_{\mathrm{inv}}/\lambda + \kappa_{\mathrm{adv}}\pi)$, so dividing one by the other gives the stated ratio with the $\sigma^2$ factors canceling.
\end{proof}

The ratio $R$ is a dimensionless summary statistic of the maker's positioning.
The critical inventory $q^*$ is defined by $R(q^*, \lambda, \pi) = 1$, namely
\[
  q^* = \frac{\kappa_{\mathrm{inv}}/\lambda + \kappa_{\mathrm{adv}}\pi}{\gamma_0}.
\]
For $|q| < q^*$, the half-spread exceeds the skew in absolute magnitude, and the maker operates in the \emph{spread-dominated regime}, in which  her quoting remains conservative in inventory positioning and her primary source of revenue is the bid-ask spread itself.
For $|q| > q^*$, the skew exceeds the half-spread in absolute magnitude,  and the maker enters the \emph{skew-dominated regime}, in which her quoting shifts to position aggressively against her current inventory and the skew adjustment becomes the principal mechanism of inventory management.

\begin{remark}[Regime Classification by $q^*$]
\label{rem:regime-q-star}
The critical inventory $q^*$ depends on the parameters $ (\gamma_0, \kappa_{\mathrm{inv}}, \kappa_{\mathrm{adv}})$ and the state variables $(\lambda, \pi)$, but not on $\sigma^2$.
Three illustrative cases organize the dependence.
First, when there is no adverse selection ($\pi = 0$), 
it holds that  $q^* = \kappa_{\mathrm{inv}}/(\gamma_0\lambda)$ and the critical inventory scales as $1/\lambda$, so that in more liquid markets the regime crosses earlier at smaller inventory.
Second, in the perfectly liquid limit ($\lambda \to \infty$), $q^* = \kappa_{\mathrm{adv}}\pi/\gamma_0$ and the critical inventory is determined by adverse selection alone.
Third, at zero inventory ($q = 0$), $R = 0$ and quoting is purely spread-driven. 
The critical inventory $q^*$ is the natural inventory band; a maker who wishes to remain in the spread-dominated regime restricts her inventory to $|q| \leq q^*$.
The band $|q| \leq q^*$ thus has an operational reading as the maker's working inventory range under business-as-usual quoting, with excursions beyond it pushing the rule into the inventory-management regime where the skew dominates.
\end{remark}

\subsection{Inventory as a Martingale}
\label{subsec:inventory-martingale}

The comparative statics of Section~\ref{sec:comparative} examine how the static quoting rule responds to changes in the parameters $\sigma^2$, $\lambda$, and $\pi$.
A complementary question is how the static rule, applied dynamically, unfolds over time.
The most immediate dynamic property of the inventory process under the static rule is its martingale character.
Under a symmetric trade-arrival environment, the inventory process is a martingale, namely the maker's expected inventory at any future time equals her current inventory.
The martingale property  is a kinematic consequence of the symmetric-arrival assumption; it holds for any quoting rule posted in that symmetric environment, including but not limited to the axiomatic rule of Theorem~\ref{thm:full}.

\begin{proposition}[Inventory Martingale under the Static Rule]
\label{prop:inventory-martingale}
Consider the continuous-time setting in which the maker's inventory $q_t$ evolves as a pure jump process with jump sizes $\pm 1$.
Buys and sells arrive as independent Poisson processes with intensities $\lambda/2$ each, so that the  total trade intensity is $\lambda$.
Under the Full Characterization, the inventory process $q_t$ is a martingale with respect to the trade-arrival filtration, namely
\[
  \E[q_{t+\Delta t} \mid q_t] = q_t \quad \text{for every } \Delta t > 0.
\]
\end{proposition}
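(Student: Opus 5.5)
Because the buy and sell arrival intensities are fixed at $\lambda/2$ each and are independent of the posted quotes, the inventory dynamics never see the quoting rule. The plan is therefore to identify $q_t$ as a compensated difference of two Poisson processes. Let $N^+_t$ count the arrivals that raise inventory (counterparty sells at the bid) and $N^-_t$ count the arrivals that lower it (counterparty buys at the ask). Then $q_{t} = q_0 + N^+_t - N^-_t$, and for $\Delta t > 0$ the increment is $q_{t+\Delta t} - q_t = (N^+_{t+\Delta t} - N^+_t) - (N^-_{t+\Delta t} - N^-_t)$. The Full Characterization (Theorem~\ref{thm:full}) enters only to guarantee that the quotes are well defined with $b < a$ at every state, so both sides of the book are always posted and every arrival executes as a unit fill. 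I will state this explicitly, since it is the only place the axioms are used. This matches the paper's remark that the property is kinematic.

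First, by the independent-increments property of Poisson processes, each increment is independent of $\mathcal{F}_t$, the filtration generated by $(N^+_s, N^-_s)_{s \le t}$. Each increment is Poisson with mean $(\lambda/2)\Delta t$. Second, compute
\[
  \E[q_{t+\Delta t} - q_t \mid \mathcal{F}_t] = \tfrac{\lambda}{2}\Delta t - \tfrac{\lambda}{2}\Delta t = 0 .
\]
This gives $\E[q_{t+\Delta t} \mid \mathcal{F}_t] = q_t$. Third, conditioning down to $\sigma(q_t) \subset \mathcal{F}_t$ by the tower property gives the displayed form of the statement. Integrability holds because $|q_t| \le |q_0| + N^+_t + N^-_t$, which has finite mean $|q_0| + \lambda t$.

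The main obstacle is interpretive rather than technical. One must justify that the arrival intensities do not depend on the state, and in particular not on the skewed quotes. Under the static rule the skew $-\gamma_0\sigma^2 q$ would, in a fuller model, tilt execution probabilities, and that tilt would produce mean reversion rather than a martingale. I would handle this by reading the proposition's hypothesis literally: the intensities are exogenous constants $\lambda/2$, with every arrival filled at the posted quote. I would add a sentence noting that the result therefore holds for any quoting rule with $b<a$, consistent with the paper's framing.
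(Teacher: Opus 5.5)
Your proof is correct and is essentially the paper's argument: you write $q_t - q_0 = N^+_t - N^-_t$ as a difference of independent rate-$\lambda/2$ Poisson processes whose compensators cancel, and your integrability bound and tower-property step from $\mathcal{F}_t$ down to $\sigma(q_t)$ supply details the paper leaves implicit. One small refinement: $b<a$ is already built into the codomain $\Bid$ of every quoting rule, so strictly no axiom is used at all, which is how the paper states it.
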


\smallskip
\noindent\textit{Axiom dependency.} Proposition~\ref{prop:inventory-martingale}  relies solely on the symmetric-arrival assumption (equal  Poisson intensities $\lambda/2$ for buys and sells); no axiom of the quoting rule is used in the proof.

\begin{proof}
Under the static rule and the symmetric-arrival assumption, trades that decrease the maker's inventory by one unit (counterparty buys at the maker's ask) and trades that increase the maker's inventory by one unit (counterparty sells at the maker's bid) occur as independent Poisson processes with equal intensities $\lambda/2$.
Let $N_t^+$ count the trades increasing the maker's inventory (counterparty sells) and $N_t^-$ count the trades decreasing the maker's inventory (counterparty buys), both up to time $t$.
The compensated jump processes $N_t^+ - (\lambda/2)t$ and $N_t^- - (\lambda/2)t$ are each martingales.
Since $q_t - q_0 = N_t^+ - N_t^-$, the difference of the two compensated  Poisson martingales (with  equal intensities) is itself a martingale, and the drift terms $(\lambda/2)t$ cancel exactly.
\end{proof}

The martingale property is a structural consequence of the symmetric-arrival assumption.
With buys and sells arriving at equal rates, the inventory process is drift-free; the two jump components cancel in expectation, regardless of the quoting rule.
The cancellation is kinematic, depending on arrival symmetry alone.
Combined with natural inventory bounds, the martingale property yields a nontrivial stationary distribution for $q$ whose explicit characterization belongs to the dynamic theory and is left for future  work.

The symmetric-arrival assumption ($\lambda/2$ buys and $\lambda/2$ sells) is a  substantive restriction.
In a market with adverse selection, informed traders are directional: when the asset is undervalued, informed counterparties buy; when overvalued, they sell.
Unconditionally over the joint distribution of fundamentals, informed signals, and arrivals, the inventory remains a martingale; conditionally on a particular informed signal, the maker's inventory drifts opposite to the informed flow.
Asymmetric extensions follow by replacing the equal-intensity Poisson processes with  state-dependent intensities.

%-------------------------------------------------------------------
\section{The Bertrand Equivalence}
\label{sec:bertrand}
%-------------------------------------------------------------------

The axiomatic theory derives the quoting rule from indifference  axioms applied to a single market maker.
A natural question is whether the resulting formula depends essentially  on this monopolistic-indifference foundation.
We show that it does not.
The quoting rule~\eqref{eq:characterization} arises also as the unique symmetric Nash equilibrium of a Bertrand-competitive market with free entry of homogeneous makers.
The Bertrand framework dates to \citet{bertrand1883theorie} and is treated in standard form in \citet{tirole1988theory} and \citet{vives1999oligopoly}.

\subsection{The  Competitive Market}

Let $N \geq 2$ homogeneous market makers be indexed $i = 1, 2, \ldots, N$.
Each maker has the same cost structure, with linear inventory cost of coefficient  $\kappa_{\mathrm{inv}}$, adverse-selection coefficient $\kappa_{\mathrm{adv}}$, and risk-aversion parameter $\gamma_0$.
Each maker's instantaneous  inventory disutility, used in the derivation of the equilibrium  inventory-skew adjustment in the proof of Theorem~\ref{thm:bertrand}, is modeled as $U(q) = \tfrac{1}{2}\gamma_0\sigma^2 q^2$, quadratic in inventory and linear in price variance.
This quadratic specification is a modeling choice within the Bertrand framework, not a consequence of the axiom system; it matches the standard form used in the inventory market-making literature (Avellaneda-Stoikov, Cartea-Jaimungal, and successors), and is the natural choice for a maker whose inventory exposure to price variance is symmetric in $q$ and grows with  squared position size.

Each maker simultaneously posts a quote $(b_i, a_i) \in \Bid$.
A trade is executed at the best available quote, that is, the lowest ask and  the highest bid, with ties broken uniformly at random across the makers posting the tied best quote.
The Bertrand competitive structure  imposes the following.
\begin{itemize}
\item Continuous prices: $b_i, a_i \in \R$, with no minimum price increment.
\item No capacity constraints: each maker can fill any  trade.
\item Free entry: $N$ is determined by the zero-profit condition at the stage game.
\item Homogeneous makers: identical parameters and information.
\item Tie-breaking: trades are split equally among makers posting the best quote.
\end{itemize}

We adopt the per-state stage-game equilibrium concept, defined as follows.
For any common-inventory state $s = (q, \mu, \sigma^2, \lambda, \pi)$, the equilibrium characterized below is the unique symmetric Nash equilibrium of the simultaneous-quoting game played at $s$.
The dynamic evolution of individual makers' inventories across repeated stages, and the question of whether they remain synchronized under the resulting rule, lie outside the scope of this section.
Free entry, in this context, pins down the zero-profit condition at the stage level; it is not invoked to characterize a long-run inventory distribution or an entry-exit process over time.
The common-inventory assumption is therefore a feature of the equilibrium notion, not an assertion about the inventory dynamics of a population of makers.

\subsection{The Bertrand Equivalence Theorem}

The standard Bertrand argument predicts that competition among homogeneous firms drives prices to marginal  cost.
We apply that reasoning to the market maker setting, where the marginal cost of providing a fill has two components: the per-trade inventory carrying cost and the expected adverse-selection loss.
The resulting equilibrium  prices coincide with the axiomatic quoting rule in the continuous-quantity limit; at  unit trade size, the two derivations agree to leading order, with an $O(\Delta)$ discrepancy in the inventory-skew adjustment that vanishes as the unit trade size $\Delta \to 0$.
The discrepancy is purely a discretization artifact and carries no economic content: at any  positive $\Delta$, the inventory channel is computed from a finite trade, while the axiomatic derivation operates  at the marginal limit.

The Bertrand derivation recovers the inventory-skew adjustment from the marginal change in the maker's inventory disutility per trade.
At unit trade size, this marginal change is a discrete first difference of the quadratic $U$, contributing a $\tfrac{1}{2}\gamma_0\sigma^2$ correction on top of the leading $\gamma_0\sigma^2 q$.
In the continuous-quantity limit $\Delta \to 0$, the discrete difference becomes the  derivative $U'(q) = \gamma_0\sigma^2 q$, and the correction  vanishes; the Bertrand prediction then  matches the axiomatic skew exactly.
The continuous-quantity limit is  invoked only here for the  comparison with  the Bertrand derivation; the rest of the paper works at the unit-trade scale, where  the axiomatic rule applies as stated.

\needspace{15\baselineskip}
\begin{theorem}[Bertrand Equivalence]
\label{thm:bertrand}
In the continuous-quantity limit, and under the quadratic inventory-disutility specification $U(q) = \tfrac{1}{2}\gamma_0\sigma^2 q^2$ for the makers, the unique symmetric Nash equilibrium of the Bertrand-competitive market  is the quoting rule of Theorem~\ref{thm:full},
\[
  b = \mu - \gamma_0\sigma^2 q - \sigma^2\!\left(\frac{\kappa_{\mathrm{inv}}}{\lambda} + \kappa_{\mathrm{adv}}\pi\right),
  \quad
  a = \mu - \gamma_0\sigma^2 q + \sigma^2\!\left(\frac{\kappa_{\mathrm{inv}}}{\lambda} + \kappa_{\mathrm{adv}}\pi\right).
\]
The spread structure is derived from zero-profit competition independently of the disutility specification; the  skew $-\gamma_0\sigma^2 q$ depends on the chosen quadratic form, with a different inventory disutility yielding a different skew.
\end{theorem}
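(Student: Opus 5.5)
The proof will be the standard Bertrand undercutting argument, run separately on each side of the book and then combined. First I will write down each maker's expected per-trade profit at the state $s$. A maker who fills a counterparty buy at ask $a_i$ sells $\Delta$ units from inventory $q$. Her profit has four pieces: the markup $(a_i-\mu)\Delta$; minus the per-trade carrying cost $\sigma^2\kappa_{\mathrm{inv}}\Delta/\lambda$ (the linear cost $\kappa_{\mathrm{inv}}|q|\sigma^2$ per unit time spread over the mean inter-trade time $1/\lambda$); minus the expected adverse-selection loss $\kappa_{\mathrm{adv}}\pi\sigma^2\Delta$ (read as the expected information rent per unit per informed fraction); and plus the reduction in inventory disutility $U(q)-U(q-\Delta)$. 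The bid side is symmetric, with $U(q)-U(q+\Delta)$. Everything is then divided by $\Delta$ and the limit $\Delta\to 0$ is taken. The disutility term becomes $\pm U'(q)=\pm\gamma_0\sigma^2 q$. The discrete difference is $\gamma_0\sigma^2 q\Delta-\tfrac12\gamma_0\sigma^2\Delta^2$, which is exactly the $O(\Delta)$ correction flagged in the discussion preceding the statement.

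This gives an \emph{effective marginal cost} on each side:
\[
c_a=\mu-\gamma_0\sigma^2 q+\sigma^2\!\left(\tfrac{\kappa_{\mathrm{inv}}}{\lambda}+\kappa_{\mathrm{adv}}\pi\right),
\qquad
c_b=\mu-\gamma_0\sigma^2 q-\sigma^2\!\left(\tfrac{\kappa_{\mathrm{inv}}}{\lambda}+\kappa_{\mathrm{adv}}\pi\right).
\]
Here $c_a$ is the lowest ask at which selling breaks even, and $c_b$ is the highest bid at which buying breaks even. The skew enters with a negative sign on both sides because a long maker gains from selling and loses from buying. Since all makers share the common state $s$ and identical parameters, $c_a$ and $c_b$ are common to all $N$ makers.

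Next I will prove that $(b_i,a_i)=(c_b,c_a)$ for all $i$ is a symmetric Nash equilibrium and the only one. Existence follows from three observations: each maker earns zero at these quotes; undercutting earns negative profit; and posting a worse quote earns nothing. For uniqueness, suppose a symmetric profile has common ask $a>c_a$. Then each maker receives $1/N$ of the positive margin, while undercutting to $a-\epsilon$ captures the whole flow, a profitable deviation for small $\epsilon$ (this uses $N\ge 2$, continuous prices and no capacity limits). If instead $a<c_a$, every filling maker loses money and prefers to raise her ask out of the market. The bid side is identical. Because the sides are separable (inventory disutility enters linearly in the limit), I will treat them independently. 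Free entry is then only used to confirm that the zero-profit level is consistent at the stage game.

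The main obstacle is the modeling step that fixes $c_a$ and $c_b$, not the undercutting argument. Two things need care here. First, the carrying-cost and adverse-selection terms have to be justified as the per-trade marginal costs, matching the right-hand sides of (A4) and of $h/2$ from Lemma~\ref{lem:adverse-spread}. I will take these as the primitive cost specification of the competitive model, which is legitimate because the theorem asserts equivalence rather than deriving $\kappa_{\mathrm{adv}}$ from a signal structure. Second, the limit $\Delta\to0$ must be taken after normalizing profit per unit, so that the $\tfrac12\gamma_0\sigma^2\Delta$ term vanishes. Finally, I will remark that the spread $c_a-c_b$ never involves $U$, while the mid-quote $(c_a+c_b)/2=\mu-U'(q)$ does. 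This gives the closing sentence of the statement: a different disutility would change only the skew.
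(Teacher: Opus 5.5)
Your proposal is correct and follows essentially the same three-step route as the paper. You identify the per-unit zero-margin prices from the posited carrying and adverse-selection costs plus the disutility difference $U(q)-U(q\mp\Delta)$, letting $\Delta\to 0$; you prove existence via the undercut and overshoot deviations; and you prove uniqueness among symmetric profiles by an $\epsilon$-undercut (using $N\ge 2$) when $a>c_a$ and by withdrawal when $a<c_a$. Your closing observation that $c_a-c_b$ is free of $U$ while $(c_a+c_b)/2=\mu-U'(q)$ is a slightly cleaner statement of the paper's remark that only the skew depends on the disutility specification.
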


\smallskip
\noindent\textit{Axiom dependency.} The Bertrand derivation itself does not invoke the axiom system; the equivalence to the closed-form rule of Theorem~\ref{thm:full} requires the full eight-axiom system on the axiomatic side.

\begin{proof}
We  prove the theorem in three steps: equilibrium identification, equilibrium existence, and uniqueness among symmetric equilibria.

\emph{Step 1, equilibrium identification.}
At a symmetric equilibrium, each maker posts the same quote $(b, a)$ and faces  a fraction $1/N$ of the total trade flow.
The per-trade economic margin at the ask, for an arbitrary maker $i$ at common inventory $q$, decomposes into three components.
 First, the average per-trade carrying cost of inventory, given by $\kappa_{\mathrm{inv}}\sigma^2/\lambda$.
Second, the expected per-trade adverse-selection loss: $\kappa_{\mathrm{adv}}\pi\sigma^2$.
Third, the marginal change in the maker's inventory disutility from executing the trade itself.

The third component is the source of the inventory-skew adjustment.
Define the maker's instantaneous inventory disutility at inventory $q$ as $U(q) = \tfrac{1}{2}\gamma_0\sigma^2 q^2$, with $\gamma_0$ measuring her aversion to inventory variance exposure per unit time.
 A trade at the ask reduces the maker's inventory from $q$ to $q - \Delta$, so $U(q - \Delta) - U(q) = -\gamma_0\sigma^2\Delta q + \tfrac{1}{2}\gamma_0\sigma^2\Delta^2$, giving a per-unit reduction of $\gamma_0\sigma^2 q + O(\Delta)$.
In the continuous-quantity limit $\Delta \to 0$ this is $\gamma_0\sigma^2 q$, the limit invoked in the theorem statement and the preceding note.
This is positive when $q > 0$ (the trade is beneficial to the maker, who is dumping unwanted inventory) and negative  when $q < 0$ (the trade is costly, moving her further short).

Bertrand  competition drives the per-trade economic margin to zero.
A maker considering an undercut by $\epsilon > 0$, posting $a' = a - \epsilon$, would capture all the buy-side trade flow but forgo $\epsilon$ per trade.
The undercut is profitable if and only if the per-trade margin  at $a$ is strictly positive, so at equilibrium the margin   must be zero:
\[
  \underbrace{(a - \mu)}_{\text{revenue net of mid}} = \underbrace{\frac{\kappa_{\mathrm{inv}}\sigma^2}{\lambda}}_{\text{carrying}} + \underbrace{\kappa_{\mathrm{adv}}\pi\sigma^2}_{\text{adverse selection}} - \underbrace{\gamma_0\sigma^2 q}_{\text{inventory-disutility reduction}}.
\]
Solving gives
\[
  a = \mu - \gamma_0\sigma^2 q + \frac{\kappa_{\mathrm{inv}}\sigma^2}{\lambda} + \kappa_{\mathrm{adv}}\pi\sigma^2.
\]
The symmetric reasoning on  the bid  side (where the maker's inventory increases from $q$  to $q + 1$, with the corresponding inventory-disutility change $+\gamma_0\sigma^2 q$) gives
\[
  b = \mu - \gamma_0\sigma^2 q - \frac{\kappa_{\mathrm{inv}}\sigma^2}{\lambda} - \kappa_{\mathrm{adv}}\pi\sigma^2.
\]
This matches~\eqref{eq:characterization}.
The risk-premium  term $\gamma_0\sigma^2 q$ is the competitive counterpart to the axiomatic skew, conditional on the quadratic inventory-disutility specification:
 it emerges from the maker's posited inventory-disutility function  together with the Bertrand zero-margin condition.
A different functional form for the disutility would yield a different skew; the Bertrand derivation pins down the spread structurally from zero-profit competition, while it recovers the axiomatic skew only under this specification.

\emph{Step 2, equilibrium existence.}
At the stated quotes each maker earns zero expected profit at the trading margin.
No unilateral deviation is profitable.
A deviation by undercutting raises the maker's fill share from $1/N$ to one but reduces her per-trade profit at the margin to a negative quantity, since by construction the equilibrium half-spread equals the  per-trade marginal cost.
A deviation by overshooting reduces the maker's fill share to zero, since all counterparties prefer the makers posting the lower price, eliminating her revenue entirely.
Thus, the stated quotes constitute a Nash equilibrium.

\emph{Step 3, uniqueness among symmetric equilibria.}
Suppose $(b', a')$ is a symmetric Nash equilibrium with $a' \neq a$.
We  rule out both $a' > a$ and $a' < a$.

If $a' > a$, then any single maker can profitably deviate by posting  $a'' = a' - \epsilon$ for sufficiently small $\epsilon > 0$.
The per-trade marginal cost to the  maker of filling at the ask at inventory $q$ is the carrying cost plus the expected adverse-selection loss minus the  inventory-skew benefit, namely
\[
  \mathrm{MC}(q) = \kappa_{\mathrm{inv}}\sigma^2/\lambda + \kappa_{\mathrm{adv}}\pi\sigma^2 - \gamma_0\sigma^2 q.
\]
The equilibrium ask $a$ satisfies $a - \mu = \mathrm{MC}(q)$, giving zero per-trade margin at the candidate equilibrium price.
At $a''$, the deviator captures the full   buy-side trade flow (rather than the $1/N$ share at $a'$) and earns a strictly positive per-trade margin of $a'' - a = a' - a - \epsilon > 0$.
The deviator's total expected profit jumps from the $1/N$-share  profit at the symmetric equilibrium $a'$ to the full-flow profit at $a''$.
For sufficiently small $\epsilon$, the latter exceeds the former,  since $N \geq 2$ multiplies the symmetric per-share profit by less than one.
The deviation is therefore strictly profitable, contradicting the Nash-equilibrium assumption that no maker has a strictly profitable unilateral deviation.

If $a' < a$, then the per-trade margin at $a'$ is $a' - \mu - \mathrm{MC}(q) < 0$, so each filled trade at the symmetric equilibrium $a'$ produces a strictly negative expected profit.
Any single maker can profitably deviate by withdrawing from the market or by posting $a'' > a'$ such that no counterparty trades with her, earning zero expected profit rather than a strictly negative one.
This contradicts the Nash-equilibrium assumption.

Hence, $a' = a$, and by the symmetric argument on the bid side, $b' = b$.

The argument extends to free entry as $N \to \infty$.
The equilibrium condition is a per-trade zero-margin condition rather than a flow-based condition, so it is unaffected by the number of  makers; in the free-entry limit $N \to \infty$ each maker's fill share $1/N$ tends to zero, but the equilibrium quotes are unchanged.
Free entry drives zero economic profit at the margin, which  is precisely the condition that characterizes~\eqref{eq:characterization}.
\end{proof}

The Bertrand Equivalence has a striking interpretation.
The axiomatic and competitive theories yield asymptotically equivalent predictions, in the sense that the Bertrand derivation recovers the same closed-form rule as the axiomatic derivation in the continuous-quantity limit (unit trade size $\delta \to 0$), as established in Step 1.
The same closed-form quoting rule arises from two structurally distinct economic models, a monopolistic maker subject to indifference axioms (our derivation) and many competitive makers  subject to free entry (the Bertrand derivation).
The axioms can  be read as an as-if characterization of competitive behavior, and the characterization inherits both foundations.

\begin{remark}[The Two Foundations]
\label{rem:two-foundations}
The two derivations share substantive content. 
Axiom (A4) (inventory indifference) and axiom (A8) (adverse-selection break-even with complementarity) each encode a single-maker break-even  condition that already carries  the structural content of competitive zero-profit pricing. 
The half-spread and adverse-selection rent each equal a marginal cost rather than exceeding it.
In this sense, the axiomatic foundation does not stand wholly apart from the competitive one; the indifference axioms are themselves natural translations of competitive logic into single-maker language.

What the agreement does establish is that the two derivations  operate at different conceptual levels and impose different methodological disciplines,  yet land on the same closed-form rule.
The axiomatic derivation requires no specification of the  number of competing makers or of the entry structure, and the competitive  derivation requires no specification of the maker's preferences beyond cost minimization.
The two routes therefore identify  the same object via complementary analytical commitments,  and the agreement is informative even when the foundations are not fully separate.
\end{remark}

\subsection{The Profit Ceiling}
\label{subsec:profit-ceiling}

The Bertrand equivalence directly implies a profitability result for the maker.
With the spread driven to marginal cost by competitive entry, the per-trade economic profit collapses to zero in equilibrium, and the  spread becomes a pure cost-recovery instrument rather than a margin.

\needspace{16\baselineskip}
\begin{corollary}[Profit Ceiling]
\label{cor:profit-ceiling}
Under the Full Characterization, the following hold.
\begin{enumerate}
\item[(i)] The maker's expected per-trade economic profit is zero, by the two break-even axioms (A4) and (A8).
\item[(ii)] The deterministic adverse-selection  component of the spread, $2\kappa_{\mathrm{adv}}\pi\sigma^2$, is  bounded above by $2\kappa_{\mathrm{adv}}\sigma^2$, attained at the maximum informed-trader fraction $\pi = 1$.
\end{enumerate}
The realized per-trade profit on a single fill against an informed trader depends on the unobserved $v - \mu$ and is not bounded by any function of the structural parameters; only its expectation is.
\end{corollary}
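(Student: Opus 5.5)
The plan is to read both parts off the closed form of Theorem~\ref{thm:full} and the spread decomposition~\eqref{eq:spread-decomposition}, interpreting each spread component against the cost component the corresponding break-even axiom assigns to it.

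\emph{Part (i).} I would write the maker's expected per-trade economic profit at the ask as revenue net of the mid, $a - m = \delta/2$, minus the expected per-trade cost. Under the framework this cost splits additively into two pieces. The inventory carrying cost is $\sigma^2\kappa'_+(|q|)/\lambda = \kappa_{\mathrm{inv}}\sigma^2/\lambda$ by C5. The expected adverse-selection loss is $h(\sigma^2,\pi)/2$ per side.

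By (A4), via Lemma~\ref{lem:inventory-spread}, the inventory half-spread $\delta|_{\pi=0}/2$ equals the first cost exactly. By (A8), via Lemma~\ref{lem:adverse-spread}, the adverse-selection half-increment $\kappa_{\mathrm{adv}}\pi\sigma^2$ equals the expected informed rent per trade. That is the break-even reading of (A8) together with Remark~\ref{rem:parameters}, which interprets $\kappa_{\mathrm{adv}}\sigma^2$ as the expected loss per informed trade.

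Summing the two pieces, the half-spread equals the total expected marginal cost, so the expected margin is zero. The bid side is the mirror image. As a consistency check, I would note that this is the same zero-margin identity obtained in Step~1 of the proof of Theorem~\ref{thm:bertrand}.

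\emph{Part (ii).} The adverse component is $2\kappa_{\mathrm{adv}}\pi\sigma^2$ with $\kappa_{\mathrm{adv}},\sigma^2>0$, and hence it is strictly increasing and linear in $\pi$, consistent with (A7). On $[0,1]$ its supremum is therefore attained at $\pi=1$ and equals $2\kappa_{\mathrm{adv}}\sigma^2$.

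\emph{Closing remark.} The realized profit against an informed trader at the ask is $a - v$, which depends on $v-\mu$. The setup imposes only two moments on the belief, so $v-\mu$ has unbounded support, for example under a Gaussian belief. No bound in terms of $(\gamma_0,\kappa_{\mathrm{inv}},\kappa_{\mathrm{adv}})$ can therefore hold pathwise; only the expectation is pinned down, by (A8).

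The main obstacle is part (i). The axioms state equalities between spread components and cost quantities, but "economic profit" is never formally defined in the setup. I would therefore make explicit the convention that expected per-trade profit is the half-spread minus the sum of the two marginal costs named in (A4) and (A8). Under that convention the claim becomes an identity.
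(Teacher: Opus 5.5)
Your proposal is correct and follows the paper's own proof. Part (i) is the conjunction of the (A4) and (A8) break-even equalities applied separately to the inventory and adverse-selection channels of the half-spread, and part (ii) is read off the linear form $2\kappa_{\mathrm{adv}}\pi\sigma^2$ together with $\pi \le 1$.

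Your explicit convention for expected per-trade profit makes precise what the paper leaves implicit. Formally, (A8) states no break-even equation, so the adverse-selection leg is definitional via Remark~\ref{rem:parameters}. The paper does not prove the closing sentence, and your argument for it works once you say the setup \emph{permits} an unbounded-support belief such as the Gaussian, rather than saying it forces one.
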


\smallskip
\noindent\textit{Axiom dependency.} Corollary~\ref{cor:profit-ceiling} relies on the inventory-indifference axiom (A4) and the adverse-selection break-even axiom (A8), the two break-even axioms of the indispensable core.

\begin{proof}
Statement (i) is the conjunction of two break-even conditions:  (A4) equates the inventory half-spread to the marginal carrying cost per trade, so the inventory channel of  the spread breaks even; (A8) equates the adverse-selection half-spread to the expected information rent per informed trade, so the adverse-selection channel also breaks even in expectation.
Statement (ii) is immediate from the form of the adverse-selection spread component $2\kappa_{\mathrm{adv}}\pi\sigma^2$ and the bound $\pi \in [0, 1]$.
\end{proof}

\begin{remark}[Interpretive Accounting of the Adverse-Selection Channel]
A standard reading of (A8) goes as follows. 
On every trade, the maker collects the adverse-selection half-spread  $\kappa_{\mathrm{adv}}\pi\sigma^2$, irrespective of counterparty type.
On the $\pi$-fraction of trades against an informed counterparty, the maker incurs an implicit per-informed-trade loss (its precise magnitude depends on the unmodeled value distribution and on the informed trader's strategy).
(A8) is the condition that the half-spread collected on every trade aggregates to compensate the expected loss across the informed-trader population: integrating over the population, the maker breaks even.
This matches the standard cross-subsidization  reading of adverse-selection theory: the spread paid by the uninformed flow finances the loss incurred against the informed flow, with the equilibrium half-spread set at the break-even level.
\end{remark} 

The corollary makes the economic content of the two indifference axioms transparent.
 In expectation, the maker operates as  a cost-covering service provider rather than a  profit extractor.
A bid-ask spread in this framework is not a margin in the conventional sense; it is not a wedge between purchase and resale price retained as economic rent.
Rather, the spread compensates the maker exactly for two costs: the  marginal cost of carrying inventory and the expected information rent extracted by informed counterparties.
Competitive entry, made precise by the Bertrand Equivalence Theorem, drives this compensation to equal the  actual cost, neither more nor less.
This profit-ceiling reading aligns with the standard understanding of market making as a liquidity-provision service in equilibrium with zero economic rent.
Empirical work inherits a clean reading: under the  axiomatic theory, estimates of dealer profitability based on quoted spreads recover the cost components rather than any margin, and any persistent positive economic rent observed in the data  must be attributed either to barriers to entry that violate the homogeneous-makers assumption  or to inventory or adverse-selection costs lying outside the structural specification.

%-------------------------------------------------------------------
\section{Beyond the System: Diminishing Complementarity}
\label{sec:diminishing-complementarity}
%-------------------------------------------------------------------

The complementarity clause of (A8) is the part of the core layer that  admits the most natural relaxation, which captures competition among informed traders.

The additive complementarity in (A8) asserts that the adverse-selection increment  is additive across the informed-trader population, namely $h(\sigma^2, \pi_1 + \pi_2) = h(\sigma^2, \pi_1) + h(\sigma^2, \pi_2)$.
Equivalently, the per-informed-trader rent $h(\sigma^2, \pi)/\pi$ is constant in $\pi$: each informed trader  extracts the same expected rent regardless of how many other informed traders are present.
Replacing the additive clause with a strictly diminishing per-trader rent captures the empirical regularity that informed traders compete with each other for the same private signal: each additional informed trader extracts strictly less rent than the previous one.

\paragraph{(A8s) Diminishing complementarity.}
For each $\sigma^2 \in \Rpp$, the function $\pi \mapsto h(\sigma^2, \pi)$ is continuous on $[0, 1]$, and the function $\pi \mapsto h(\sigma^2, \pi)/\pi$ is strictly decreasing on $(0, 1]$.
Equivalently, for the strict-decrease clause, for any $0 < \pi_1 < \pi_2 \leq 1$,
\[
  \frac{h(\sigma^2, \pi_1)}{\pi_1} > \frac{h(\sigma^2, \pi_2)}{\pi_2}.
\]
 The continuity requirement is automatic under the canonical (A8): the additivity clause forces $h(\sigma^2, \pi) = \sigma^2 c\pi$ for some constant $c$, which is continuous in $\pi$ by inspection.
Under (A8s), additivity is dropped and continuity must be imposed directly: subadditivity together with monotonicity in $\pi$ implies continuity off at most a countable set, but not full continuity on $[0, 1]$.
Adding continuity to (A8s) is consistent with all illustrative examples discussed below (hyperbolic, power, exponential), each of which is smooth in $\pi$.

\begin{theorem}[Diminishing Adverse Selection]
\label{thm:diminishing-adverse}
Under (A1) through (A7), (A8s) (i.e., diminishing complementarity, replacing the additive complementarity clause of (A8)), and the cost structure C1 through C6, every measurable quoting rule has the form
\[
  m(s) - \mu = -\gamma_0\sigma^2 q,
  \qquad
  \delta(s) = \frac{2\kappa_{\mathrm{inv}}\sigma^2}{\lambda} + 2\pi\,\kappa_{\mathrm{adv}}(\pi)\,\sigma^2,
\]
for some $\gamma_0, \kappa_{\mathrm{inv}} > 0$ and some strictly positive continuous function $\kappa_{\mathrm{adv}}\colon (0, 1] \to \Rpp$ that is strictly decreasing on $(0, 1]$, with boundary value $\kappa_{\mathrm{adv}}(0^+) := \lim_{\pi \to 0^+} \kappa_{\mathrm{adv}}(\pi) \in (0, \infty]$.
\end{theorem}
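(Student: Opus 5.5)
The plan is to reuse the modular structure of the proof of Theorem~\ref{thm:full}. Only the adverse-selection lemma needs a replacement. The skew part is untouched: Lemma~\ref{lem:skew} uses only (A1), (A2), (A3), (A5), (A6), so $m(s)-\mu=-\gamma_0\sigma^2 q$ with $\gamma_0>0$. Likewise, Lemma~\ref{lem:inventory-spread} uses only (A4) and C1--C6, giving $\delta|_{\pi=0}=2\kappa_{\mathrm{inv}}\sigma^2/\lambda$.

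For the spread away from $\pi=0$, I need the dependence clause (A8)(i), namely that $h$ depends only on $(\sigma^2,\pi)$. I will read (A8s) as replacing only the complementarity clause (ii), as the theorem statement indicates, so clause (i) is retained. Then
\[
\delta(s)=\frac{2\kappa_{\mathrm{inv}}\sigma^2}{\lambda}+h(\sigma^2,\pi).
\]
Exactly as in Step~1 of Lemma~\ref{lem:adverse-spread}, (A3) applied to $\delta(s)=\sigma^2 g(q,\lambda,\pi)$ gives $h(\sigma^2,\pi)=\sigma^2\tilde h(\pi)$. Here $\tilde h(\pi):=g(q,\lambda,\pi)-g(q,\lambda,0)$, which is independent of $(q,\lambda)$ by (A8)(i). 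By construction $\tilde h(0)=0$.

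Next I define $\kappa_{\mathrm{adv}}(\pi):=\tilde h(\pi)/(2\pi)$ on $(0,1]$, so that $h=2\pi\kappa_{\mathrm{adv}}(\pi)\sigma^2$. The required properties of $\kappa_{\mathrm{adv}}$ then follow one at a time.
\begin{itemize}
\item \emph{Positivity.} (A7) makes $\tilde h$ strictly increasing, so $\tilde h(\pi)>\tilde h(0)=0$ for $\pi>0$, hence $\kappa_{\mathrm{adv}}(\pi)>0$.
\item \emph{Continuity.} Since $h(\sigma^2,\pi)=\sigma^2\tilde h(\pi)$, the continuity clause of (A8s) gives continuity of $\tilde h$ on $[0,1]$. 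Dividing by the continuous nonvanishing $2\pi$ gives continuity of $\kappa_{\mathrm{adv}}$ on $(0,1]$.
\item \emph{Strict decrease.} The strict-decrease clause of (A8s) is exactly the statement that $\tilde h(\pi)/\pi=2\kappa_{\mathrm{adv}}(\pi)$ is strictly decreasing on $(0,1]$.
\end{itemize}

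The main point needing care is the boundary value $\kappa_{\mathrm{adv}}(0^+)$. Since $\kappa_{\mathrm{adv}}$ is strictly decreasing and positive on $(0,1]$, the limit as $\pi\to0^+$ exists in the extended reals as a supremum. It is at least $\kappa_{\mathrm{adv}}(1)>0$, so it lies in $(0,\infty]$. The value $+\infty$ must be allowed: for example $\tilde h(\pi)=\sqrt{\pi}$ is continuous, strictly increasing and vanishes at $0$, yet $\tilde h(\pi)/\pi\to\infty$. So no finiteness claim should be attempted.

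Finally, I assemble $b=m-\delta/2$ and $a=m+\delta/2$ and check that $\delta>0$, so that $Q(s)\in\Bid$. This holds because both spread components are nonnegative and the inventory component is strictly positive.

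Measurability plays no essential role in this argument. The Cauchy step of Lemma~\ref{lem:adverse-spread} is no longer needed, since (A8s) supplies continuity directly.
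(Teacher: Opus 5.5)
Your proposal is correct and follows essentially the same route as the paper. Both reuse the skew and inventory-spread lemmas, write $h=\sigma^2\tilde h(\pi)$ via (A3) and the retained dependence clause of (A8), set $\kappa_{\mathrm{adv}}(\pi)=\tilde h(\pi)/(2\pi)$, and read off positivity from (A7), continuity and strict decrease from (A8s), and the boundary limit in $(0,\infty]$ by monotonicity, with the same $\sqrt{\pi}$ example. The paper adds only one point you leave implicit: at $\pi=0$ the term $2\pi\kappa_{\mathrm{adv}}(\pi)\sigma^2$ is read as $\sigma^2\tilde h(0)=0$ (equivalently, its limit as $\pi\to0^+$), which your use of Lemma~\ref{lem:inventory-spread} at $\pi=0$ already covers.
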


\begin{proof}
The skew and the inventory-spread arguments of Lemmas~\ref{lem:skew} and~\ref{lem:inventory-spread} are unchanged.
For the adverse-selection part, (A3) together with the dependence clause of (A8) (which  is preserved under (A8s)) give $h(\sigma^2, \pi) = \sigma^2\,\tilde h(\pi)$ with $\tilde h\colon [0, 1] \to \Rp$ monotone strictly increasing (from (A7)) and satisfying $\tilde h(0) = 0$; continuity of $\tilde h$ on $[0, 1]$ follows from the continuity clause of (A8s), since $\tilde h(\pi) = h(\sigma^2_0, \pi)/\sigma_0^2$ for any fixed $\sigma_0^2 \in \Rpp$.
Define $\kappa_{\mathrm{adv}}(\pi) := \tilde h(\pi)/(2\pi)$ for $\pi \in (0, 1]$, so that $\tilde h(\pi) = 2\pi\,\kappa_{\mathrm{adv}}(\pi)$ and $h(\sigma^2, \pi)/\pi = 2\sigma^2\,\kappa_{\mathrm{adv}}(\pi)$.

The diminishing-complementarity hypothesis (A8s) asserts precisely that $g(\pi) := \tilde h(\pi)/\pi$ is strictly decreasing on $(0, 1]$. 
Since $\kappa_{\mathrm{adv}}(\pi) = g(\pi)/2$, we obtain $\kappa_{\mathrm{adv}}$ strictly decreasing on $(0, 1]$.
Strict positivity of $\kappa_{\mathrm{adv}}$ follows from  strict monotonicity of $\tilde h$ together with $\tilde h(0) = 0$, which gives $\tilde h(\pi) > 0$ on $(0, 1]$.
Continuity of $\kappa_{\mathrm{adv}}$ on $(0, 1]$ follows from continuity of $\tilde h$.

For the boundary value at the  origin, $g(0^+) := \lim_{\pi \to 0^+} g(\pi)$ exists in $(g(\pi_0), \infty]$ for any $\pi_0 \in (0, 1]$ by monotonicity of $g$, and is strictly positive.
Setting $\kappa_{\mathrm{adv}}(0^+) := g(0^+)/2$ gives a strictly positive boundary value, with $\kappa_{\mathrm{adv}}(0^+) = +\infty$ admissible (occurring, e.g., for $\tilde h(\pi) = \sqrt{\pi}$, where $g(0^+) = +\infty$).
The spread $\delta(s) = 2\kappa_{\mathrm{inv}}\sigma^2/\lambda + \sigma^2\,\tilde h(\pi)$ remains finite at $\pi = 0$ regardless of the boundary behavior of $\kappa_{\mathrm{adv}}$ alone, since $\pi\,\kappa_{\mathrm{adv}}(\pi) = \tilde h(\pi)/2 \to 0$ as $\pi \to 0$ by continuity of $\tilde h$ with $\tilde h(0) = 0$.

By construction, (A8s) implies the strict subadditivity of  $\tilde h$ on the conditional domain.
 Indeed,  take $0 < \pi_1 \leq \pi_2$ with $\pi_1 + \pi_2 \leq 1$, so that $\pi_2 < \pi_1 + \pi_2$.
By (A8s), $g(\pi_1 + \pi_2) < g(\pi_2)$, that is, $\tilde h(\pi_1 + \pi_2) < (\pi_1 + \pi_2)\,g(\pi_2) = \tilde h(\pi_2) + \pi_1\,g(\pi_2)$.
Since $\pi_1 \leq \pi_2$ gives $g(\pi_2) \leq g(\pi_1)$, we have $\pi_1\,g(\pi_2) \leq \pi_1\,g(\pi_1) = \tilde h(\pi_1)$.
Hence $\tilde h(\pi_1 + \pi_2) < \tilde h(\pi_1) + \tilde  h(\pi_2)$, the strict subadditivity claim.
The converse does not hold: strict subadditivity of $\tilde h$ alone does not imply strict monotone decrease of $g$, so (A8s) is the stronger and more direct axiomatic expression of the diminishing-rent intuition.
Note also that the linear case $\tilde h(\pi) = c\pi$ is excluded by (A8s) (since $g$ is then constant), placing the additive complementarity of (A8) on the boundary of (A8s); conversely, $\tilde h(\pi) = \pi^2$ is excluded by (A8s) in the opposite direction, since $g(\pi) = \pi$ is strictly increasing rather than strictly decreasing (note that $\pi^2$ does nonetheless satisfy (A7), since $d(\pi^2)/d\pi = 2\pi > 0$ on $(0, 1]$; the failure is in (A8s) alone, not in spread monotonicity).
The condition (A8s) is therefore an open condition (strict decrease of $g$), and the canonical (A8) case sits at the boundary of (A8s) without belonging to it: Theorem~\ref{thm:full} and Theorem~\ref{thm:diminishing-adverse} characterize disjoint classes of rules rather than nested classes, with the canonical rule recovered as the boundary limit of (A8s) as $g$ approaches a constant from below.
\end{proof}

 The function $\kappa_{\mathrm{adv}}(\pi)$ captures competition among informed traders, with the per-trader rent decreasing  as the informed fraction $\pi$ grows.
The shape of $\kappa_{\mathrm{adv}}$ thus encodes the intensity of informed-trader competition: a slowly decreasing $\kappa_{\mathrm{adv}}$ corresponds to weak competition, while a  rapidly   decreasing one corresponds to fierce competition that drives the per-trader rent down sharply.
Specific functional forms admit natural interpretations under (A8s), subject to the (A7)-compatibility  constraint that $\tilde h(\pi) = 2\pi\,\kappa_{\mathrm{adv}}(\pi)$ remain strictly increasing on $[0, 1]$.

Constant $\kappa_{\mathrm{adv}}(\pi) =  \kappa_0$ lies on the boundary of (A8s) (the chord slope $g$ is then constant rather than strictly decreasing) and recovers (A8) together with the canonical theory of Theorem~\ref{thm:full}.
Hyperbolic $\kappa_{\mathrm{adv}}(\pi) = \kappa_0/(1 + \beta\pi)$ for $\beta > 0$ is admissible  for  every $\beta$ (since $\tilde h(\pi) = 2\kappa_0\pi/(1 + \beta\pi)$ has $\tilde  h'(\pi) = 2\kappa_0/(1 + \beta\pi)^2 > 0$) and models  a smooth crowding-out effect.
Power-law $\kappa_{\mathrm{adv}}(\pi) = \kappa_0\pi^{-\alpha}$ for $\alpha \in (0, 1)$ is admissible (with $\kappa_{\mathrm{adv}}(0^+) = +\infty$) and gives the power adverse-selection premium $\tilde h(\pi) = 2\kappa_0\pi^{1 - \alpha}$, which includes the $\sqrt{\pi}$ case at $\alpha = 1/2$.
Exponential $\kappa_{\mathrm{adv}}(\pi)  = \kappa_0 \,e^{-\beta\pi}$ is admissible for $\beta \in (0, 1)$, where $\tilde h(\pi) = 2\kappa_0\pi e^{-\beta\pi}$ remains strictly increasing  on $[0, 1]$; for $\beta \geq 1$, $\tilde h'$ vanishes at $\pi = 1/\beta \in (0, 1]$ and (A7) is violated.

\begin{remark}[Scope of the Diminishing-Complementarity Extension]
\label{rem:diminishing-scope}
Under (A8s), the per-informed-trader adverse-selection rent  $\kappa_{\mathrm{adv}}(\pi)$ is strictly decreasing in $\pi$: each additional informed trader extracts strictly less rent than the previous one.
The total premium $H(\pi) := \pi\,\kappa_{\mathrm{adv}}(\pi) = \tilde h(\pi)/2$ remains strictly increasing on $[0, 1]$ by (A7), but its average rate $H(\pi)/\pi = \kappa_{\mathrm{adv}}(\pi)$ strictly decreases.
This is the structural signature of competition among informed traders, capturing the qualitative content of strategic-trading models such as Kyle's without the full continuous-time strategic apparatus.
The hypothesis (A8s) relaxes only the additivity clause of (A8), preserving its dependence clause: the adverse-selection premium continues to depend only on $(\sigma^2, \pi)$, so the modularity content of the canonical theory is retained.
The shape of $\kappa_{\mathrm{adv}}(\pi)$ is identifiable from the curvature of the spread in $\pi$ at fixed $\lambda$: a constant marginal rate recovers the canonical (A8), and a strictly decreasing marginal rate locates a specific $\kappa_{\mathrm{adv}}$ within the (A8s) family.
The diminishing-complementarity extension is an example of relaxing a part of the core that is closer to a structural choice than to a tautology, and the closed-form  characterization survives the relaxation in a clean parametric form.
\end{remark}

%-------------------------------------------------------------------
\section{The Universal Structure}
\label{sec:universal}
%-------------------------------------------------------------------

 The axioms in $\mathcal{A}$ identify a three-parameter family of quoting rules.
Different choices of structural primitives (the cost function $\kappa$, the variance scaling exponent, the equivariance generator) would yield different specific families.
We close the structural development with a meta-theorem identifying what is preserved across all such structural choices.

The class of \emph{admissible structural primitives} consists of triples $(\kappa, h, \beta)$ where: $\kappa\colon \R \to \Rp$ is convex nondecreasing in $|q|$ with $\kappa(0) = 0$ and symmetric in $q$ (the cost-structure conditions C1, C2, C3, and C6); $h\colon \mathcal{D}_h \to \R$ is a strictly increasing differentiable map from an admissible price domain $\mathcal{D}_h$ (the equivariance generator); and $\beta \in (0, 1]$ is the variance-scaling exponent of (A3) or (A3s).
The canonical choice of the paper is $(\kappa(q) = \kappa_{\mathrm{inv}}|q|, h = \mathrm{id}_\R, \beta = 1)$.

\begin{theorem}[Universal Structure]
\label{thm:universal}
Under (A1)--(A8), or the (A3s) variant with $\beta \in (0, 1]$,
any admissible structural primitive $(\kappa, h, \beta)$ yields a quoting rule satisfying  the following four universal features:
\begin{enumerate}
\item \emph{Inventory half-spread determined by $\kappa'$}: the inventory half-spread at level $q$ equals $\sigma^2\kappa'_+(|q|)/\lambda$, a continuous monotone function of $q$ determined by the cost derivative.
Under linear cost (C5), this reduces to the constant $\kappa_{\mathrm{inv}}\sigma^2/\lambda$.
\item \emph{Additive spread decomposition}: $\delta = \delta_{\mathrm{inv}} + \delta_{\mathrm{adv}}$,  with $\delta_{\mathrm{inv}}$ depending  on $(q, \sigma^2, \lambda, \kappa)$ and $\delta_{\mathrm{adv}}$ depending on $(\sigma^2, \pi, \beta)$.
\item \emph{Skew decoupling from $\pi$}: the inventory hedging does   not depend on the informed-trader fraction.
\item \emph{Variance homogeneity of each component}: each component of the quoting rule is homogeneous of some degree in $\sigma^2$ (degree one in the canonical case $\beta = 1$; mixed degrees one and $\beta$ under (A3s)).
\end{enumerate}
\end{theorem}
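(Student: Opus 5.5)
The plan is to prove the four features by re-running the necessity lemmas of Section~\ref{sec:characterization} with the primitive $(\kappa, h, \beta)$ held generic. At each step I will note which axioms are used and check that none of them uses C5, $h = \mathrm{id}_\R$, or $\beta = 1$ beyond what the feature itself asserts. The equivariance generator $h$ is to be handled by a change of price coordinates. All axioms are stated in the coordinates $p \mapsto h(p)$, which is strictly increasing and differentiable. So bids, asks, mid-quotes, spreads and all the order relations used by (A1) and (A7) transfer to the transformed scale. Features 2--4 are statements about how the components depend on the state variables, so they survive this relabelling. Feature 1 is read in the $h$-coordinates in which (A4) is posed.

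\emph{Feature 1.} I will argue exactly as in Lemma~\ref{lem:inventory-spread} and Step~2b of Theorem~\ref{thm:order-book}, but stop before invoking C5. (A4) gives $\delta|_{\pi=0}/2 = \sigma^2\kappa'_+(|q|)/\lambda$. By C6 the derivative $\kappa'_+$ exists, is finite and is nondecreasing on $\Rp$, so the half-spread is a monotone function of $|q|$. Continuity is the delicate point: $\kappa'_+$ is only right-continuous. I would first try to take the statement literally on the interior of smoothness. Failing that, I would interpret ``continuous'' as right-continuous in $|q|$, which holds for every convex $\kappa$. This is the step I expect to be the main obstacle, since a kink in a convex $\kappa$ produces a jump in $\kappa'_+$. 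Under C5 the derivative is the constant $\kappa_{\mathrm{inv}}$, which gives the stated reduction.

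\emph{Feature 2.} Define $\delta_{\mathrm{inv}} := \delta|_{\pi=0}$ and $\delta_{\mathrm{adv}} := h(\sigma^2,\pi)$ as in (A8), reading $h$ here as the increment and not the generator. Then $\delta = \delta_{\mathrm{inv}} + \delta_{\mathrm{adv}}$ holds by definition. By Feature~1, $\delta_{\mathrm{inv}}$ depends only on $(q,\sigma^2,\lambda,\kappa)$. By (A8)(i) the increment depends only on $(\sigma^2,\pi)$, and its $\sigma^2$-dependence carries the exponent $\beta$ through (A3) or (A3s). This means rerunning Step~1 of Lemma~\ref{lem:adverse-spread} with $\sigma^{2\beta}$ in place of $\sigma^2$, which gives $\delta_{\mathrm{adv}} = \sigma^{2\beta}\tilde h(\pi)$. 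The Cauchy-plus-monotonicity argument of Steps~3--4 then carries over verbatim, because it only concerns $\tilde h$.

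\emph{Features 3 and 4.} Feature~3 is (A6) restated, and it is unaffected by $(\kappa,h,\beta)$, since (A6) is imposed directly on the skew. For Feature~4, (A3s) keeps the skew of degree one, $m-\mu = \sigma^2 f$, as in Lemma~\ref{lem:skew}. Feature~1 shows that $\delta_{\mathrm{inv}}$ is of degree one in $\sigma^2$ for every admissible $\kappa$, because $\kappa$ carries no $\sigma^2$-dependence by the multiplicative cost form $C = \kappa(q)\sigma^2$. The derivation of Feature~2 shows that $\delta_{\mathrm{adv}}$ has degree $\beta$, and $\beta = 1$ recovers (A3). To finish, I will note that the decomposition of (A3s) into $g_{\mathrm{inv}}$ and $g_{\mathrm{adv}}$ is compatible with the (A8) split. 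This follows by evaluating at $\pi = 0$, where $\tilde h(0) = 0$ forces $g_{\mathrm{adv}}$ to vanish.
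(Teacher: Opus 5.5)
\textbf{The gap.} Your derivation of $\delta_{\mathrm{adv}} = \sigma^{2\beta}\tilde h(\pi)$ fails under (A3s) with $\beta<1$. (A3s) only asserts $\delta = \sigma^2 g_{\mathrm{inv}}(q,\lambda,\pi) + \sigma^{2\beta} g_{\mathrm{adv}}(q,\lambda,\pi)$, and nothing stops $g_{\mathrm{inv}}$ from depending on $\pi$. So the (A8) increment is in general a sum of a degree-one piece and a degree-$\beta$ piece, and Step~1 of Lemma~\ref{lem:adverse-spread} does not rerun with $\sigma^{2\beta}$ in place of $\sigma^2$.

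Here is a concrete counterexample. Take $m-\mu = -\gamma_0\sigma^2 q$ and $\delta = 2\kappa_{\mathrm{inv}}\sigma^2/\lambda + (c_1\sigma^2 + c_2\sigma^{2\beta})\pi$, with $c_1, c_2 > 0$ and $\beta\in(0,1)$. This rule satisfies (A1), (A2), (A3s), (A4)--(A8) and C1--C6. Its increment $(c_1\sigma^2 + c_2\sigma^{2\beta})\pi$ is homogeneous of no degree. Since Feature~2 determines $\delta_{\mathrm{adv}}$ up to a $\pi$-independent term, no admissible split rescues Feature~4.

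Your closing compatibility check cannot repair this, because it only looks at $\pi=0$. Even there the reason you give is off: $\tilde h(0)=0$ holds by the definition of the increment and says nothing about $g_{\mathrm{adv}}$. What forces $g_{\mathrm{adv}}(q,\lambda,0)=0$ is (A4), which makes $\delta|_{\pi=0}$ exactly of degree one, together with the linear independence of $\sigma^2$ and $\sigma^{2\beta}$ for $\beta\neq 1$. The missing fact is $g_{\mathrm{inv}}(q,\lambda,\pi)=g_{\mathrm{inv}}(q,\lambda,0)$ for $\pi>0$, i.e.\ that the (A3s) split coincides with the (A8) split.

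\textbf{How to close it.} The paper's proof of Property~4 gets this only by reading the interpretive clause of (A3s) (``$g_{\mathrm{inv}}$ capturing the inventory contribution and $g_{\mathrm{adv}}$ the adverse-selection contribution'') as part of the axiom. You need to adopt that reading explicitly, i.e.\ assume $g_{\mathrm{inv}}$ is $\pi$-free. Your argument then closes:
\begin{itemize}
\item matching at $\pi=0$ via (A4) gives $\sigma^2 g_{\mathrm{inv}}=\delta|_{\pi=0}$ and $g_{\mathrm{adv}}(\cdot,\cdot,0)=0$;
\item so the increment is $\sigma^{2\beta}g_{\mathrm{adv}}$, which (A8)(i) makes a function of $(\sigma^2,\pi)$ alone.
\end{itemize}
The Cauchy--monotonicity steps are not needed for any of the four features. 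Otherwise your route is the paper's: Feature~1 from (A4), Feature~3 from (A6), and degree one of the skew and of $\delta_{\mathrm{inv}}$ from (A3s) and the cost form.

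\textbf{Continuity in Feature~1.} Your worry is justified, and the paper's proof is silent on it. For $\kappa(q)=|q|+\max\{|q|-1,0\}$, the right derivative jumps from $1$ to $2$ at $|q|=1$. So only right-continuity and monotonicity in $|q|$ (not in $q$) can be proved.

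\textbf{The generator $h$.} Your relabelling by $h$ is an assumption rather than a step. The axioms as written never involve $h$, and the paper explicitly leaves their $h$-coordinate versions unformalized (Remark~\ref{rem:multiplicative-separability}).
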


\smallskip
\noindent\textit{Axiom dependency.} Theorem~\ref{thm:universal} relies on (A1)--(A8) and is meta-stated across all admissible structural-primitive choices $(\kappa, h, \beta)$.

\begin{proof}
\emph{Property 1.}
Relaxing C5 to general convex nondecreasing $\kappa$ and applying (A4) at each inventory level (via the order-book argument of Theorem~\ref{thm:order-book}) gives the level-$q$ inventory half-spread $\sigma^2\kappa'_+(|q|)/\lambda$.
Under linear cost (C5), $\kappa'_+ \equiv \kappa_{\mathrm{inv}}$ is constant on $\Rp$, and the inventory half-spread reduces to $\kappa_{\mathrm{inv}}\sigma^2/\lambda$.

\emph{Property 2.}
The two distinct  contributions to the half-spread are identified separately:
 the  inventory contribution by (A4), and the adverse-selection contribution by (A8) through the increment $h(\sigma^2, \pi) = \delta(q, \mu, \sigma^2, \lambda, \pi) - \delta(q, \mu, \sigma^2, \lambda, 0)$.
The two contributions, derived separately in Lemmas~\ref{lem:inventory-spread} and~\ref{lem:adverse-spread}, combine additively.
A different cost function changes $\delta_{\mathrm{inv}}$ but leaves $\delta_{\mathrm{adv}}$ untouched; a different scaling exponent (A3s with $\beta \neq 1$) changes the variance-homogeneity degree of $\delta_{\mathrm{adv}}$ but leaves the additivity untouched.

\emph{Property 3.}
The proof of Lemma~\ref{lem:skew} uses (A1), (A2), (A3), (A5), and (A6), none of which constrains $\pi$.
(A6) in particular makes the skew coefficient $\gamma$ depend only on $\sigma^2$ (and through (A3), on $\sigma^2$ via $\gamma_0\sigma^2$), not on $\pi$.
The skew is determined by inventory and volatility alone, with the trading environment $(\lambda, \pi)$ entering only through the spread.

\emph{Property 4.}
Under (A3) (with $\beta = 1$), each component is homogeneous of degree one in $\sigma^2$.
 Under (A3s) with $\beta \in (0, 1]$, the inventory component is degree-one  in $\sigma^2$ and the adverse-selection component is degree-$\beta$.
In every case, each component is homogeneous of some degree in $\sigma^2$.
The degree itself differs across structural choices, but the homogeneity is universal.
\end{proof}

\begin{remark}[On multiplicative price separability]
\label{rem:multiplicative-separability}
A natural further candidate  for a universal feature is \emph{multiplicative price separability}: under any admissible equivariance generator $h$, the $\mu$-dependence of the quoting rule should factor through $h$.
In the paper's additive setting ($h = \mathrm{id}_\R$), this follows directly from the Translation Equivariance Theorem (Theorem~\ref{thm:translation-equivariance}), which establishes that $m(s) - \mu$ and $\delta(s)$ are independent of $\mu$. 
Under the logarithmic alternative $h(\mu) = \log\mu$ on $\R_{++}$, the additive-in-$h$ relation reads multiplicatively in $\mu$, with $m(s) = \mu \cdot \exp(\psi(\cdot))$ and the half-quotes $a(s)/\mu$ and $\mu/b(s)$ functions of the non-$\mu$ arguments only.
We do not state this property as a  universal feature of the axiom system, because doing so requires a precise specification of the $h$-coordinate version of (A2), (A4), and (A8), which lies outside the scope of the present paper.
We record it as  a structural plausibility argument: the equivariance content of (A2), (A4), and (A8) is naturally restated in $h$-coordinates, and the $h$-equivariance follows in those coordinates by the same argument.
A  precise statement and proof remain for  future work.
\end{remark}

The Universal Structure Theorem isolates the substantive content of axiomatic market making, as distinct from matters of structural convention.
A consistent  theory must exhibit inventory half-spread determined by $\kappa'$, additive spread decomposition, skew  decoupling from $\pi$, and per-component variance homogeneity; violations place the market outside the  axiomatic family.

The features that depend on the structural conventions, in contrast to those above, are the specific functional form of the skew (linear under C5, more general otherwise), the specific functional form of  the inventory spread, the variance-scaling exponent of the adverse-selection  component ($\sigma^2$ under (A3); $\sigma$ under the Glosten-Milgrom-style $\beta = 1/2$ alternative), and the multiplicative-versus-additive structure of prices (encoded in  the choice of $h$). 
The convention-dependent features reflect choices an applied theorist makes for the specific market.
The universal features are predicted by the theory itself.

\subsection{Existing Models as  Structural  Neighbors}
\label{subsec:unification}

We now situate our axiomatic theory in relation to the  standard models in the market microstructure literature.
The reading is direct.
None of the standard models is a strict special case of our axiomatic family; each violates at least one of our axioms or  environmental assumptions.
What our theory does provide, via Theorem~\ref{thm:universal}, is the broader axiomatic universe within which the standard models sit, each at a structurally identifiable position.
We treat the five reference models in turn: Ho-Stoll, Avellaneda-Stoikov, Cartea-Jaimungal, Glosten-Milgrom, and Kyle.

\paragraph{Ho-Stoll.}
The Ho-Stoll model~\citep{ho1981optimal} derives the spread from a risk-averse dealer maximizing expected utility over a finite horizon $T$.
Their optimal spread is proportional to $\alpha\sigma^2/(T-t)$, where $t$ is  the current  time and $\alpha$ is the risk-aversion coefficient (renamed from the original notation to avoid collision with the variance-scaling exponent of (A3s)), with skew $-\alpha\sigma^2 q$.
The model does not lie strictly  within our axiomatic family.
The reason is that the spread coefficient varies with time-to-horizon, and identifying $T - t$ with $1/\lambda$~requires the inventory cost coefficient $\kappa_{\mathrm{inv}}$ to depend on $\lambda$ through the product $\lambda(T - t)$.
This violates C4, which requires the cost coefficient to not depend on the trading environment.
In the stationary limit, where $T - t$ is  large  relative to the trade-arrival timescale, the Ho-Stoll formula converges (heuristically, under the timescale identification $T - t \sim 1/\lambda$ and a specific constant of proportionality that we do not derive here) to a point in our family with $(\gamma_0, \kappa_{\mathrm{inv}}, \kappa_{\mathrm{adv}}) = (\alpha, \alpha/2, 0)$.
Outside the stationary limit, the Ho-Stoll quoting rule does not satisfy our axioms.

\paragraph{Avellaneda-Stoikov.}
The Avellaneda-Stoikov model~\citep{avellaneda2008high} extends Ho-Stoll to high-frequency settings with exponential utility and Poisson trade arrivals.
The optimal spread has the form $\delta_{AS} = \gamma\sigma^2(T - t) + (2/\gamma)\ln(1 + \gamma/k)$, with $\gamma$ the risk-aversion coefficient and $k$ the trade-intensity parameter.
 The model does not lie within our axiomatic family.
The reason is twofold.
First,  the second term $(2/\gamma)\ln(1 + \gamma/k)$ does not scale with $\sigma^2$, violating the variance-homogeneity axiom (A3).
Second, like Ho-Stoll, the framework is finite-horizon nonstationary, with the spread coefficient varying through $T - t$.
 The Avellaneda-Stoikov spread can be approximated by our family in the large-$k$ limit (equivalently, large market depth, where the logarithmic term $(2/\gamma)\ln(1 + \gamma/k) \approx 2/k$ becomes negligible relative to the variance-proportional term), but the exact formula does not satisfy our axioms.

 \paragraph{Cartea-Jaimungal.}
The  Cartea-Jaimungal framework~\citep{cartea2015high} extends Avellaneda-Stoikov by adding a running inventory penalty $\phi q^2 \, dt$ and a terminal penalty $\alpha q^2$.
Their optimal spread is again time-dependent, obtained from an HJB-Riccati system.
This model lies outside our axiomatic family.
First, the inventory cost is quadratic in $q$, $C(q, \sigma^2) = \phi q^2$, which violates C5, our linearity-in-$|q|$ assumption.
Second, the spread coefficient depends on time-to-horizon and on the structure of the terminal penalty, both of which violate C4 in the same way as the Ho-Stoll dependence.
A companion paper~\citep{feys2026inventory} provides the axiomatic foundation for the Avellaneda-Stoikov and Cartea-Jaimungal objectives.
It derives these objectives from five axioms on the maker's preferences over inventory trajectories.
The connection to the present paper is the Legendre-Fenchel duality of Section~\ref{sec:duality}, which translates between  cost-side and price-side representations.
The Cartea-Jaimungal framework inherits the obstructions of Avellaneda-Stoikov in its relation to our axiomatic family and adds the quadratic-cost violation.

\paragraph{Glosten-Milgrom.}
The Glosten-Milgrom model~\citep{glosten1985bid} derives the spread as a Bayesian equilibrium under asymmetric information, 
with an optimal spread proportional to $\pi\sigma$ (with $\sigma$, not $\sigma^2$) and no inventory dependence.
As such, it does not lie within our axiomatic family: its spread scales with $\sigma$ rather than $\sigma^2$, violating (A3).
Naturally, however, one can interpret Glosten-Milgrom as lying within a neighboring axiomatic family in which (A3) is replaced by the mixed-scaling alternative (A3s) with exponent  $\beta = 1/2$.
Under this reading, that neighboring family sits within the broader axiomatic universe identified by Theorem~\ref{thm:universal} as a different structural-primitive choice, and it is precisely the choice $\beta = 1/2$ that distinguishes the Glosten-Milgrom interpretation from ours.

\paragraph{Kyle.}
  The Kyle model~\citep{kyle1985continuous} derives a continuous-time  equilibrium between a single informed trader, the insider, and a competitive risk-neutral market maker.
The  insider observes the true  asset  value $v$ and chooses an order schedule to maximize expected trading profits over a finite  horizon, while the market maker observes only the total  order flow (insider plus noise traders, modeled as a Brownian motion with variance $\sigma_u^2 \, dt$) and sets the price as the conditional expectation of $v$ given the order-flow history.
The signature result is that the price impact is linear in cumulative order flow with coefficient $\lambda_K = \sqrt{\Sigma_0/(\sigma_u^2 T)}$, where $\Sigma_0$ is the prior variance of $v$ and $T$ is the trading horizon.

The Kyle model does  not lie within our axiomatic family.
Three structural disanalogies separate the two frameworks.
First, Kyle's lambda scales with $\sigma$ rather than $\sigma^2$, violating (A3), in the same way as Glosten-Milgrom.
Second, Kyle treats the informed-flow intensity endogenously, so the effective $\pi$ at any given time is determined by the insider's optimization rather than being an exogenous primitive of the trading environment.
Third, Kyle's setting is continuous in time and order size, whereas our framework is  discrete at the trade level.

At the qualitative level, the two theories agree on  the shape of the adverse-selection contribution  to the spread; both predict a component that scales linearly with  informed intensity and with the volatility of the underlying.
 Making the correspondence between our discrete-static axiomatic framework and Kyle's continuous-time strategic equilibrium formal would require  substantive theoretical work.
In particular, the axiomatic theory would need an extension to continuous time with endogenous informed flow.
That extension lies outside the scope of the present paper.
We record only that Kyle  can be interpreted as lying in a distinct axiomatic neighborhood from ours within the broader universe identified by  Theorem~\ref{thm:universal}, characterized by $\sigma$-scaling and continuous-time strategic dynamics; we present this as a structural interpretation rather than a formal correspondence.
The bridging program would also need to confront the role of the informed trader's optimization, which is endogenous in Kyle but enters our framework only through the reduced-form parameter $\kappa_{\mathrm{adv}}$ and the informed-fraction $\pi$; a faithful axiomatic counterpart of Kyle would have to absorb this strategic content into a richer set of structural primitives.

\paragraph{Summary of relationships.}
The five reference models sit in the broader axiomatic universe as follows.
The Ho-Stoll model converges to a point in our family in the stationary limit and otherwise violates C4.
The Avellaneda-Stoikov and Cartea-Jaimungal models lie outside our family on (A3) and C4/C5 grounds, respectively, and are heuristically second-order Taylor expansions of one another in  inventory magnitude (the detailed derivation is outside the scope of this paper).
The Glosten-Milgrom model can be  read as living in the $\beta = 1/2$ neighbor of our family, characterized by mixed-scaling (A3s).
The  Kyle model can be similarly interpreted as living in a continuous-time strategic neighbor, sharing the $\sigma$-scaling of Glosten-Milgrom but with additional dynamic structure.

These models are unified not by membership in a common axiomatic family, but by their  place within the broader axiomatic universe identified by Theorem~\ref{thm:universal}.
Four features are shared across all such families: skew determined by $\kappa'$, additive decomposition of the spread, skew decoupled from $\pi$, and per-component homogeneity in $\sigma^2$.
These are the substantive invariants of the axiomatic methodology in market making, holding across our   family and its neighbors.
That  the existing models satisfy these invariants is not a coincidence of their specific functional forms; rather,  each is itself the unique closed-form solution of some axiom system  within the universe, with the four invariants forced at the meta-level rather than imposed model by model.
Variation across models, in scaling exponents,  cost structures, and equivariance generators, corresponds to variation across distinct axiomatic neighborhoods, while the four shared features mark the boundary of the universe itself.

%-------------------------------------------------------------------
\section{Methodological Remarks: The Axiomatic Tradition}
\label{sec:methodology}
%-------------------------------------------------------------------

Deriving structural conclusions from a small list of natural axioms has a long  lineage in economic theory.
We outline the lineage briefly  to situate the present work and to clarify the methodological commitments it inherits.
The  lineage runs through social choice, bargaining, mechanism design, risk measurement, and decision theory, with each tradition adopting the same  basic move: a list of natural axioms forces a specific functional form, and the form's structural content is then read off the axioms themselves.
The present paper extends the methodology to market making, where it has not previously been deployed in this form.

\subsection{The Layered Reading}
\label{sec:layered-reading}

The eight-axiom   system forces the closed-form characterization of Theorem~\ref{thm:full} as a single joint conclusion.
Methodological discipline, however, calls for a finer analysis: for each substantive conclusion of the theory, we ask which minimal subset of   axioms suffices to yield it.
This is the Arrovian discipline of identifying the core, applied to our system.
Recall the three-layer partition from Section~\ref{sec:axioms}, summarized in Table~\ref{tab:axioms}: an indispensable core consisting of (A1), (A4), (A7), and (A8); a structural-choice axiom (A3);  and three modularity extensions (A2), (A5), and (A6).
Empirically, the strongest content of the theory, namely the quantitative spread formula, the phase transition, and the Bertrand equivalence for the spread, rests on the smallest of the three layers.
By contrast, the closed-form skew, the canonical bijection with $\Rpp^3$, and the full Structural Separability Theorem require the additional layers and are consequently stronger structural commitments.
This layered  reading carries the methodological force of the paper.
Figure~\ref{fig:layered} summarizes the partition.

\begin{figure}[!htbp]
\centering
\begin{tikzpicture}[font=\small, >=Stealth]
  \def\W{14}
  \def\H{3.2}
  \def\GAP{0.25}
  \def\AX{0.3}
  \def\AWIDTH{6.0}
  \def\ARROWA{6.4}
  \def\ARROWB{6.85}
  \def\CX{6.95}
  \def\CWIDTH{6.65}

  % ===== ROW 1: Layer I =====
  \fill[black!22, rounded corners=4pt] (0, 0) rectangle (\W, \H);
  \draw[black!45, thick, rounded corners=4pt] (0, 0) rectangle (\W, \H);
  \node[anchor=north west, font=\bfseries] at (\AX, \H - 0.20) {Layer~I};
  \node[anchor=north west, font=\itshape\footnotesize] at (\AX, \H - 0.55) {indispensable core};
  \node[anchor=north west, text width=\AWIDTH cm, align=left] at (\AX, \H - 1.05) {%
    (A1)~Inventory aversion\\[2pt]
    (A4)~Inventory indifference\\[2pt]
    (A7)~Spread monotonicity in $\pi$\\[2pt]
    (A8)~Adverse-selection break-even};
  \node[anchor=north west, font=\bfseries] at (\CX, \H - 0.20) {Conclusions forced};
  \node[anchor=north west, font=\itshape\footnotesize] at (\CX, \H - 0.55) {with environmental conditions C1--C6};
  \node[anchor=north west, text width=\CWIDTH cm, align=left] at (\CX, \H - 1.05) {%
    $\bullet$\ Additive spread decomposition\\[2pt]
    $\bullet$\ Bertrand equivalence (on the spread)\\[2pt]
    $\bullet$\ Legendre-Fenchel duality\\[2pt]
    $\bullet$\ Qualitative phase transition};

  %  ===== ROW 2: Layer II =====
  \pgfmathsetmacro{\YIITOP}{-\GAP}
  \pgfmathsetmacro{\YIIBOT}{-\H - \GAP}
  \pgfmathsetmacro{\YIIMID}{(\YIITOP + \YIIBOT)/2}
  \fill[black!16, rounded corners=4pt] (0, \YIIBOT) rectangle (\W, \YIITOP);
  \draw[black!40, thick, rounded corners=4pt] (0, \YIIBOT) rectangle (\W, \YIITOP);
  \node[anchor=north west, font=\bfseries] at (\AX, \YIITOP - 0.20) {Layer~II};
  \node[anchor=north west, font=\itshape\footnotesize] at (\AX, \YIITOP - 0.55) {structural choice};
  \node[anchor=north west, text width=\AWIDTH cm, align=left] at (\AX, \YIITOP - 1.05) {%
    (A3)~Variance homogeneity\\[5pt]
    {\footnotesize Alternative: (A3s) with $\beta \in (0, 1]$}};
  \node[anchor=north west, font=\bfseries] at (\CX, \YIITOP - 0.20) {Conclusions added};
  \node[anchor=north west, font=\itshape\footnotesize] at (\CX, \YIITOP - 0.55) {at this layer};
  \node[anchor=north west, text width=\CWIDTH cm, align=left] at (\CX, \YIITOP - 1.05) {%
    $\bullet$\ Full quantitative spread formula\\[2pt]
    $\bullet$\ Sharp phase boundary\\[2pt]
    $\bullet$\ Three-regime decomposition};

  % ===== ROW 3: Layer III =====
  \pgfmathsetmacro{\YIIITOP}{-\H - 2*\GAP}
  \pgfmathsetmacro{\YIIIBOT}{-2*\H - 2*\GAP}
  \pgfmathsetmacro{\YIIIMID}{(\YIIITOP + \YIIIBOT)/2}
  \fill[black!11, rounded corners=4pt] (0, \YIIIBOT) rectangle (\W, \YIIITOP);
  \draw[black!35, thick, rounded corners=4pt] (0, \YIIIBOT) rectangle (\W, \YIIITOP);
  \node[anchor=north west, font=\bfseries] at (\AX, \YIIITOP - 0.20) {Layer~III};
  \node[anchor=north west, font=\itshape\footnotesize] at (\AX, \YIIITOP - 0.55) {modularity extensions};
  \node[anchor=north west, text width=\AWIDTH cm, align=left] at (\AX, \YIIITOP - 1.05) {%
    (A2)~Linearity of the skew\\[2pt]
    (A5)~Skew decoupling from $\lambda$\\[2pt]
    (A6)~Skew decoupling from $\pi$};
  \node[anchor=north west, font=\bfseries] at (\CX, \YIIITOP - 0.20) {Conclusions added};
  \node[anchor=north west, font=\itshape\footnotesize] at (\CX, \YIIITOP - 0.55) {at this layer};
  \node[anchor=north west, text width=\CWIDTH cm, align=left] at (\CX, \YIIITOP - 1.05) {%
    $\bullet$\ Closed-form skew $m(s) - \mu = -\gamma_0\sigma^2 q$\\[2pt]
    $\bullet$\ Canonical bijection with $\Rpp^3$\\[2pt]
    $\bullet$\ Full Structural Separability\\[2pt]
    $\bullet$\ Recovery Theorem (linear-in-$q$ skew)};
\end{tikzpicture}
\caption{Layered partition of the axiom system.  
The eight axioms split into three layers, ordered from the indispensable core (Layer~I, darkest) to the most easily deniable modularity extensions (Layer~III, lightest).  
Each layer contributes the structural conclusions listed in its right-hand column, given that the axioms of all inner layers are in force. 
The strongest empirical content of the theory, namely the additive spread decomposition, the Bertrand equivalence for the spread, the Legendre-Fenchel duality, and the qualitative phase transition, rests on the four core axioms together with the environmental cost conditions C1--C6, and survives any weakening of the outer layers.  
 The closed-form skew and the canonical bijection with $\Rpp^3$ are stronger structural commitments that require the modularity extensions in addition to the core and the structural-choice axiom (A3) (or its weaker variant (A3s)).}
\label{fig:layered}
\end{figure}

\paragraph{Conclusions forced by the core.}
Even before any further axiom is imposed, the four core axioms and the environmental cost assumptions \textup{C1}--\textup{C5} together suffice to establish a substantial body of structural results.
 The spread is strictly positive everywhere by the codomain $\Bid$, which restricts the quoting rule to pairs $(b, a)$ with $b < a$.
By (A8), the spread decomposes additively as $\delta(s) = \delta_{\mathrm{inv}}(s) + \delta_{\mathrm{adv}}(s)$, where the adverse-selection component $\delta_{\mathrm{adv}}$ depends only on $(\sigma^2, \pi)$ and is monotone increasing in $\pi$ by (A7).
Under $C(q, \sigma^2) = \kappa(q)\sigma^2$ and linearity (C5), the inventory half-spread is set by (A4) at $\delta_{\mathrm{inv}}/2 = \kappa_{\mathrm{inv}}\sigma^2/\lambda$, with the $\sigma^2$-scaling coming from the cost structure rather than from the axioms.
By the complementarity clause of (A8) together with (A7) and measurability,  the   adverse-selection half-spread is linear  in $\pi$ with a coefficient $\chi(\sigma^2)$ depending only on $\sigma^2$, so that $\delta_{\mathrm{adv}} = \chi(\sigma^2)\,\pi$ for some $\chi > 0$.
At the core level, the skew $m(s) - \mu$ is strictly decreasing in $q$ by (A1); its vanishing at zero inventory and its oddness in $q$ are not forced at the core and emerge only once (A2) is added (Corollary~\ref{cor:skew-symmetry}).
A qualitative phase transition follows  from spread-positivity and $\delta_{\max}$, the Bertrand equivalence holds for the spread (Theorem~\ref{thm:bertrand}), and the Legendre-Fenchel duality between $\kappa$ and the half-spread comes from (A4) together with smooth convexity of $\kappa$ (Theorem~\ref{thm:duality}) and therefore lies entirely within the core.
The core does not, however, determine the $\sigma^2$-scaling of $\chi$, the closed-form skew, or the canonical bijection with  $\Rpp^3$; these are conclusions of the outer layers.

\paragraph{Conclusions added by the structural choice.}
Adding (A3) to the core forces the full quantitative spread formula.
The coefficient $\chi(\sigma^2)$ is required by (A3) to be linear in $\sigma^2$, so $\chi(\sigma^2) = 2\kappa_{\mathrm{adv}}\sigma^2$ for some $\kappa_{\mathrm{adv}} > 0$, yielding $\delta_{\mathrm{adv}} = 2\kappa_{\mathrm{adv}}\pi\sigma^2$ and the full spread formula $\delta = 2\sigma^2(\kappa_{\mathrm{inv}}/\lambda + \kappa_{\mathrm{adv}}\pi)$.
The phase transition  acquires a fully explicit quantitative form, with the boundary $\kappa_{\mathrm{inv}}/\lambda + \kappa_{\mathrm{adv}}\pi < \delta_{\max}/(2\sigma^2)$ separating functioning from frozen, and the  three-regime structure of Section~\ref{sec:phase} follows.
Empirically, $\kappa_{\mathrm{inv}}$ is identified from $\lambda\,\delta|_{\pi=0}/(2\sigma^2)$ and $\kappa_{\mathrm{adv}}$ from $(2\sigma^2)^{-1}\,\partial\delta/\partial\pi$, 
 with the two parameters recoverable from observable quoting behavior.
The skew remains qualitative here: (A3) makes it homogeneous of degree one in $\sigma^2$, but it could still be nonlinear in $q$, fail to vanish at zero inventory, or depend on $\lambda$ or $\pi$.

\paragraph{Conclusions added by the modularity extensions.}
Once (A2), (A5), and (A6) are imposed on  top of the previous layers, the skew reduces to the closed form $m(s) - \mu = -\gamma_0\sigma^2 q$ with $\gamma_0 > 0$ a single   scalar that does not depend on the trading environment.
The parameter $\gamma_0$ is then identified from a single partial derivative, namely $\gamma_0 = -\sigma^{-2}\,\partial m/\partial q$; the  canonical bijection between the axiom system and $\Rpp^3$ holds (Theorem~\ref{thm:bijection}); and the full Structural Separability Theorem (Theorem~\ref{thm:separability}) follows, with each of the three parameters $(\gamma_0, \kappa_{\mathrm{inv}}, \kappa_{\mathrm{adv}})$ appearing in exactly one structural component of the quoting rule.
Two further results require Layer III as well: the Recovery Theorem (Theorem~\ref{thm:recovery}) for general convex cost exploits the linear-in-$q$ skew to cleanly subtract the inventory-skew contribution from the order book depth profile, isolating the cost gradient; and the level-by-level closed form of Theorem~\ref{thm:order-book} uses (A2) to give  the skew at level $n$ as $-\gamma_0\sigma^2(q + n - 1)$.

\paragraph{Methodological consequences: robustness.}
The strongest empirical  content of the theory, namely the spread formula, the phase transition with explicit thresholds, the Bertrand equivalence for the spread, and the structural account of market freezes, rests on the four core axioms together with the one structural-choice axiom.
None of these conclusions requires the modularity extensions of Layer III.
A market in which the skew is mildly nonlinear in $q$ would violate (A2) but would still produce the predicted spread structure; a market in which the skew exhibits mild liquidity dependence would violate (A5) but would leave the spread structure intact.
The empirical force of the theory is therefore robust to the structural commitments that are themselves the most easily deniable.

\paragraph{Methodological consequences: falsifiability and diagnosis.}
Empirical violations  of the theory's predictions localize  to specific axiom layers.
A market in which the closed-form skew fails but the spread formula holds is a Layer III deviation, leaving  Layers I and II intact.
A market in which the quantitative spread formula fails has a Layer I or Layer II deviation, and the form of the deviation indicates which axiom is violated: 
  a market with  non-$\sigma^2$ spread scaling is a Layer II deviation on (A3), and a  market with $q$-dependent adverse selection is  a Layer I deviation  on (A8).
The axiom hierarchy thus serves as a diagnostic framework for empirical anomalies, mapping each observed deviation to the responsible axiom.

\paragraph{Methodological consequences: the Arrovian discipline.}
Arrow's impossibility theorem identifies the minimal core of four axioms that yields the impossibility, and  the methodological force of his paper rests on that minimality.
Naming the smallest axiom set per conclusion is the standard discipline of axiomatic work, and the present analysis applies  that discipline to our eight-axiom system.
The conclusion is that the most robust empirical content of the theory is the spread structure, while the closed-form skew and   the canonical bijection require additional structural commitments that are themselves defensible but not indispensable.
 This layered reading aligns with the universal-structure reading of Section~\ref{sec:universal}.
The four features identified there as universal across all admissible structural-primitive choices within the (A1)--(A8) family map cleanly onto the core conclusions of Layer I.
The layer structure provides the within-axiom-system complement of the across-primitive Universal Structure Theorem.
Together, the two readings give the full structural picture, namely what is universal across structural-primitive choices within the (A1)--(A8) family and what follows within each specific family.

\subsection{Forced Uniqueness and Its Predecessors}

The axiomatic methodology proceeds in three stages.
First,   one identifies a class of objects.
Second, one   imposes a list of axioms on this class, with  each axiom motivated as a natural desideratum.
Third, one proves that the conjunction of the axioms yields the object's specific functional form, often a one-parameter or finite-parameter family.
When the axioms are accepted as natural, the conclusion follows as a matter of logical necessity, regardless of how  strong it might initially  appear.

The social-choice tradition originates with \citet{arrow1951social}, whose four axioms (unanimity, independence of irrelevant alternatives, non-dictatorship, universal  domain) yield the famous impossibility theorem.
The subsequent development   is surveyed in \citet{sen1970collective}.
The Arrovian style is the prototype for our derivation: each of our eight axioms looks mild in isolation; together, they force the unique closed-form quoting rule of Theorem~\ref{thm:full}.

The bargaining tradition originates with \citet{nash1950bargaining}, whose four axioms (Pareto efficiency, symmetry, independence of irrelevant alternatives, scale invariance) yield the geometric-mean solution.
\citet{kalai1975other} show that altering one axiom yields the Kalai-Smorodinsky solution, illustrating that the choice of axioms  determines the family of admissible solutions; \citet{roth1979axiomatic} gives the canonical treatment of the broader literature.
The methodological structure mirrors ours: identify the natural axioms, prove uniqueness, recover the explicit functional form.

The mechanism-design tradition originates with \citet{vickrey1961counterspeculation} and \citet{hurwicz1973design}, who formalize the design problem in strategic and incentive terms.
\citet{myerson1981optimal} characterizes the revenue-optimal auction via incentive compatibility and individual rationality, yielding a specific form involving virtual valuations; \citet{maskinriley1984optimal} extend the framework to risk-averse buyers, \citet{maskin1999nash} develops the implementation theory, and the methodology is surveyed in \citet{maskin2008mechanism}.
The Myerson-Maskin characterizations differ from ours in the role of the axioms (incentive compatibility is strategic, not structural), but the methodological pattern is the same.

The risk-measurement tradition originates   with \citet{artzner1999coherent}, whose four axioms (translation invariance, subadditivity, positive homogeneity, monotonicity) yield a dual-representation form for coherent risk measures.
The  convex-relaxation extensions of \citet{follmer2002convex}, \citet{follmer2016stochastic}, \citet{frittelli2002putting}, \citet{cheridito2006dynamic}, and \citet{detlefsen2005conditional} drop positive homogeneity and develop the dynamic theory.
This is the closest methodological relative of our work: axioms yield a functional form, the form has a transparent dual representation, and previously-considered  models (Value-at-Risk, expected shortfall) emerge as special cases.
Our Legendre-Fenchel  duality theorem (Theorem~\ref{thm:duality}) plays the analogous role.

The decision-theoretic tradition runs from \citet{vonneumann1944theory} through \citet{savage1954foundations}, \citet{anscombe1963definition}, and \citet{gilboa1989maxmin}, each deriving a functional form on preferences from axioms on the choice structure.
\citet{debreu1959theory} lays the general-equilibrium foundation, and \citet{aumann1985game} states the methodological case for axiomatization.
The belief-aggregation tradition surveyed in \citet{genest1986combining} identifies a small number of admissible aggregation operators within a much larger space of conceivable forms.
In each register, axioms force form.

\subsection{Why Axiomatize Market Making}

The methodological commitment of the present paper is the following.
When each axiom is accepted as natural in isolation, the conjunction of all axioms together forces the conclusion.
Denying the conclusion requires denying one of the axioms.
If each axiom is hard to deny, as we have argued in Section~\ref{sec:axioms}, then the conclusion is itself hard to deny.

This is distinct from the optimization-based methodology dominant in microstructure, which derives the conclusion from primitive specifications: a particular utility function, a distributional assumption, an equilibrium concept.
Optimization-based conclusions are sensitive to those specifications: change the utility, the distribution, or the equilibrium concept, and the conclusion changes.
Axiomatic conclusions are sensitive only to the axioms; change an axiom, change the conclusion.
The distinction matters in two specific ways.
First, axiomatic conclusions are robust to optimization-level modeling choices: Theorem~\ref{thm:full} does not depend on whether utility is exponential or logarithmic, or whether trade arrivals are Poisson or renewal, but only on the eight axioms.
Second, axiomatic conclusions are diagnostic: a market satisfying the axioms quantitatively must produce the closed-form quoting rule, and deviations localize the violated axioms.
 The methodology gives the researcher a falsifiable prediction and an account of what its failure would mean.

\subsection{The Two Faces of the Axiomatic Method}

Axiomatization has two complementary faces.
A prescriptive use starts from a list of normatively motivated axioms, each a desideratum the object is required to satisfy, and identifies the unique structure consistent with the full list.
A descriptive use proceeds in the opposite direction, starting from an observed object (an existing quoting rule, an empirical phenomenon) and characterizing the axioms that it satisfies.
Both faces are deployed here: prescriptively, the methodology yields Theorem~\ref{thm:full}, where the eight axioms force the closed-form rule; descriptively, it  underlies the empirical implementation of Section~\ref{sec:comparative}, where each axiom is independently testable and any observed deviation localizes to the specific axiom violated.
A second descriptive payoff is the placement of the existing market-making models within the axiomatic universe of Theorem~\ref{thm:universal}: each model is characterized by the axiom system it  satisfies, and the comparison across models becomes  a comparison across axiom systems rather than across functional forms.
Together, prescription supplies the closed-form characterization and description the diagnostic and comparative apparatus.

%-------------------------------------------------------------------
\section{Conclusion}
\label{sec:conclusion}
%-------------------------------------------------------------------

We have presented an axiomatic theory of market making.
Eight natural axioms on the quoting rule, together with six environmental assumptions on the inventory cost, force   a unique three-parameter family with explicit closed form~\eqref{eq:characterization}.
The three parameters (inventory risk aversion, marginal carrying cost, information advantage   of informed traders) are decoupled and empirically identifiable from distinct moments of the observable quoting  rule.

Four structural theorems carry the deeper content.
A canonical bijection between the axiom system and the parameter space  $\Rpp^3$ is established by Theorem~\ref{thm:bijection}, with no axiom redundant (Proposition~\ref{prop:independence}).
Structural Separability (Theorem~\ref{thm:separability}) shows that the economic primitives play isolated roles in the quoting rule.
The Recovery Theorem (Theorem~\ref{thm:recovery}) shows that the latent inventory cost function is observable in the limit order book depth profile.
Legendre-Fenchel Duality (Theorem~\ref{thm:duality}) places the theory within convex conjugate duality, in parallel with thermodynamics and   consumer   theory.

Three further results give the theory direct empirical content.
A sharp condition for market viability is provided by the Phase Transition Theorem (Theorem~\ref{thm:phase}), with three structural drivers, volatility, informed flow, and liquidity; this yields a structural account of market freezes such as those observed during the 2008 financial crisis.
Functioning markets are further classified by the Three Liquidity Regimes Theorem (Theorem~\ref{thm:three-regimes}) according to which of the inventory and adverse-selection components dominates the equilibrium spread.
Bertrand Equivalence (Theorem~\ref{thm:bertrand}) shows the quoting rule coincides with the unique symmetric Nash equilibrium of a competitive market with free entry, an agreement best read as a re-encoding of the per-trade zero-margin content of (A4) and (A8).

A meta-level result closes the structural development.
The Universal Structure Theorem (Theorem~\ref{thm:universal}) identifies four features invariant across all admissible choices of structural primitives within the (A1)--(A8) axiom system, with the (A3s) variant admitted in place of (A3).
The four invariants are the determination of the inventory half-spread by the cost derivative, the additive spread decomposition, the decoupling of skew from adverse selection, and variance homogeneity of each component.

Several directions are left for future work.
A dynamic extension would track the evolution of the maker's belief and inventory through the trading process, replacing the static state $(q, \mu, \sigma^2, \lambda, \pi)$ with its stochastic counterpart.
A multi-asset axiomatic theory would extend the framework to portfolios  with cross-asset correlation, deriving cross-asset coupling in skew and spread from portfolio-level analogs of the single-asset axioms.
A strategic extension would replace the adverse-selection parameter  $\kappa_{\mathrm{adv}}$ with an equilibrium object derived from optimizing informed traders, transforming the static break-even condition into a fixed-point characterization.

The axiomatic methodology delivers, in place of optimization-conditional conclusions, structural predictions sensitive only to the axioms themselves.
Several deliverables apply directly to empirical microstructure, risk management, and regulatory policy: the closed-form characterization, the empirical identification strategy, and the structural account of market freezes.
At the mathematical level, the theory combines forced uniqueness, convex conjugate duality, phase transitions, and universal invariants, placing market making within established traditions while delivering predictions specific to market behavior.

%-------------------------------------------------------------------

\end{document}